\begin{document}
\newcommand{\ri}{{\rm i}}
\newcommand{\re}{{\rm e}}
\newcommand{\bx}{{\bf x}}
\newcommand{\bd}{{\bf d}}
\newcommand{\br}{{\bf r}}
\newcommand{\bt}{{\mathbf{t}}}
\newcommand{\bk}{{\bf k}}
\newcommand{\bE}{{\bf E}}
\newcommand{\bR}{{\bf R}}
\newcommand{\bM}{{\bf M}}
\newcommand{\bn}{{\bf n}}
\newcommand{\bs}{{\bf s}}
\newcommand{\bu}{{\mathbf{u}}}
\newcommand{\tr}{{\rm tr}}
\newcommand{\tbs}{\tilde{\bf s}}
\newcommand{\rSi}{{\rm Si}}
\newcommand{\beps}{\mbox{\boldmath{$\epsilon$}}}
\newcommand{\bsigma}{\mbox{\boldmath{$\sigma$}}}
\newcommand{\bthe}{\mbox{\boldmath{$\theta$}}}
\newcommand{\brho}{\mbox{\boldmath{$\rho$}}}
\newcommand{\rg}{{\rm g}}
\newcommand{\xmax}{x_{\rm max}}
\newcommand{\ra}{{\rm a}}
\newcommand{\rx}{{\rm x}}
\newcommand{\rs}{{\rm s}}
\newcommand{\rP}{{\rm P}}
\newcommand{\up}{\uparrow}
\newcommand{\down}{\downarrow}
\newcommand{\hc}{H_{\rm cond}}
\newcommand{\kb}{k_{\rm B}}
\newcommand{\cI}{{\cal I}}
\newcommand{\tit}{\tilde{t}}
\newcommand{\cE}{{\cal E}}
\newcommand{\cC}{{\cal C}}
\newcommand{\Ubs}{U_{\rm BS}}
\newcommand{\qq}{{\bf ???}}
\newcommand*{\etal}{\textit{et al.}}
\newcommand{\pf}{{\rm PF}}

\newcommand{\be}{\begin{equation}}
\newcommand{\ee}{\end{equation}}
\newcommand{\bfg}{\begin{figure}}
\newcommand{\efg}{\end{figure}}
\newcommand{\bra}{\langle}
\newcommand{\ket}{\rangle}
\newcommand{\Itwo}{\mathbb{1}_2}
\newcommand{\I}{\mathcal{I}}
\newcommand{\al}{\alpha}

\newcommand{\C}{\mathbb{C}}
\newcommand{\M}{\mathcal{M}}
\newcommand{\e}{\varepsilon}

\newtheorem{theorem}{Theorem}[section]
\newtheorem{definition}[theorem]{Definition}
\newtheorem{proposition}[theorem]{Proposition}
\newtheorem{corollary}[theorem]{Corollary}
\newtheorem{lemma}[theorem]{Lemma}
\newtheorem{conjecture}[theorem]{Conjecture}

\sloppy

\title{A universal set of qubit quantum channels} 
\author{Daniel Braun$^{1}$, Olivier Giraud$^{2}$, Ion Nechita$^{1}$, Cl\'ement Pellegrini$^{3}$, Marko \v Znidari\v c$^{4}$}
\affiliation{$^1$ Laboratoire de Physique Th\'{e}orique, Universit\'{e} Paul
  Sabatier and CNRS, 118, route de Narbonne, 31062 Toulouse, France} 
\affiliation{$^2$ Universit\'e Paris-Sud, CNRS, LPTMS, UMR8626, 91405 Orsay,
France}
\affiliation{$^3$ Institut de Math\'ematiques de Toulouse, Universit\'e Paul Sabatier,
118, route de Narbonne, 31062 Toulouse, France}
\affiliation{$^4$ Department of Physics, Faculty of Mathematics and Physics, University of Ljubljana, Ljubljana, Slovenia}

\begin{abstract}
We investigate the set of quantum channels  acting on a
single qubit.  We provide an alternative, compact generalization of the
Fujiwara-Algoet conditions for complete positivity to non-unital qubit channels, which we then use
to characterize the possible geometric forms of the pure output of the channel. 
We provide 
universal sets of quantum channels for all unital qubit channels as well as
for all extremal (not necessarily unital) qubit channels, in the sense
that all qubit channels in these sets can be obtained by concatenation
of channels in the corresponding universal set. We also show
that our universal sets are essentially minimal.
\end{abstract}

\maketitle
\tableofcontents

\section{Introduction}
The science of quantum information has attracted tremendous interest over
the last twenty years after it was realized that certain computational tasks
might be solvable more efficiently on a quantum computer than on a classical
computer  \cite{Nielsen00}. For
example, a quantum algorithm exists that can factorize a large integer in
a time polynomial in the number of digits, whereas no such classical algorithm
is known \cite{Shor94}. Other examples of quantum speed-up
include algorithms for searching an unsorted database \cite{Grover97}, the
hidden subgroup problem \cite{Simon94}, the approximation of Jones polynomials 
\cite{Aharonov05}, or 
the solution of linear systems of equations \cite{Harrow09.2} with a
recent application to data fitting \cite{Wiebe12}. 

Traditionally, quantum information science was mostly concerned with unitary
time evolution, which is appropriate for well isolated quantum systems.  An
important step in the development of quantum information theory was the
realization that all unitary operations in the exponentially large Hilbert
space of many qubits can be broken down into elementary unitary operations
that act on only one or two qubits at the same time \cite{Deutsch85}.  In
fact, the 
combination of a single fixed entangling unitary gate that can be applied on
any two qubits and the continuous set of all unitary operations on all
single qubits form a ``universal gate set'', which is at the heart of the
circuit 
paradigm of quantum
computing
\cite{Sleator95,Deutsch95,Barenco95b,Barenco95a,DiVincenzo95}.   

However, in reality no quantum system is perfectly isolated from its
environment.  At the very least, the need of state-preparation and the
read-out of the results require interaction with the external world.
Interactions with the environment lead typically to decoherence and destroy
the quantum effects such as interference \cite{Braun06,Arnaud07,Roubert08}
and entanglement which are vital 
for the quantum computational speed-up \cite{Jozsa03}.  But couplings to the
environment can also have beneficial effects.  In particular, it is possible
to create entanglement through purely dissipative processes
\cite{Plenio99,Braun02,Benatti05}.  Dissipative processes may be used
to effectively confine the dynamics to a part of Hilbert space where
decoherence is strongly reduced (``decoherence-free subspaces'', see
\cite{Duan98,Zanardi97,Braun98,Lidar98,PhysRevA.76.012331}) and at the
same time enable entangling 
quantum gates in a simple fashion \cite{Beige00,Beige00b}. 
Indeed, 
non-unitary propagation of quantum states opens up a much larger field of
operations, and it is desirable to achieve a thorough understanding of the
set of these ``quantum channels'', defined quite generally as linear,
completely positive trace preserving maps of a density matrix
\cite{Nielsen00,Zyczkowski04}. Quantum channels have also played an
important 
role  in terms of error models and for the development of quantum
error correction \cite{Shor95,Steane96,Gottesman96}.

Given the importance of universal unitary gate sets for unitary quantum
computation, one might expect that 
simple universal sets of quantum channels might become equally important for
exploring the full power of the most general quantum operations allowed by
nature.  However, the set of quantum channels is much larger and has
more complicated geometry than the set of
unitary operations \cite{Bengtsson06}.  Important early work, long before
the rise of quantum information 
theory,  has provided us with powerful tools that 
allow us to assess the crucial complete positivity of a channel, such as the
Kraus 
decomposition \cite{Kraus83}, the Choi matrix \cite{Choi75}, or the
Lindblad form of Markovian equations of motion
\cite{Kossakowski72,Gorini76,Lindblad76}.  
But the full understanding of completely positive maps, especially with
respect to composition, remains a formidable mathematical problem.
  
Here we will restrict ourselves largely to single qubit channels. 
Ruskai and co-workers \cite{King01,Ruskai02}, and Wolf and
Cirac \cite{Wolf08} have made important contributions which we will heavily
use. Fujiwara and Algoet provided simple inequalities that
characterize the set of all unital qubit channels, i.e.~qubit channels
that map the identity matrix on itself \cite{Fujiwara99}.  We provide
a compact 
generalization of these conditions to the general non-unital case where the
Bloch sphere is mapped to an ellipsoid contained in the Bloch sphere. These new conditions have more natural geometric interpretation than (equivalent) conditions presented in \cite{Ruskai02} and allow us to classify qubit channels in terms of their pure output (PO).  We show in particular that a PO in the form of
a circle of non-zero radius on the Bloch sphere is forbidden by the
requirement of complete positivity.  For unital qubit channels we derive
a universal set of qubit channels in the sense that all unital qubit
channels can be obtained by concatenation of channels from the
universal set.  Furthermore, we provide a set of universal
channels for extremal (but not necessarily unital) qubit
channels. Since all qubit channels can be obtained by simple
convex combination (i.e.~random classical sampling) of extremal ones,
this essentially solves the problem of a universal channel set for a
single qubit. 

Very recently, a different approach to universal families of
qubit quantum channels has been pursued in \cite{wang_solovay-kitaev_2013}. The
authors use universal unitary gates to approximate the unitary appearing in
the Stinespring dilation of a qubit channel (realizing the channel as a unitary evolution on a larger space). Our approach is more intrinsic,
since we do not refer to any dilations of channels. Our universal set contains only single qubit operations, whereas the $\mathrm{CNOT}$ gate, acting on 2 qubits, is used in \cite{wang_solovay-kitaev_2013}. However, the mathematical objects used in \cite{wang_solovay-kitaev_2013} are similar to ours: the superoperator matrix $T$ \eqref{T} and the description of extreme channels, see Proposition \ref{prop:extremal-channels}. 

We begin with the introductory Section II. and III.A, where we recall well-known facts about qubit channels and their geometry. In Sections III.B and III.C we present new results about the minimal set of universal channels for all unital qubit channels. In Section IV.A we derive an alternative necessary and sufficient inequality that characterizes all non-unital qubit channels. In Section IV.B we classify all qubit channels by the number of PO that they have~\footnote{After finishing this work we have learned that the main result in Section IV.B essentially follows from the result on the number of POs of quasi-extreme channels presented in \cite{Ruskai02} (we would like to thank Mary-Beth Ruskai for pointing that out). Our presentation of this result only involves elementary geometric considerations.}. In Section V. we present new results on universal sets of channels for all extremal qubit channels.

\section{Parametrization of qubit channels}

Let $\mathcal M_d(\mathbb C)$ be the set of complex $d\times d$ matrices,
and $\mathcal D_d\subset\mathcal M_d(\mathbb C)$ the set of density
matrices, that is, positive Hermitian $d\times d$ matrices with trace
one. We denote by $\mathcal P_d\subset \mathcal D_d$ the set of pure states
(density matrices of rank one), and by $\mathcal U_d$ the set of all
$d\times d$ unitary matrices. A channel $\Phi : \mathcal M_d(\mathbb C) \to
\mathcal M_d(\mathbb C)$ is a \emph{completely positive} and \emph{trace
  preserving} linear map; %
in particular, it maps density matrices to density matrices. We denote by
$\mathcal C_d$ the set of channels acting on a $d$-dimensional quantum
system.  

An important class of channels is the class of unitary channels.  These are
linear mappings $\Phi_U: \mathcal M_d(\mathbb C) \to \mathcal M_d(\mathbb
C)$ which map $\rho\in\mathcal M_d(\mathbb C)$ by a unitary conjugation to
$\Phi_U(\rho)=U\rho 
U^\dagger$, where $U\in\mathcal U_d$. 
Slightly abusing notation, we shall write $\mathcal U_d$ for the set of unitary channels. 

It is known that for a $d$-dimensional Hilbert space 
(of, say, $k$ qubits, $d=2^k$), an arbitrary unitary operation can be
obtained by concatenation of the controlled-NOT (CNOT)-gate and the continuous set of
all single qubit unitaries
\cite{Deutsch85,Deutsch95,DiVincenzo95,Barenco95a,Barenco95b}. Our aim is to
find a minimal set such that any channel can be realized by iterated
application of channels from that minimal set. More specifically, we have
the following definition. 

\begin{definition} A set of quantum channels $\mathcal F_d \subset \mathcal
  C_d$ is said to be \emph{universal} 
for a set of channels $\tilde{\mathcal C}_d\subset{\mathcal C}_d$ if for all
channels $\Phi \in \tilde{\mathcal C}_d$  there exist 
channels $\Phi_1, \ldots, 
  \Phi_n \in \mathcal F_d$ such that  
\begin{equation}\label{eq:concatenation}
	\Phi = \Phi_n \circ \Phi_{n-1} \circ \cdots \circ \Phi_2 \circ \Phi_1.
\end{equation}
\end{definition}
The two subsets  $\tilde{\mathcal C}_d\subset {\mathcal C}_d$ that we will
investigate are unital qubit channels and extremal qubit channels, to be
defined below. 
The CNOT gate together with single qubit unitaries form a universal set of
\emph{unitary} quantum channels in the case $\tilde{\mathcal C}=\mathcal
U_d$.

For any set of channels $\mathcal F$, we denote by $\langle \mathcal
F\rangle_n $ the set of channels generated by $n$
concatenations of elements from $\mathcal F$ (as in \eqref{eq:concatenation}), and by $\langle \mathcal
F\rangle$ the set of all channels generated by $\mathcal F$, with no
restriction on the number of concatenations. Our aim is to find a set
$\mathcal F$, as small as possible, such that $\langle \mathcal
F\rangle=\tilde{\mathcal C_d}$.

\subsection{Qubit channels}
\label{sec.quch} From now on we fix $d=2$ 
and we consider qubit channels $\Phi\in\mathcal C_2$. Any state
$\rho\in\mathcal M_2(\mathbb C)$ can be expanded in the basis of Pauli
matrices $\sigma_i$ as $\rho=\frac12\sum_{i=0}^3r_i\sigma_i$, where $r_i\in
\mathbb{R}$ and  
$\sigma_0=I_2$ is the identity matrix in $\mathcal M_2(\mathbb
C)$. Normalization of the trace $\tr\rho=1$ implies $r_0=1$. The components
$r_i$, $i=1,2,3$, form the Bloch vector $\br=(r_1,r_2,r_3)$. Pure states
have $||\br||=1$ and form the Bloch sphere, whereas the set of all other
states inside the Bloch sphere correspond to mixed states ($\tr\rho^2<1$).  

Any linear map $\Phi\in\mathcal C_2$ acting on
$\rho=\frac12I_2+\frac12\br.\bsigma$, with
$\bsigma=(\sigma_1,\sigma_2,\sigma_3)\equiv (\sigma_x,\sigma_y,\sigma_z)$,
can be represented by a real $4\times 4$ matrix $T$ that maps the components
of $r_i$ to new ones,  
\begin{equation} \label{r'r}
(1,r_1',r_2',r_3')^t=T_\Phi\,(1,r_1,r_2,r_3)^t\,.
\end{equation}
The most general linear completely positive map of $\rho$ is then given by \cite{King01,Ruskai02,Bengtsson06}  
\begin{equation}\label{T}
	T_\Phi = 
	\begin{pmatrix}
		1 & 0_{1\times 3} \\
		\bt_\Phi & M_\Phi
	\end{pmatrix}
\end{equation} 
with $0_{1\times 3}=(0,0,0)$, $M_\Phi$ is a real $3\times 3$ matrix, and
$\bt_\Phi\in\mathbb{R}$ a vector. It
induces an affine map $\br'=M_\Phi\br+\bt_\Phi$ on the Bloch
vector. Composition of two channels, $\Phi=\Phi_2\circ\Phi_1$, implies 
\begin{align}
\label{eq:compose-M} M_\Phi&=M_{\Phi_2}M_{\Phi_1}\,, \\
\label{eq:compose-t} \bt_\Phi&=M_{\Phi_2} \bt_{\Phi_1}+\bt_{\Phi_2}. 
\end{align}
In the following we  drop the index $\Phi$ when it is clear what
channel $\bt_\Phi$ and $M_\Phi$ refer to.
Qubit channels can thus be seen as maps acting on Bloch vectors thanks to
the isomorphism  
\begin{equation}
\label{isomorphism}
\mathrm{SU}(2) / \mathbb Z_2 \cong \mathrm{SO}(3).
\end{equation}
In particular, the channel $\Phi_U$ corresponding to unitary conjugation
with
$U=\exp(i\phi\,\bn.\bsigma)=\cos(\phi)I_2+i\sin(\phi)\bn.\bsigma$ is
equivalent to a rotation $R_U\in SO(3)$ of the Bloch vector about axis $\bn$
by an angle $2\phi$. 

Complete positivity of a qubit channel can be characterized by the
positivity of its Choi matrix $C_\Phi$. The \emph{Choi matrix} of a channel
$\Phi$ is 
defined by  
\begin{equation}
\label{choi}
C_\Phi = [\Phi \otimes I_2](|\mathrm{Bell} \rangle \langle
\mathrm{Bell} |),
\end{equation}
where $|\mathrm{Bell}\rangle = \frac{1}{\sqrt 2} ( |00\rangle + |11\rangle)$
is one of the Bell states for two qubits. Alternatively, the Choi matrix can 
be defined by a reshuffling of the indices of the propagator in the 
computational basis \cite{Bengtsson06}. A channel is completely positive, if
and only if the Choi matrix is non-negative \cite{Choi75}.  The eigenvectors of
the Choi matrix yield, after reshaping them to a matrix and multiplication
with the square root of the corresponding eigenvalue, the Kraus operators
$A_i$ of the channel, defined through $\Phi:\rho\mapsto\sum_{i=1}^rA_i\rho
A_i^\dagger$. The  minimal number $r$ of Kraus operators is equal to the
number of non-zero eigenvalues of the Choi matrix and is called the Kraus
rank \cite{Bengtsson06}.

\subsection{Signed singular values for qubit channels}
\label{sec.ssv}
For any matrix $M_\Phi$, there exist two orthogonal matrices $M_1,M_2$ such
 that the singular value decomposition (SVD) of $M_\Phi$ reads
 $M_\Phi=M_1DM_2$, with $D$ a diagonal matrix with non-negative entries. Any
 orthogonal matrix $M$ is such that either $M$ or $-M$ is in $SO(3)$ (in the
 latter case $M$ corresponds to an improper rotation, i.e.~a concatenation
 of a proper rotation with a central inversion). Let $U_i\in SU(2)$ be a
 unitary matrix corresponding to $M_i$ via the isomorphism
 \eqref{isomorphism} if $M_i\in SO(3)$, or corresponding to $-M_i$
 otherwise. Then  $M_\Phi=R_{U_1}\Lambda R_{U_2}$ with $\Lambda=D$ if both
 $M_1$ and $M_2$ are in $SO(3)$ or $\Lambda=-D$ if exactly one of the $M_i$
 is in $SO(3)$. The channel $\Phi$ can thus be decomposed into
 $\Phi=\Phi_{U_1}\circ\Phi_\Lambda\circ\Phi_{U_2}$, where $\Phi_{U_i}$ are
 unitary conjugations and $\Phi_\Lambda$ a channel whose matrix $M=\Lambda$
 is diagonal. We call the diagonal values of $\Lambda={\rm
 diag}(\lambda_1,\lambda_2,\lambda_3)$ the ``signed singular values'' of
 $\Phi$ \cite{Bourdon04}. There is of course arbitrariness in the order in
 which the signed singular values are labeled. Changing the order of the
 $\lambda_i$ just amounts to changing the order in which the eigenvectors of
 $M_\Phi$ appear in matrices $R_{U_i}$. More precisely, for a permutation of
 three elements, say $\sigma=(123)$, we consider the permutation channel
 $\Phi_{\sigma}$ defined by  
\begin{equation}
M_\sigma=\left(\begin{array}{ccc}0&0&1\\1&0&0\\0&1&0\end{array}\right)
\end{equation}
and $\bt_{\sigma}={\bf 0}$. This channel allows to permute the eigenvalues of
a matrix $M$. Namely, if $\Lambda={\rm
  diag}(\lambda_1,\lambda_2,\lambda_3)$, we get $M_\sigma\Lambda
M_\sigma^{\dagger}={\rm diag}(\lambda_3,\lambda_1,\lambda_2)$. 
Moreover, one can simultaneously change the signs of (exactly) two singular
values of $D$ by concatenating with a unitary channel. For example,
concatenation with $e^{i\pi\sigma_z/2}={\rm diag}(-1,-1,1)$ changes the
signs of $\lambda_1$ and $\lambda_2$. 

Summarizing, up to unitary rotations, any qubit channel can be written as 
\begin{equation}\label{eq:T-diagonal}
	T_\Phi = 
	\begin{pmatrix}
		1 & 0 & 0 & 0 \\
		t_1 & \lambda_1 & 0 & 0 \\
		t_2 & 0 & \lambda_2 & 0 \\
		t_3 & 0 & 0 & \lambda_3
	\end{pmatrix},
\end{equation}
where $\bm{\lambda}_\Phi = (\lambda_1, \lambda_2, \lambda_3)$ is the vector of (signed) singular values of the matrix $M$ from \eqref{T} and $\bt_\Phi = (t_1, t_2, t_3)$ are the coordinates (in the Pauli basis) of $\Phi(I_2/2)$.
In the next section we consider the simpler case of unital qubit channels,
for which $\bt_{\Phi}={\bf 0}$. We shall turn to non-unital channels in
Section \ref{nonunital}. 

\section{Universal set of unital qubit channels}

\subsection{Geometry of unital channels}
Unital qubit channels $\Phi$ are defined as channels which leave the fully
mixed  
state $\rho_0 = I_2/2$ invariant. In the representation \eqref{T}, a channel
$\Phi$ is unital if and only if $\bt_\Phi=0$.  
Using $\Phi(\sigma_i) = \lambda_i \sigma_i$ for $i=0,1,2,3$ with
$\lambda_0=1$, we obtain
the Choi matrix \eqref{choi} in the computational basis for unital qubit channels, 
\begin{equation}
C_\Phi = \frac{1}{4}
\begin{pmatrix}
1+\lambda_3 & 0 & 0 & \lambda_1+\lambda_2 \\
0 & 1-\lambda_3 & \lambda_1-\lambda_2 & 0 \\
0 & \lambda_1-\lambda_2 & 1-\lambda_3 & 0 \\
\lambda_1+\lambda_2 & 0 & 0 & 1+\lambda_3
\end{pmatrix}.
\end{equation}
Using the obvious block-structure of $C_\Phi$, its eigenvalues
$q_0,q_1,q_2,q_3$ are easily computed as  
\begin{align}
\label{qi_0}
q_0 &= (1+\lambda_1+\lambda_2+\lambda_3)/4 \\
q_1 &= (1+\lambda_1-\lambda_2-\lambda_3)/4\\
q_2 &= (1-\lambda_1+\lambda_2-\lambda_3)/4\\
q_3 &= (1-\lambda_1-\lambda_2+\lambda_3)/4\,.
\label{qi_3}
\end{align}
According to Choi's theorem \cite{Choi75}, the linear map $\Phi$ is completely positive
iff its Choi matrix $C_\Phi$ is positive, i.e. $q_i \geq 0$, $i=0,1,2,3$. 
These four inequalities are exactly equivalent to the celebrated Fujiwara-Algoet conditions (FAC) for the complete positivity of a unital qubit channel
\cite{Fujiwara99}, 
\begin{equation}\label{FAC}
\left\{ 
\begin{array}{cc}
1 + \lambda_3 &\geq |\lambda_1 + \lambda_2| \\
 1 - \lambda_3 &\geq |\lambda_1 - \lambda_2|.
\end{array}
\right.
\end{equation}
The FAC \eqref{FAC} provide four inequalities; equality in any one of them is equivalent to $q_i=0$ for some $i$. Note that similar conditions were obtained in \cite{Fujiwara99} for a particular subclass of non-unital channels, but we shall address this question in Section \ref{sec:generalized-FAC}.

To each channel of the form \eqref{eq:T-diagonal} one can associate a point
in $\mathbb{R}^3$  specified by its coordinates
$(\lambda_1,\lambda_2,\lambda_3)$. Let $V_1\equiv(1,1,1)$,
$V_2\equiv(1,-1,-1)$, $V_3\equiv(-1,1,-1)$, and $V_4\equiv(-1,-1,1)$ be four
points in $\mathbb{R}^3$. Point $V_1$ corresponds to the identity channel;
points $V_i$, $2\leq i\leq 4$, correspond respectively to deterministic bit
flip, bit-phase flip, and phase flip \cite{Bengtsson06}. The vertices
$V_1,V_2,V_3,V_4$ define a regular tetrahedron $\mathcal T$. Rewriting
relations \eqref{qi_0}-\eqref{qi_3} as 
\begin{align}
	\begin{pmatrix} \lambda_1 \\ \lambda_2 \\ \lambda_3 \end{pmatrix} &= 
	q_0 \begin{pmatrix} 1 \\ 1 \\ 1 \end{pmatrix} +
	q_1 \begin{pmatrix} 1 \\ -1 \\ -1 \end{pmatrix} +
	q_2 \begin{pmatrix} -1 \\ 1 \\ -1 \end{pmatrix} +
	q_3 \begin{pmatrix} -1 \\ -1 \\ 1 \end{pmatrix}  \\
	&=q_0 V_1 + q_1 V_2 + q_2 V_3 + q_3 V_4,
\end{align}
one can interpret the FAC \eqref{FAC} geometrically by saying that the signed singular values $\lambda_i$ of a quantum channel $\Phi$ must lie inside 
the tetrahedron $\mathcal T$. Equality in one of the four inequalities
\eqref{FAC} defines a face of $\mathcal T$.  

Because of the arbitrariness in the ordering of the $\lambda_i$ mentioned in
 the previous section, distinct points of $\mathcal T$ can be exchanged with
 one another via permutation channels. Moreover, rotation by an angle $\pi$
 about any of the three coordinate axes $x$, $y$ or $z$ flips the signs of
 the two coordinates corresponding to the directions perpendicular to the
 rotation axis. Starting from the identity channel represented by $V_1$,
 concatenations with unitary channels allow to reach the channels $\Phi_i$
 represented by points $V_i$. Therefore, the four vertices of $\mathcal T$
 are equivalent up to unitary transformations. Note that an ''inversion
 channel'' with $\bm{\lambda}=-V_1$ cannot exist, since the corresponding
 Choi matrix is not positive. However, there are channels with all entries
 negative, e.g.~the one with $\bm{\lambda}=(-1,-1,-1)/3$ \cite{Wolf08}.  

There is a connection between the Kraus rank of a channel $\Phi$, defined in
\ref{sec.quch}, and the dimension of the boundary on which its representing
point lies:
\begin{proposition}
For all \emph{unital} qubit channels $\Phi$, the Kraus rank of $\Phi$ is one
plus the dimension of  
the face of the tetrahedron $\mathcal T$ to which the point $\bm{\lambda}=(\lambda_1,
\lambda_2, \lambda_3)$ belongs. Namely, rank-one channels (unitary
conjugations) correspond to the vertices of $\mathcal T$, rank-2 channels
correspond to interior of edges, rank-3 channels correspond to interior of
faces and full-rank channels to the interior of $\mathcal T$. 
\end{proposition}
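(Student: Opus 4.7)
The plan is to recognize that the four numbers $q_0,q_1,q_2,q_3$ appearing in \eqref{qi_0}--\eqref{qi_3} serve a double purpose: on the spectral side they are the eigenvalues of the Choi matrix $C_\Phi$, and on the geometric side, via the decomposition $\bm{\lambda}=q_0V_1+q_1V_2+q_2V_3+q_3V_4$ with $q_0+q_1+q_2+q_3=1$ (which follows directly from \eqref{qi_0}--\eqref{qi_3}), they are the barycentric coordinates of the point $\bm{\lambda}\in\mathcal T$ with respect to the four vertices $V_1,\dots,V_4$. Once this is noted, the proposition reduces to a standard fact about simplices.

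The first step is to recall that the Kraus rank of $\Phi$ is, by definition, the number of non-zero eigenvalues of $C_\Phi$, as stated in Section \ref{sec.quch}. Complete positivity guarantees $q_i\geq 0$, so the Kraus rank equals the number of strictly positive $q_i$'s. The second step is to invoke the uniqueness of barycentric coordinates on a non-degenerate simplex: since $V_1,V_2,V_3,V_4$ are affinely independent (they span the non-degenerate tetrahedron $\mathcal T$), the coefficients $(q_0,q_1,q_2,q_3)$ representing a given $\bm{\lambda}\in\mathcal T$ as a convex combination of the $V_i$ are unique.

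The third step is the elementary geometric identification: a point of $\mathcal T$ lies in the relative interior of the $k$-dimensional face spanned by a subset $\{V_{i_1},\dots,V_{i_{k+1}}\}$ of vertices if and only if exactly those $k+1$ barycentric coordinates are strictly positive and the remaining $3-k$ vanish. Applying this for $k=0,1,2,3$ recovers precisely the four cases listed in the statement: vertices correspond to one non-zero $q_i$ (Kraus rank $1$), open edges to two (rank $2$), open triangular faces to three (rank $3$), and the interior of $\mathcal T$ to all four (rank $4$).

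There is essentially no obstacle here; the only thing one should take care of is to verify that the map $(\lambda_1,\lambda_2,\lambda_3)\mapsto(q_0,q_1,q_2,q_3)$ defined by \eqref{qi_0}--\eqref{qi_3} really is the barycentric coordinate map on $\mathcal T$, which is immediate from the rewriting displayed just before \eqref{FAC} together with $\sum_i q_i=1$. Putting these observations together yields the claim.
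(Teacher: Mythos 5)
Your proof is correct and follows essentially the same route as the paper's: both identify the Kraus rank with the number of non-vanishing eigenvalues $q_i$ of the Choi matrix and then read off the location of $\bm{\lambda}$ in $\mathcal T$ from which $q_i$ vanish. Your explicit use of the uniqueness of barycentric coordinates on the simplex merely makes rigorous what the paper states more informally, so there is no substantive difference.
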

\begin{proof}
Since positivity of the Choi matrix is equivalent to complete positivity of
the qubit channel, a single vanishing $q_i$ defines a face of the
tetrahedron. Two vanishing 
$q_i$s 
give the intersection of the corresponding two faces, i.e.~an edge, and
three vanishing $q_i$s a vertex. If no $q_i$ is zero, we 
have a generic point inside the tetrahedron. Since at the same time the
number of non-vanishing $q_i$s is the Kraus rank of the channel (the rank of
its Choi matrix) the result holds.
\end{proof}

In the following subsections, we investigate the decomposition of qubit channels with given Kraus rank.

\subsection{Edges of tetrahedron}
\label{sec:edges-tetrahedron}
As mentioned above, all edges of $\mathcal T$ are equivalent up to
permutation of the vertices. Therefore we only need to consider one of the
edges, e.g.~the  
edge $\overline{V_1V_4}$. Points belonging to this edge correspond to
channels $\Phi_{\pf}(t)$ given by $M={\rm diag}(1-2t,1-2t,1)$, $t\in
[0,1]$. These are phase flip channels, where the probability for a phase
flip (conjugation with  
$\sigma_z$) is equal to $t$. This follows from the fact that the vertex
$V_1$ corresponds to the identity channel, whereas the vertex $V_4$
corresponds to a  
unitary conjugation by the $\sigma_z$ Pauli matrix, giving
\begin{equation}\label{phipf}
	\Phi_{\pf}(t) : \rho \mapsto (1-t)\rho + t \sigma_z \rho \sigma_z.
\end{equation}
The entire edge represents the set $\mathcal F_{\pf}=\{\Phi_{\pf}(t), t\in [0,1]\}$. More generally, for $0 < T < 1/2$, we define the restricted set 
\begin{equation}
\mathcal F_{\pf}(T) = \{\Phi_{\pf}(t) \, , \, t \in [0,T]\}
\label{eq:1t}
\end{equation}
Note that the set $\{\Phi_{\pf}(1-t) \, , \, t \in [0, T]\}$ can be obtained from the set  
$\mathcal F_{\pf}(T)$ by unitary conjugation with $U=\exp{({\rm
    i}\,\frac{\pi}{2}\sigma_{z})}$. Since unitary channels are included
in our minimal set $\mathcal F$, it suffices to generate channels
corresponding to the half-axis $T=1/2$. In the case of phase flip channels,
we have the following technical result. 

\begin{lemma}\label{lemtopedge}
For any fixed $0 < T < 1/2$ and a given maximum number $n$ of concatenations, one has 
\begin{equation}
\mathcal F_{\pf}(T)  \subset \langle \mathcal F_{\pf}(\varepsilon)\rangle_n \,.
\end{equation}
with $\varepsilon=\frac{1}{2}(1-(1-2T)^{1/n})$. 
\end{lemma}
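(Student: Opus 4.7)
My approach rests on the fact that the family $\{\Phi_{\pf}(t) : t \in [0,1]\}$ is closed under composition with an explicit one-parameter group law. First, from the diagonal matrix $M_{\Phi_{\pf}(t)} = \mathrm{diag}(1-2t, 1-2t, 1)$, the vanishing translation $\bt = 0$, and the composition rule \eqref{eq:compose-M}, a direct computation yields
\begin{equation}
\Phi_{\pf}(s) \circ \Phi_{\pf}(t) = \Phi_{\pf}(u), \qquad 1-2u = (1-2s)(1-2t).
\end{equation}
Iterating, any $n$-fold concatenation of channels $\Phi_{\pf}(t_i)$ with $t_i \in [0,\varepsilon]$ produces the phase flip channel $\Phi_{\pf}(u)$ characterized by $1-2u = \prod_{i=1}^n (1-2t_i)$. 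In particular, the composition never leaves the phase flip family.

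Next I would reduce the lemma to an elementary factorization problem: given $t \in [0,T]$, find $t_1,\ldots,t_n \in [0,\varepsilon]$ such that $\prod_{i=1}^n (1-2t_i) = 1-2t$. The choice $\varepsilon = \frac{1}{2}(1-(1-2T)^{1/n})$ is precisely calibrated so that $1-2\varepsilon = (1-2T)^{1/n}$; hence each factor $1-2t_i$ ranges over $[(1-2T)^{1/n}, 1]$ and their product over $[1-2T, 1]$, which contains the target value $1-2t$.

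Finally I would exhibit the factorization explicitly. Let $k$ be the unique integer in $\{1,\ldots,n\}$ with $(1-2\varepsilon)^k \leq 1-2t \leq (1-2\varepsilon)^{k-1}$ (such a $k$ exists because $1-2t \in [(1-2\varepsilon)^n, 1]$). Set $t_1 = \cdots = t_{k-1} = \varepsilon$, choose $t_k \in [0,\varepsilon]$ so that $1-2t_k = (1-2t)/(1-2\varepsilon)^{k-1}$, and take $t_{k+1} = \cdots = t_n = 0$. By construction the product telescopes to $1-2t$, and the identity channel $\Phi_{\pf}(0)$ lies in $\mathcal F_{\pf}(\varepsilon)$, so this is a valid $n$-fold concatenation. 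This produces $\Phi_{\pf}(t)$ and thus establishes $\mathcal F_{\pf}(T) \subset \langle \mathcal F_{\pf}(\varepsilon)\rangle_n$.

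There is no real obstacle in this argument; it is essentially a calibration exercise. The only point requiring mild care is the convention for $\langle\cdot\rangle_n$: if it denotes exactly $n$ concatenations then one must invoke the inclusion of $\Phi_{\pf}(0)$ as a padding element (as done above), while if it denotes at most $n$ then the argument is even shorter since one may simply use $k$ copies of length $\varepsilon$ and the single adjusting factor $t_k$.
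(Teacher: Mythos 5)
Your proof is correct and follows essentially the same route as the paper: both exploit the multiplicative composition law $1-2u=(1-2s)(1-2t)$ of the phase-flip family together with the calibration $(1-2\varepsilon)^n=1-2T$. The only (immaterial) difference is in the explicit factorization: the paper realizes $\Phi_{\pf}(t)$ as $n$ equal copies $\Phi_{\pf}(\varepsilon')^n$ with $\varepsilon'=\frac{1}{2}\bigl(1-(1-2t)^{1/n}\bigr)\le\varepsilon$, whereas you use maximal factors $\varepsilon$, one adjusting factor, and padding by $\Phi_{\pf}(0)$.
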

\begin{proof}
Concatenation of $n$ phase flips, $\Phi_{\pf}(\varepsilon)^n$, leads to the $M$
matrix $M={\rm diag}((1-2\varepsilon)^n,(1-2\varepsilon)^n,1)$. For any
fixed $T$ with $0<T<1/2$, the first two entries are equal to $1-2T$ when
$\varepsilon=\frac{1}{2}(1-(1-2T)^{1/n})$. Since this is an increasing
function of $T$, for any $t\in[0,T]$ there exists a
$\varepsilon'<\varepsilon$ such that
$\Phi_{\pf}(t)=\Phi_{\pf}(\varepsilon)^n$.   

\end{proof}
If one does not place any limit on the number of concatenations, the situation is much simpler: for any $\varepsilon >0$, we have 
\begin{equation}
	\{\Phi_{\pf}(t) \, , \, t \in [0, 1/2)\} = \langle \mathcal F_{\pf}(\varepsilon) \rangle.
\end{equation}
Together with unitary channels, this set generates all phase flips along the
edge $\overline{V_1V_4}$, apart from the 1/2-phase flip channel. In fact,
$\Phi_{\pf}(1/2)$ must necessarily be included in the universal channel set,
because of the following result.  

\begin{proposition}\label{propFP} 
For any decomposition  $\Phi_{\pf}(1/2)=\Phi_2\circ\Phi_1$, at least one of
$\Phi_1$ or $\Phi_2$ is unitarily equivalent to $\Phi_{\pf}(1/2)$.
\end{proposition}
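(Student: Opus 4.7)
My plan is to exploit the perfect distinguishability of $|0\rangle\langle 0|$ and $|1\rangle\langle 1|$ under $\Phi_{\pf}(1/2)$, combined with the data-processing inequality. Setting $\tau_i := \Phi_1(|i\rangle\langle i|)$ for $i=0,1$, the factorization $\Phi_2\circ\Phi_1 = \Phi_{\pf}(1/2)$ yields $\Phi_2(\tau_i)=|i\rangle\langle i|$, and monotonicity of the trace norm then forces
\begin{equation*}
2 = \bigl\||0\rangle\langle 0|-|1\rangle\langle 1|\bigr\|_1 = \bigl\|\Phi_2(\tau_0)-\Phi_2(\tau_1)\bigr\|_1 \le \|\tau_0-\tau_1\|_1 \le 2,
\end{equation*}
so $\tau_0,\tau_1$ must have orthogonal supports. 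For a qubit this means that both are pure and mutually orthogonal, i.e.\ $\tau_i=|\psi_i\rangle\langle\psi_i|$ with $\langle\psi_0|\psi_1\rangle=0$.

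I would then absorb the unitary $V$ defined by $V|i\rangle=|\psi_i\rangle$ into both factors, replacing $(\Phi_1,\Phi_2)$ by $(\Phi_V^{\dagger}\circ\Phi_1,\,\Phi_2\circ\Phi_V)$; this preserves the factorization of $\Phi_{\pf}(1/2)$ as well as the unitary-equivalence class of each factor, so I may assume from now on that $\Phi_1(|i\rangle\langle i|)=|i\rangle\langle i|$. Since the Kraus operators of a channel mapping a pure state to a pure state must act proportionally on that state, every Kraus operator of $\Phi_1$ is forced to be diagonal in the computational basis, $A_k = c_k|0\rangle\langle 0| + d_k|1\rangle\langle 1|$. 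A short Pauli-basis computation using trace preservation $\sum_k|c_k|^2=\sum_k|d_k|^2=1$ then shows that $\Phi_1$ is unital, with
\begin{equation*}
M_{\Phi_1} = R_z(\arg\alpha)\,\mathrm{diag}(|\alpha|,|\alpha|,1), \qquad \alpha := \sum_k c_k d_k^{*}, \quad |\alpha|\le 1,
\end{equation*}
where $R_z$ is a rotation about the $z$-axis and hence fixes $e_3$.

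The proof now splits on $|\alpha|$. If $|\alpha|=0$ the matrix collapses to $M_{\Phi_1}=\mathrm{diag}(0,0,1)=M_{\Phi_{\pf}(1/2)}$, so combined with $\bt_{\Phi_1}=0$ one has $\Phi_1=\Phi_{\pf}(1/2)$ in the reduced problem, and the original $\Phi_1$ is unitarily equivalent to $\Phi_{\pf}(1/2)$. If $|\alpha|>0$ then $M_{\Phi_1}$ is invertible, and the relation $M_{\Phi_2}M_{\Phi_1}=\mathrm{diag}(0,0,1)$ together with the trivial identity $\mathrm{diag}(0,0,1)R_z^{\pm 1}=\mathrm{diag}(0,0,1)$ pins down $M_{\Phi_2}=\mathrm{diag}(0,0,1)$; the composition law \eqref{eq:compose-t} with $\bt_{\Phi_1}=0$ further gives $\bt_{\Phi_2}=0$, so $\Phi_2=\Phi_{\pf}(1/2)$ in the reduced problem and hence the original $\Phi_2$ is unitarily equivalent to $\Phi_{\pf}(1/2)$. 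The main conceptual obstacle is recognising that data processing alone rigidly fixes the orthogonal-pure-state structure of the intermediate states $\tau_i$; once this is in hand, the remainder reduces to elementary linear algebra on $3\times 3$ matrices.
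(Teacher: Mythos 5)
Your proof is correct, and it follows a genuinely different route from the one in the paper. The paper argues at the level of the $3\times 3$ matrices: from $M_2M_1=\mathrm{diag}(0,0,1)$ and submultiplicativity of the operator norm it forces $\|M_1\|=\|M_2\|=1$, then invokes the Fujiwara--Algoet conditions to conclude that the signed singular values of each factor are of the form $(a,a,1)$ and $(b,b,1)$, and finally takes determinants to get $ab=0$. You instead work at the level of states and Kraus operators: saturation of the data-processing inequality for the trace norm pins the intermediate states $\tau_0,\tau_1$ to be orthogonal pure states, the rank-one rigidity of $\sum_k A_k|i\rangle\langle i|A_k^\dagger=|i\rangle\langle i|$ forces all Kraus operators of (the rotated) $\Phi_1$ to be diagonal, and an explicit computation identifies $\Phi_1$ as a phase-damping channel $M_{\Phi_1}=R_z(\arg\alpha)\,\mathrm{diag}(|\alpha|,|\alpha|,1)$, $\bt_{\Phi_1}=0$; the case split on $|\alpha|$ then finishes via the composition rules \eqref{eq:compose-M}--\eqref{eq:compose-t}. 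Your argument is longer but yields strictly more structural information: it classifies \emph{all} decompositions (up to unitaries the first factor is always phase damping, and whenever it is non-degenerate the second factor is exactly $\Phi_{\pf}(1/2)$), and it handles the translation vectors explicitly, whereas the paper's determinant argument only controls the $M$ matrices and leaves implicit the (true, via the generalized FAC) fact that signed singular values $(0,0,1)$ force $\bt=0$. The paper's approach, on the other hand, is shorter, stays entirely within the Bloch-picture linear algebra used throughout, and is the template reused later in the minimality part of Theorem \ref{thm:universal-unital-qubit}. Only cosmetic points: the phase convention in $R_z(\arg\alpha)$ is immaterial, and $\arg\alpha$ is undefined when $\alpha=0$, though the product $\mathrm{diag}(0,0,1)$ is still the right matrix there.
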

\begin{proof}
Let $\Phi_{\pf}(1/2)=\Phi_2\circ\Phi_1$ be such a decomposition, and let $M_{1,2}$ be the $3 \times 3$ matrices defining the channels $\Phi_{1,2}$. One has
\begin{equation}\label{eq:PF12}
	M_{{\pf}(1/2)} = \begin{pmatrix}
	0 & 0 & 0 \\
	0 & 0 & 0 \\
	0 & 0 & 1 \end{pmatrix} = M_2 M_1,
\end{equation}
and thus $1=\|M_{{\pf}(1/2)}\| \leq \|M_1\| \|M_2\|$, where $\| \cdot \|$
denotes the usual operator norm $\|M\|=\sup_{x} |Mx|/|x|$. 
Since both norms of $M_{1,2}$ are smaller than or equal to 1, it must be
that $\|M_1\| 
= \|M_2\| = 1$. Without loss of generality, we can assume that
$\lambda_3(M_{1,2}) = 1$. Using the Fujiwara-Algoet conditions, this implies
that the signed singular values of $M_1$ and $M_2$ are of the form
$(a,a,1)$ and $(b,b, 1)$ respectively, for some $a,b$. Taking the
determinant in 
Eq.~\eqref{eq:PF12}, one gets $ab=0$, so that one of $M_1$ or $M_2$ must be
equal to the initial phase flip channel $\Phi_{\pf}(1/2)$, up to unitary
conjugation. 
\end{proof}

As mentioned, all the edges are obtained 
by switching the signs of
eigenvalues, 
permuting the eigenvalues, or by combination of the
two procedures. This way, we only need to include $\mathcal
F_{\pf}(\varepsilon) \cup \{\Phi_{\pf}(1/2)\}$ in our universal set of
quantum channels. We gather the results in this subsection in the following
proposition.  
\begin{proposition}\label{prop:universal-edge}
The unital qubit channels situated on the edges of the tetrahedron $\mathcal
T$ can be obtained by the concatenation of channels from the following
\emph{edge-universal set} ($\varepsilon$ is an arbitrarily small positive constant): 
\begin{equation}
	\mathcal G_{edge}^\varepsilon = \mathcal F_{\pf}(\varepsilon) \cup \{\Phi_{\pf}(1/2)\} \cup \mathcal U_2.
\end{equation}
\end{proposition}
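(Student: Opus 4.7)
The plan is to reduce an arbitrary edge channel to the canonical phase-flip channel on the edge $\overline{V_1V_4}$ using only unitary channels (which are freely available in $\mathcal{G}_{edge}^\varepsilon$), and then to exhaust the parameter $t \in [0,1]$ of $\Phi_{\pf}(t)$ in three regimes.

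First I would handle the reduction to $\overline{V_1V_4}$. By the discussion preceding the proposition, the six edges of $\mathcal{T}$ are interchanged by two kinds of operations, both belonging to $\mathcal{U}_2$: the permutation channels $\Phi_\sigma$, which cyclically permute $(\lambda_1,\lambda_2,\lambda_3)$, and the Pauli conjugations $\Phi_{\sigma_\alpha}(\rho)=\sigma_\alpha\rho\sigma_\alpha$, which flip the signs of the two singular values transverse to axis~$\alpha$. Given any channel $\Phi$ whose signed singular values lie on an edge of $\mathcal{T}$, a suitable composition of these unitaries on both sides brings it into the form $\Phi = W_1 \circ \Phi_{\pf}(t) \circ W_2$ with $W_1,W_2\in\mathcal{U}_2$ and $t\in[0,1]$.

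Second I would show that every $\Phi_{\pf}(t)$ with $t\in[0,1]$ lies in $\langle\mathcal{G}_{edge}^\varepsilon\rangle$, splitting into three cases. (i) For $t\in[0,1/2)$, Lemma \ref{lemtopedge} gives, for any $n$ large enough that $\tfrac12(1-(1-2t)^{1/n})\le\varepsilon$, a decomposition $\Phi_{\pf}(t)\in\langle\mathcal{F}_{\pf}(\varepsilon)\rangle_n$; equivalently, letting $n\to\infty$, the displayed identity $\{\Phi_{\pf}(t):t\in[0,1/2)\}=\langle\mathcal{F}_{\pf}(\varepsilon)\rangle$. (ii) For $t=1/2$, the channel is an element of $\mathcal{G}_{edge}^\varepsilon$ by construction, and must be included explicitly because of Proposition~\ref{propFP}. (iii) For $t\in(1/2,1]$ I would use the unitary $U=\exp(\mathrm{i}\pi\sigma_z/2)=\mathrm{i}\sigma_z$, for which $\Phi_U(\rho)=\sigma_z\rho\sigma_z$, and verify directly from \eqref{phipf} the identity
\begin{equation*}
\Phi_U\circ\Phi_{\pf}(1-t)=\Phi_{\pf}(t),
\end{equation*}
reducing the case to (i) via a single unitary on the left.

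The only genuine obstruction here is the point $t=1/2$: Proposition~\ref{propFP} shows it cannot be factored nontrivially on the edge, which is precisely why $\{\Phi_{\pf}(1/2)\}$ must be listed as a separate generator. Everything else is assembly: Lemma~\ref{lemtopedge} supplies the $[0,1/2)$-part of the edge by iteration, the Pauli-conjugation trick folds $(1/2,1]$ back onto $[0,1/2)$, and the permutation and sign-flip unitaries in $\mathcal{U}_2$ transport the analysis from $\overline{V_1V_4}$ to the remaining edges.
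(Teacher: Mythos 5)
Your proposal is correct and follows essentially the same route as the paper, which states this proposition precisely as a compilation of the preceding results: Lemma \ref{lemtopedge} for $t\in[0,1/2)$, the explicit inclusion of $\Phi_{\pf}(1/2)$ justified by Proposition \ref{propFP}, the conjugation by $\exp(\mathrm{i}\pi\sigma_z/2)$ to fold $(1/2,1]$ onto $[0,1/2)$, and permutation/sign-flip unitaries to transport the edge $\overline{V_1V_4}$ to the other edges. Your verification of the identity $\Phi_U\circ\Phi_{\pf}(1-t)=\Phi_{\pf}(t)$ is a correct (and slightly more explicit) rendering of the paper's remark.
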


\subsection{A universal set of unital qubit channels}
In the seminal paper \cite{Wolf08}, the divisibility of quantum channels was
investigated. A 
quantum channel $\Phi\in\mathcal {\mathcal C}_d$ was called
\emph{indivisible} if 
every possible decomposition of the form $\Phi=\Phi_2\circ \Phi_1$ is such
that one of the 
$\Phi_i$ is a unitary conjugation. We recall one of the main results from
\cite{Wolf08}. 
\begin{proposition}[\cite{Wolf08}, Theorem 23]\label{propIndv}
A non-unitary qubit quantum channel $\Phi\in
\mathcal C_2$ is indivisible if and only if it has Kraus rank 3.  
\end{proposition}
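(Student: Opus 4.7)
The plan is to prove the two directions separately and then to point out where the work really lies.

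\paragraph*{Easy direction (not rank $3$ $\Rightarrow$ divisible).} I would rule out the Kraus ranks $1,2,4$ one by one using the tetrahedron picture established in the previous subsections. Rank $1$ is the unitary case, which is excluded by hypothesis. For rank $4$ the signed singular value vector $\bm{\lambda}_\Phi$ lies in the interior of $\mathcal{T}$, so all $q_i$ are strictly positive and the FAC leaves an open neighborhood of admissible $\bm{\lambda}$. I would exhibit an explicit factorization $M_\Phi = M_2 M_1$ with both factors in the open interior (so both non-unitary), e.g.\ by choosing $M_1$ a small depolarizing channel close to the identity and defining $M_2 = M_\Phi M_1^{-1}$; a compactness/continuity argument based on the FAC then shows $M_2$ is a legitimate channel for $M_1$ close enough to the identity. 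For rank $2$, $\bm{\lambda}_\Phi$ sits in the interior of an edge of $\mathcal T$, so after conjugation with unitaries it is a phase flip $\Phi_{\pf}(t)$ with $t\in (0,1/2]$. For $t\in (0,1/2)$ the identity $1-2t=(1-2s)^2$ with $s=\tfrac12(1-\sqrt{1-2t})$ gives $\Phi_{\pf}(t)=\Phi_{\pf}(s)\circ\Phi_{\pf}(s)$ with both factors non-unitary; for $t=1/2$ one checks by direct calculation that $\Phi_{\pf}(1/2)$ is idempotent, so $\Phi_{\pf}(1/2)=\Phi_{\pf}(1/2)\circ\Phi_{\pf}(1/2)$ again realizes a decomposition with both factors non-unitary. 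In each case $\Phi$ is divisible.

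\paragraph*{Hard direction (rank $3$ $\Rightarrow$ indivisible).} Suppose the Kraus rank of $\Phi$ is $3$ and $\Phi=\Phi_2\circ\Phi_1$. In the unital case exactly one of the $q_i$ in \eqref{qi_0}--\eqref{qi_3} vanishes, so (after permutation channels) $\bm{\lambda}_\Phi$ lies in the interior of the face $\{q_0=0\}$, i.e.\ $\lambda_1+\lambda_2+\lambda_3=-1$. The plan is to exploit the multiplicative law $M_\Phi=M_{\Phi_2}M_{\Phi_1}$ together with the FAC for each factor. Taking determinants yields $\det M_\Phi=\lambda_1\lambda_2\lambda_3=\det M_{\Phi_1}\cdot\det M_{\Phi_2}$, and computing the singular values of both sides produces inequalities of the form $|\lambda_i(\Phi)|\le\|M_{\Phi_2}\|\cdot|\lambda_i(\Phi_1)|\le|\lambda_i(\Phi_1)|$ (and symmetrically), so that the FAC saturation $\lambda_1+\lambda_2+\lambda_3=-1$ forces at least one factor to saturate a FAC inequality as well. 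Sharpening this via a careful case analysis (going through the different sign patterns in the three faces adjacent to the relevant vertex) should show that either $M_{\Phi_1}$ or $M_{\Phi_2}$ must have all three signed singular values of modulus $1$, which by Proposition~\ref{propFP}-style rigidity plus the isomorphism \eqref{isomorphism} means that factor is an orthogonal transformation, i.e.\ a unitary conjugation. The non-unital case would be reduced to the unital one by absorbing the translation part using \eqref{eq:compose-t} and the generalized FAC of Section~\ref{sec:generalized-FAC}.

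\paragraph*{Main obstacle.} The straightforward determinant/norm bound is not quite strong enough on its own: there is in principle room for two ``quasi-extremal'' factors whose product sits on a face. The delicate step is therefore to rule out genuine rank-$3\times$rank-$3$ or rank-$2\times$rank-$3$ decompositions. I would expect to need a finer invariant than $\det M$ and $\|M\|$, likely an SVD-level analysis showing that whenever both $M_{\Phi_1}$ and $M_{\Phi_2}$ are strictly contractive in at least one direction, the image ellipsoid of $M_{\Phi_2}M_{\Phi_1}$ stays strictly inside $\mathcal T$ and cannot touch a face. This geometric ``strict contraction propagation'' is the core technical ingredient whose verification is where most of the real work of the proof resides.
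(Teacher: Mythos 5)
First, a point of comparison: the paper does not prove this statement at all --- it is imported verbatim from Wolf and Cirac (\cite{Wolf08}, Theorem 23), so your proposal cannot be matched against an internal argument; it has to stand on its own. It does not, for two reasons.

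In the ``easy'' direction your treatment of Kraus rank 2 assumes that every rank-2 channel has $\bm{\lambda}$ in the interior of an edge of $\mathcal T$ and is therefore unitarily equivalent to a phase flip $\Phi_{\pf}(t)$. That correspondence between Kraus rank and position in the tetrahedron holds only for \emph{unital} channels (it is exactly the paper's Proposition on unital channels), whereas the statement you are proving concerns all qubit channels. Non-unital Kraus-rank-2 channels exist in abundance --- the extremal channels of Proposition \ref{prop:extremal-channels}, e.g.\ amplitude damping, with $\bm{\lambda}=(\cos u,\cos v,\cos u\cos v)$ strictly inside $\mathcal T$ and $t_3=\sin u\sin v\neq 0$ --- and they are not unitarily conjugate to any phase flip. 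They are in fact divisible, but by the composition law $\Phi_{\rm 2PO}(\omega,\theta)=\Phi_{\rm 2PO}(\omega,x)\circ\Phi_{\rm 2PO}(x,\theta)$ (and its 1PO analogue $\Phi_{\rm 1PO}(\lambda\mu)=\Phi_{\rm 1PO}(\lambda)\circ\Phi_{\rm 1PO}(\mu)$) used in Section V, which your sketch never invokes; as written, this case is simply missing.

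In the ``hard'' direction (rank 3 $\Rightarrow$ indivisible) you do not have a proof, and you say so yourself: the inequalities $|\lambda_i(\Phi)|\le\|M_{\Phi_2}\|\,|\lambda_i(\Phi_1)|$ and the determinant identity do not force either factor to have all signed singular values of modulus one, and the step you defer --- showing that a product of two non-unitary (hence somewhere strictly contractive, or quasi-extremal) channels cannot land on a face of $\mathcal T$ --- \emph{is} the content of the theorem, not a technicality to be filled in later. Moreover, the proposed reduction of the non-unital case ``by absorbing the translation'' is unsupported and in fact cannot work as stated: by the discussion after Theorem \ref{thm:GFA}, a channel whose $\bm{\lambda}$ lies on a face of $\mathcal T$ must have $\bt=0$, so a genuinely non-unital Kraus-rank-3 channel has $\bm{\lambda}$ strictly inside $\mathcal T$ with $t^2$ saturating the GFA bound \eqref{descartes5}; it is not unitarily equivalent to any unital channel, and your face-by-face case analysis never touches it. So both directions have genuine gaps, and the central claim remains unproven; if you do not want to reconstruct Wolf and Cirac's argument, the honest course is to cite \cite{Wolf08} as the paper does.
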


Therefore, \emph{any} universal set 
of qubit channels
must include the indivisible channels represented by the faces of the
tetrahedron. We shall denote by $\mathcal I_2$ the set of all indivisible
qubit channels 
\begin{equation}
\mathcal I_2 = \{ \Phi \in \mathcal C_2 \, : \, \Phi \text{ has Choi rank 3}\}.
\end{equation}

According to Proposition \ref{propIndv}, all channels on the edges of the
tetrahedron are divisible. However, divisibility does not guarantee
reduction to a more basic set of channels. For instance, the phase flip
channel $\Phi_{\pf}(1/2)$ is divisible in the sense of proposition
\ref{propIndv}, as $\Phi_{\pf}(1/2)=\Phi_{\pf}(1/2)\circ \Phi_{\pf}(1/2)$.  

We now state the main result of this section, an $\varepsilon$-small universal set of unital qubit channels. 

\begin{theorem}\label{thm:universal-unital-qubit}
For any $\varepsilon >0$, the set 
\begin{eqnarray}
	\mathcal G^\varepsilon &=&\mathcal I_2\cup \mathcal G_{edge}^\varepsilon\\&=&\mathcal I_2 \cup \mathcal
	F_{\pf}(\varepsilon) \cup \{\Phi_{\pf}(1/2)\} \cup \mathcal U_2
\end{eqnarray}
is a \emph{universal set} of \emph{unital} qubit channels.
It is minimal in the sense that all elements of $\mathcal I_2$ and
$\{\Phi_{\pf}(1/2)\}$ are needed, as well as a channel
$\Phi_{\pf}(\varepsilon')$ 
where $\varepsilon'\le \varepsilon$.\end{theorem}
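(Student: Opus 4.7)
The plan is to prove the universality and minimality claims separately: universality of $\langle\mathcal G^\varepsilon\rangle$, and the necessity of each of the three constituent families $\mathcal I_2$, $\{\Phi_{\pf}(1/2)\}$, $\mathcal F_{\pf}(\varepsilon)$.

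For universality, I would start from the singular-value reduction of Section~\ref{sec.ssv}, which writes any unital $\Phi$ as $\Phi=\Phi_{U_1}\circ\Phi_\Lambda\circ\Phi_{U_2}$ with $\Phi_{U_i}\in\mathcal U_2\subset\mathcal G^\varepsilon$ and $\Phi_\Lambda$ diagonal with $\bm\lambda\in\mathcal T$. It then suffices to generate $\Phi_\Lambda$ from $\mathcal G^\varepsilon$ for every $\bm\lambda\in\mathcal T$. Stratifying $\mathcal T$ by the dimension of the face containing $\bm\lambda$: vertices are Pauli unitaries, points in the interiors of open edges are handled by Proposition~\ref{prop:universal-edge} (after conjugating the distinguished edge $\overline{V_1V_4}$ to the desired one by a permutation/sign-flip unitary), and the interiors of the two-dimensional faces are in $\mathcal I_2$ by definition. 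The remaining case is the interior $\mathcal T^\circ$ (rank-$4$ channels), which I would handle by exhibiting, for each $\bm\lambda\in\mathcal T^\circ$, a decomposition $\Phi_\Lambda=\Phi_{\pf,\hat n}(t)\circ\Phi_f$ with $\Phi_f\in\mathcal I_2$ and $\hat n\in\{\hat x,\hat y,\hat z\}$. For instance, with $\hat n=\hat z$ and the face $\mu_1+\mu_2-\mu_3=1$, the identifications $\lambda_{1,2}=(1-2t)\mu_{1,2}$, $\lambda_3=\mu_3$ together with the face equation yield $1-2t=(\lambda_1+\lambda_2)/(1+\lambda_3)$; an elementary case analysis over the four faces and the three Pauli axes shows that, for every $\bm\lambda\in\mathcal T^\circ$, at least one such factorisation lands in the admissible range $t\in[0,1/2]$ with $\bm\mu\in\mathcal T$ (a handful of degenerate points where two $\lambda_i$ vanish are treated separately by setting $t=1/2$). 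The phase flip $\Phi_{\pf,\hat n}(t)$ thus produced is in turn generated by $\mathcal G^\varepsilon$ via Proposition~\ref{prop:universal-edge}.

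For minimality, parts~(a) and~(b) are immediate consequences of Propositions~\ref{propIndv} and~\ref{propFP}. Every element of $\mathcal I_2$ has Kraus rank~$3$ and is therefore indivisible, so it cannot be realised as a non-trivial composition and must itself lie in the universal set. Any decomposition of $\Phi_{\pf}(1/2)$ contains a factor unitarily equivalent to $\Phi_{\pf}(1/2)$, so $\Phi_{\pf}(1/2)$ must also belong to $\mathcal G^\varepsilon$.

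Part~(c), the necessity of some $\Phi_{\pf}(\varepsilon')$ with $\varepsilon'\le\varepsilon$, is the main obstacle of the proof. I would argue by contradiction: assume the universal set contains no phase flip with parameter in $(0,\varepsilon]$ and try to decompose $\Phi_{\pf}(\varepsilon_0)=\Phi_n\circ\cdots\circ\Phi_1$ for some $\varepsilon_0<\varepsilon$. Since $\|M_{\pf(\varepsilon_0)}\|=1$ and the operator norm is submultiplicative with $\|M_i\|\le 1$ on unital qubit channels, every non-unitary factor must satisfy $\|M_i\|=1$; but the interior of every open $2$-face of $\mathcal T$ has maximal signed singular value strictly less than~$1$, so no element of $\mathcal I_2$ can appear as a factor, and each non-unitary factor must lie on an edge of~$\mathcal T$, i.e.\ be unitarily equivalent to either $\Phi_{\pf}(1/2)$ or to some $\Phi_{\pf}(t)$ with $t>\varepsilon$. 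An eigenvector-tracking argument in the spirit of the proof of Proposition~\ref{propFP}, using that $M_n\cdots M_1={\rm diag}(1-2\varepsilon_0,1-2\varepsilon_0,1)$ has the simple eigenvalue~$1$ with eigenvector~$e_3$, then forces at least one factor to be a phase flip along~$\hat z$ with parameter at most~$\varepsilon_0<\varepsilon$, contradicting the assumption. The delicate point, which I expect to be the hardest step, is to show that the distinguished axis is preserved \emph{exactly} throughout the decomposition, rather than only up to rotations that could in principle be compensated by neighbouring unitary factors.
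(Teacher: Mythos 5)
Your universality argument and the necessity of $\mathcal I_2$ and of $\Phi_{\pf}(1/2)$ follow essentially the paper's route: the interior of the tetrahedron is filled by composing a rank-3 channel lying on a face of the cross-section at height $\lambda_3$ with a phase flip along $\hat z$. The paper sweeps the bow-tie regions with $\Phi(s,z)\circ\Phi_{\pf}(t)$, whereas you solve for $(s,t)$ given the target point; since the diagonal factors commute, this is the same decomposition, and your unproven "case analysis" claim is indeed true (in fact the axis $\hat z$ together with the four faces already suffices: the ray from $(0,0,\lambda_3)$ through $(\lambda_1,\lambda_2)$ exits the rectangle $A_1A_2A_4A_3$ through a point on some face, the corner case landing on an edge being harmless because edge channels are generated anyway). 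Parts (a) and (b) of minimality are handled exactly as in the paper, via Propositions~\ref{propIndv} and~\ref{propFP}, with the same implicit "up to unitary conjugation" caveat the paper itself allows.

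The genuine gap is in part (c), the necessity of $\Phi_{\pf}(\varepsilon')$ with $\varepsilon'\le\varepsilon$. Your reduction showing that every non-unitary factor in a decomposition of $\Phi_{\pf}(\varepsilon_0)$ must have operator norm $1$, hence be a phase flip up to unitary conjugations, is correct and matches the paper; but you then stop at an ``eigenvector-tracking'' argument whose crucial step --- exact preservation of the $\hat z$ axis along the decomposition --- you explicitly leave open. That step is neither available nor needed: interleaved rotations cannot be controlled this way, and the paper instead closes the argument with a determinant computation. Writing each factor as $M_i=R_i\,\mathrm{diag}(1-2\delta_i,1-2\delta_i,1)\,S_i$ with $R_i,S_i\in SO(3)$ (and reducing $\delta_i>1/2$ to $1-\delta_i$ by a unitary), multiplicativity of the determinant gives $(1-2\varepsilon_0)^2=\prod_i(1-2\delta_i)^2$ with every factor at most $1$, so some $\delta_i\le\varepsilon_0<\varepsilon$, independently of the rotations; the conclusion is thus obtained up to unitary equivalence, which is also all that the paper's own two-factor argument in Theorem~\ref{thm:universal-unital-qubit} establishes, so your worry about the exact axis is a red herring. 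As written, your proposal does not establish the necessity of arbitrarily small phase flips; replacing the axis-tracking by the determinant step (the very argument you already invoke in the two-factor setting of Proposition~\ref{propFP}) completes it.
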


\begin{proof}
According to Proposition \ref{prop:universal-edge}, the set $\mathcal
F_{\pf}(\varepsilon) \cup \{\Phi_{\pf}(1/2)\}$ together with unitaries,
generates all channels on the edges of the tetrahedron, i.e.~all Kraus rank
1 and 2 channels.  The set $\mathcal I_2$ contains all Kraus rank 3
channels. It remains to show that the union of these two sets generates all
the Kraus rank 4 channels (interior of the tetrahedron). To this end,
consider the four edges $e_1,\ldots,e_4$ of the tetrahedron with
$e_1=\overline{V_1V_2}$, $e_2=\overline{V_1V_3}$, $e_3=\overline{V_4V_2}$,
and $e_4=\overline{V_4V_3}$, connecting vertices ($V_1,V_4$) to vertices
($V_2,V_3$).  The plane representing the set  
of channels with fixed $\lambda_3=z$, $z\in [-1,1]$, intersects these four
edges in 
four points $A_1,\ldots,A_4$, respectively, with $A_1=(1,z,z)$,
$A_2=(z,1,z)$, $A_3=(-z,-1,z)$, and $A_4=(-1,-z,z)$. These four points form
a rectangle, see Fig.~\ref{fig:tetra}.  

\begin{figure}
\includegraphics{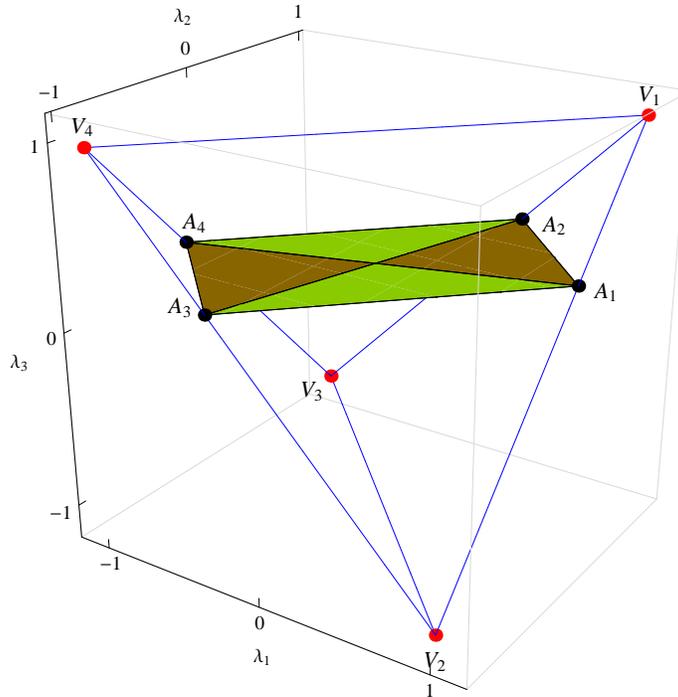}
\caption{The Fujiwara-Algoet tetrahedron $\mathcal T$ of admissible signed
  singular values of a quantum channel with a section $A_1A_2A_4A_3$,
  corresponding to $\lambda_3=1/2$, and the bow-tie regions obtained by
  concatenating channels from the edges. For details see the proof of
  Theorem \ref{thm:universal-unital-qubit}.} 
\label{fig:tetra}
\end{figure}

Consider the set of channels $\mathcal R_1=\{\Phi(s,z),
s\in(0,1)\} \subset \mathcal I_2$, where $\Phi(s,z)$ is the channel associated with the matrix $T$ of the form \eqref{T} with $\bf{t}=\bf{0}$ and $M=M(s,z)={\rm
  diag}(1+s(z-1),z+s(1-z),z)$ defined by the edge $\overline{A_1A_2}$
of the 
rectangle with given $z$. 
  The concatenation
  $\Phi(s,z)\circ\Phi_{\pf}(t)$ 
  with $\Phi(s,z)\in \mathcal R_1$ and 
  $\Phi_{\pf}(t)\in\mathcal F_{V_1V_4}$ (top edge) has the $M$ matrix $M={\rm
    diag}((1-2t)(1+s(z-1)),(1-2t)(z+s(1-z)),z)$. At fixed $s$ the two points
  corresponding to $t=0$ and $t=1$ are on the edge $\overline{A_1A_2}$ and
  $\overline{A_3A_4}$, respectively, and are diametrical with respect to the
  centre of the rectangle at $(0,0,z)$. Because as $t$ varies it linearly
  interpolates between these two points, it fills a line connecting the two
  points. Therefore, when varying 
  $s\in [0,1]$ and $t\in [0,1]$, the channel $\Phi(s,z,t)$ completely fills
  a bow-tie shape (brown/dark-colored region in Fig.~\ref{fig:tetra}),
  corresponding to half of the rectangle $A_1A_2A_3A_4$. The other
  complementary half of the rectangle (green-colored region in
  Fig.~\ref{fig:tetra}) is obtained in a similar manner by concatenating
  channels from the edge $\overline{A_2A_4}$  
described by $M(s,z)={\rm diag}(z+s(-1-z),1+s(-z-1),z)$ with channels
$\Phi_{\pf}(t)$ from $\mathcal F_{V_1V_4}$. Varying $s,z,t$ over their
allowed ranges fills the entire   tetrahedron. 
Since $\Phi(s,z)\in \mathcal I_2$ for $z\in (-1,1)$ and
  $s\in(0,1)$,  and
  since, according to Proposition \ref{prop:universal-edge}, all the
  channels from the top and bottom edges of the 
  tetrahedron can be obtained from $\mathcal
  F_{\pf}(\varepsilon)$ and $\Phi_{\pf}(1/2)$, this completes the proof of
  the universality part of the theorem. 

Regarding the minimality of the set $\mathcal G^\varepsilon$, note that
any universal set needs to contain $\mathcal I_2$ (indivisible channels)
and $\Phi_{\pf}(1/2)$ (because of Proposition \ref{propFP}). All that
remains to be shown is therefore 
that any universal set also needs to contain phase flip channels of
arbitrarily small parameters. To this end, 
consider a non-trivial decomposition of a phase flip channel
$\Phi_{\pf}(\varepsilon) = \Phi_2 \circ 
\Phi_1$. As in Proposition \ref{propFP}, since  the matrix $M$ associated to
the phase flip channel $\Phi_{\pf}(\varepsilon)$ has 
operator norm 1, both $\Phi_1$ and $\Phi_2$ need to be, up to unitary
conjugations, phase flip channels also, of respective parameters $\delta_1$,
$\delta_2$. Taking the determinant in the equation $M = M_2 M_1$, we get
$(1-2\varepsilon)^2= (1-2\delta_1)^2(1-2\delta_2)^2$, so that 
at least one of $\delta_{1,2}$ has to be smaller than
$\varepsilon$. Therefore, any universal set needs to contain
$\Phi_{\pf}(\delta)$ with $\delta < \epsilon$, finishing the proof for the
optimality of the set $\mathcal G^\varepsilon$.
\end{proof}

\section{Geometry of non-unital qubit channels}

\label{nonunital}
We now consider the more general case of non-unital channels. We first
provide two relatively simple forms of generalized Fujiwara-Algoet
conditions which allow one to determine the combinations of $\bt$ and $M$ that
represent completely positive maps.  We use these conditions to subsequently
classify qubit channels by their pure output (PO). Such classification is useful
because concatenation of channels results in a channel whose number of pure
state outputs can be at most equal to the minimal number of pure state
outputs among the used channels. 
\subsection{Condition for complete positivity of non-unital channels}\label{sec:generalized-FAC}
The Choi matrix \eqref{choi} for a general qubit channel with a $T$ matrix
of the form  
(\ref{T}) is given by \cite{Bengtsson06}  
\begin{equation} \label{ChoiT}
C_\Phi=\left(
\begin{array}{cccc}
\frac{1}{2}(1+\lambda_3+t_3)&0&\frac{1}{2}(t_1+it_2)&\frac{\lambda_1+\lambda_2}{2}
\\
0&\frac{1}{2}(1-\lambda_3+t_3)&\frac{\lambda_1-\lambda_2}{2}&\frac{1}{2}(t_1+it_2)
\\
\frac{1}{2}(t_1-it_2)&\frac{\lambda_1-\lambda_2}{2}&\frac{1}{2}(1-\lambda_3-t_3)&0\\
 \frac{\lambda_1+\lambda_2}{2}&\frac{1}{2}(t_1-it_2)&0&\frac{1}{2}(1+\lambda_3-t_3) 
\end{array}
\right)\,.
\end{equation}
By a simple change of basis $RC_\Phi R^{\dagger}$, with 
\begin{equation} \label{basischange}
R=\frac{1}{\sqrt{2}}\left(
\begin{array}{cccc}
1&0&0&1\\
0&1&1&0\\
0&-i&i&0\\
1&0&0&-1
\end{array}
\right)\, ,
\end{equation}
the Choi matrix can be rewritten as
\begin{equation} \label{ChoiTbis}
C_\Phi=\frac12\left(
\begin{array}{cccc}
4q_0&t_1&t_2&t_3\\
t_1&4q_1&i t_3&-i t_2\\
t_2&-i t_3&4q_2&i t_1\\
t_3&i t_2&-i t_1&4q_3
\end{array}
\right)\, ,
\end{equation}
where $q_i$ are the linear combinations of $\lambda_i$ introduced in \eqref{qi_0}-\eqref{qi_3}. 
Note that a necessary condition for $C_\Phi\ge 0$ is $q_i\ge 0$ for all
$i=0,\ldots,3$, so that $\lambda_i$ still satisfy the original FAC
conditions \eqref{FAC}.  The following result generalizes the FAC to the case of non-unital channels, beyond the simple case where only one of the $t_i$ is non-zero. Equivalent necessary and sufficient conditions were obtain in 
\cite[Corollary 2]{Ruskai02}, in the form of three inequalities; we claim that our re-formulation has a more natural geometric interpretation. 
\begin{theorem}[Generalized Fujiwara-Algoet conditions]\label{thm:GFA}
Let $\Phi: \mathcal M_2(\mathbb C) \to \mathcal M_2(\mathbb C)$ be a non-unital
linear map whose matrix in the Pauli basis is given by
\eqref{eq:T-diagonal}. Let $t=\|\bt\|$ and 
$\bu=\bt/t$ the corresponding unit vector. Then the map
$\Phi$ is a quantum channel 
if and only if 
\begin{align}
q_i &\geq 0, \quad i=0,1,2,3 \qquad \text{ and}\nonumber \\
t^2 &\leq r-\sqrt{r^2-q},
\label{descartes5}
\end{align}
where  the $q_i$ are defined in
\eqref{qi_0}-\eqref{qi_3} and 
\begin{align} 
\label{eq:def-r} r&=1-\sum_i\lambda_i^2+2\sum_i\lambda_i^2u_i^2,\\
\label{defp} q&=256\prod_{i=0}^3q_i.
\end{align} 
\end{theorem}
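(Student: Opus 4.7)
The plan is to invoke Choi's theorem and characterize positive semidefiniteness of $C_\Phi$ in the form \eqref{ChoiTbis} via two ingredients: non-negativity of its diagonal entries (which yields $q_i\ge 0$) together with non-negativity of $\det C_\Phi$, and a convex-interpolation argument that pins down the correct branch of the resulting quadratic condition on $t^2$.

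The technical core is an explicit computation of $\det(2C_\Phi)$. Writing
\begin{equation*}
2C_\Phi = \begin{pmatrix} 4q_0 & \bt^T \\ \bt & 4\Lambda_q + iT_\times \end{pmatrix}
\end{equation*}
with $\Lambda_q = \mathrm{diag}(q_1,q_2,q_3)$ and $T_\times$ the $3\times 3$ antisymmetric matrix associated with $\bt$, the Schur-complement formula gives $\det(2C_\Phi) = 4q_0\det(M) - \bt^T\,\mathrm{adj}(M)\,\bt$ for $M = 4\Lambda_q + iT_\times$. A direct expansion yields $\det(M) = 64\,q_1q_2q_3 - 4\sum_i q_i t_i^2$ (the $\pm i\,t_1t_2t_3$ contributions cancelling) and $\bt^T\,\mathrm{adj}(M)\,\bt = 16\sum_i q_jq_k\,t_i^2 - t^4$ for $(i,j,k)$ cyclic on $(1,2,3)$. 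Using the identity $16(q_0 q_i + q_j q_k) = 2(1+\lambda_i^2-\lambda_j^2-\lambda_k^2)$, read off directly from \eqref{qi_0}--\eqref{qi_3}, everything collapses to $\det(2C_\Phi) = t^4 - 2rt^2 + q$.

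Both directions of the equivalence then flow from this identity. For necessity, $C_\Phi\succeq 0$ immediately gives $q_i\ge 0$ from the diagonal. Since the off-diagonal of $C_\Phi$ is linear in $\bt$, the path $\tau\mapsto C_\Phi(\tau\bt) = (1-\tau)C_\Phi(0) + \tau C_\Phi(\bt)$ is affine and lies in the PSD cone for $\tau\in[0,1]$ by convexity; thus $\det\ge 0$ along it, confining the continuous trajectory $\{(\tau t)^2 : \tau\in[0,1]\} = [0,t^2]$ to the connected component of $\{s\ge 0 : s^2-2rs+q\ge 0\}$ containing $0$. An AM-GM argument applied to the representation $r = 8\sum_i u_i^2(q_0 q_i + q_j q_k)$ gives $r\ge 16\sqrt{q_0q_1q_2q_3} = \sqrt{q}$, so the quadratic has real non-negative roots and the relevant component is $[0, r-\sqrt{r^2-q}]$, yielding $t^2\le r-\sqrt{r^2-q}$. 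Sufficiency runs in reverse: starting from $C_\Phi(0)\succeq 0$, the hypothesis $t^2\le r-\sqrt{r^2-q}$ forces $\det\ge 0$ along the entire path, so no eigenvalue of $C_\Phi(\tau\bt)$ can cross zero and $C_\Phi(\bt)\succeq 0$. The main obstacle is the determinant computation itself: careful tracking of the imaginary entries of $M$ and $\mathrm{adj}(M)$ to verify the cancellation of the odd-degree imaginary contributions, followed by the non-obvious rearrangement of the quadratic form in $\bt$ via the identity relating $q_0 q_i + q_j q_k$ to $1+\lambda_i^2-\lambda_j^2-\lambda_k^2$.
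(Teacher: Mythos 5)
Your determinant identity is correct: expanding $\det(2C_\Phi)$ for the matrix \eqref{ChoiTbis} via the bordered (Schur-complement) formula does give $\det(2C_\Phi)=t^4-2rt^2+q$, the odd imaginary terms cancel as you say, and your AM--GM bound $r=8\sum_i u_i^2(q_0q_i+q_jq_k)\ge 16\sqrt{q_0q_1q_2q_3}=\sqrt q$ is a clean alternative to the paper's computation of $r_{\min}^2-q=4(\lambda_1\lambda_2-\lambda_3)^2$. The problem is that from then on your proof uses only two pieces of information about $C_\Phi$: its diagonal ($q_i\ge 0$) and its determinant along the affine path $\tau\mapsto C_\Phi(\tau\bt)$, and this is genuinely not enough. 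In the necessity direction the argument breaks down whenever $r^2=q$: then $\{s\ge 0:\ s^2-2rs+q\ge 0\}=[0,\infty)$ is connected, so ``the connected component containing $0$'' gives no bound on $t^2$ at all, while the theorem still asserts the nontrivial constraint $t^2\le r$. This degenerate case is not marginal: it occurs in particular when $\lambda_3=\lambda_1\lambda_2$ (with $|\lambda_3|$ minimal) and $\bu$ along the third axis, i.e.\ exactly the families containing the extremal channels of Proposition \ref{prop:extremal-channels} and the constant channels. Concretely, take $\lambda_1=\lambda_2=\lambda_3=0$ and $\bt=(0,0,t)$: then $q_i=1/4\ge 0$ and $\det(2C_\Phi(\tau\bt))=\bigl((\tau t)^2-1\bigr)^2\ge 0$ for every $\tau$ and every $t$, yet the eigenvalues of $2C_\Phi$ are $1\pm t$, so the map is a channel only for $t\le 1$. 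Your two invariants cannot distinguish $t=2$ from $t=1/2$, hence they cannot prove the claimed necessity $t^2\le r-\sqrt{r^2-q}=1$. Closing this gap requires a third scalar, e.g.\ the coefficient $b=1-\sum_i\lambda_i^2-t^2+2\lambda_1\lambda_2\lambda_3$ of the characteristic polynomial (equivalently a $3\times 3$ principal minor); this is precisely the paper's route, which combines Descartes' rule of signs applied to \eqref{px} with the interpolation inequalities \eqref{ineq} to show that $b\ge 0$ selects the lower root $r-\sqrt{r^2-q}$.

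The sufficiency direction also cannot rest on the principle you state: ``$\det\ge 0$ along the path, so no eigenvalue can cross zero'' is false in general, since two eigenvalues may cross simultaneously while the determinant stays non-negative (the path $\mathrm{diag}(1,1,1-2\tau,1-2\tau)$ starts PSD, keeps $\det\ge 0$, and ends non-PSD). Here your conclusion can be rescued, but only after an extra strictness observation you did not make: under the hypothesis with $t>0$ one must have $q>0$ (if $q=0$ then $r-\sqrt{r^2-q}=0$ would force $t=0$), hence all $q_i>0$, so $C_\Phi(0)\succ 0$, and $(\tau t)^2<t^2\le r-\sqrt{r^2-q}$ gives $\det(2C_\Phi(\tau\bt))>0$ strictly for $\tau\in[0,1)$; then no eigenvalue can vanish on $[0,1)$, and continuity at $\tau=1$ yields $C_\Phi(\bt)\succeq 0$. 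So sufficiency is patchable within your scheme, but necessity needs a genuinely new ingredient beyond the determinant-plus-diagonal data, as in the paper's proof.
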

\begin{proof}
Since trace preservation  of the map $\Phi$ follows from
\eqref{eq:T-diagonal}, the only property that needs to be checked is
complete positivity. By Choi's theorem, $\Phi$ is completely positive if and
only if the Choi matrix $C_\Phi$ is positive. The characteristic polynomial
$p(x)=\det(C_\Phi-xI_4)$ of $C_\Phi$ reads 
\begin{equation} \label{px}
p(x)=x^4-2x^3+\frac{a}{2}x^2-\frac{b}{2} x+\det C_\Phi\,,
\end{equation}
where 
\begin{eqnarray}
a&=&3-\sum_i\lambda_i^2-t^2\label{coeffa}\\
b&=&1-\sum_i\lambda_i^2-t^2+2\lambda_1\lambda_2\lambda_3\label{coeffb}\\
\det C_\Phi&=&\frac{1}{16}(t^4-2rt^2+q\label{detC})\,.
\end{eqnarray}
Since $C_\Phi$ is Hermitian its roots are real. By Descartes' rule of signs, all roots $x_i$ are positive iff the coefficients
of the powers of $x$ change sign from one coefficient to the next, that is,
iff $\det C_\Phi\ge 0$, $a\ge 0$, and $b\ge 0$. Since $q_i$ are diagonal elements of $C_\Phi$ in \eqref{ChoiTbis},
a necessary condition for positivity of $C_\Phi$ is that $q_i$ be all positive,
that is, $\lambda_i$ lie within the tetrahedron $\mathcal T$ of admissible
values of the 
unital case. As $q_i\ge 0$ ($i=0,\ldots,3$)
implies $|\lambda_i|\le 1$ ($i=1,\ldots,3$) and thus
$|\prod_{i=1}^3\lambda_i|\le 1$, one always has $a\geq b$, thus condition
$a\ge 0$ is a consequence of $b\ge 0$. We are therefore left with just two
generalized Fujiwara-Algoet (GFA) conditions,  
\begin{eqnarray}
\det C_\Phi&\ge &0\,\,\,
\Leftrightarrow \,\,\, t^4-2rt^2+q\geq 0, \mbox{ \quad and}\label{descartes3}\\
b&\ge &0\,\,\,
\Leftrightarrow \,\,\, t^2\leq
1-\sum_i\lambda_i^2+2\lambda_1\lambda_2\lambda_3\,.\label{descartes2} 
\end{eqnarray}

Note that these conditions only depend on the square of the $t_i$, whereas
the signs of the $\lambda_i$ matter. 
Condition \eqref{descartes3} is a polynomial of degree 2 in $t^2$, whose
discriminant $r^2-q$ is always positive. Indeed, let $\lambda_{3}$ be the
signed singular value with the smallest absolute value; then $r$, as a
function of $\mathbf{u}$, reaches its minimal value which is
$r_{\textrm{min}}=1-\lambda_1^2-\lambda_2^2+\lambda_3^2$. Therefore,
\begin{equation}
\label{qdeux}
r^2-q\geq r_{\textrm{min}}^2-q=4(\lambda_1\lambda_2-\lambda_3)^2\geq 0.
\end{equation}
Because of positive discriminant the polynomial in (\ref{descartes3}) has two real roots $r \pm \sqrt{r^2-q}$ and, due to positive coefficient in front of $t^4$, the condition \eqref{descartes3} is thus satisfied iff
\begin{equation}
t^2\leq r-\sqrt{r^2-q}\mbox{ or\quad} t^2\geq r+\sqrt{r^2-q}.
\label{eq:roots}
\end{equation}
We shall now show that the condition $b \ge 0$ selects the left root as the relevant one.

It turns out that, fixing $\lambda_j$, for any value of the $u_i$ two further inequalities hold,
\begin{equation}
\label{ineq}
 r-\sqrt{r^2-q}\leq  1-\sum_i\lambda_i^2+2\lambda_1\lambda_2\lambda_3 \leq
 r+\sqrt{r^2-q}\ . 
\end{equation}
To show these, one first notices that the quantities $r-\sqrt{r^2-q}$ and
$r+\sqrt{r^2-q}$ are, respectively, decreasing and increasing functions of
$r$ ($r-\sqrt{r^2-q}$ is decreasing because, taking a derivative, we get the
condition $\sqrt{r^2-q}\le r$, which is always satisfied within the
tetrahedron). Therefore, $r-\sqrt{r^2-q}$ is always smaller than or equal to
$r_\textrm{min}-\sqrt{r_\textrm{min}^2-q}$, while $r+\sqrt{r^2-q}$ is always
larger than or equal to  $r_\textrm{min}+\sqrt{r_\textrm{min}^2-q}$, where
$r_{\textrm{min}}=1-\lambda_1^2-\lambda_2^2+\lambda_3^2$ is the smallest
possible value of $r$ when $u_i$ varies (we again denote by $\lambda_3$ the
one with the smallest absolute value). Inequalities \eqref{ineq} will
therefore follow if we show that 
\begin{equation}
\label{ineqmin}
 r_{\textrm{min}}-\sqrt{r_{\textrm{min}}^2-q}\leq  1-\sum_i\lambda_i^2+2\lambda_1\lambda_2\lambda_3 \leq r_{\textrm{min}}+\sqrt{r_{\textrm{min}}^2-q}.
\end{equation}
Showing (\ref{ineqmin}) is equivalent to showing $f_- \ge 0$ and $f_+\le 0$, where we defined $f_\pm=1-\sum_i\lambda_i^2+2\lambda_1\lambda_2\lambda_3-( r_{\textrm{min}}\pm\sqrt{r_{\textrm{min}}^2-q})$. Plugging explicit expressions for $q=(1+\lambda_1+\lambda_2+\lambda_3)(1+\lambda_1-\lambda_2-\lambda_3)(1-\lambda_1+\lambda_2-\lambda_3)(1-\lambda_1-\lambda_2+\lambda_3)$ and $r_\textrm{min}=1-\lambda_1^2-\lambda_2^2+\lambda_3^2$ into $f_\pm$, and simplifying, results in
\begin{eqnarray}
f_-
&=& 2|\lambda_1\lambda_2-\lambda_3|+2\lambda_3(\lambda_1\lambda_2-\lambda_3),\nonumber \\
f_+&=& -2|\lambda_1\lambda_2-\lambda_3|+2\lambda_3(\lambda_1\lambda_2-\lambda_3).
\end{eqnarray}
As $|\lambda_3|\leq 1$, indeed $f_- \ge 0$ and $f_+ \le
0$. Inequalities (\ref{ineqmin}) are therefore true, and so are
(\ref{ineq}). 
The logic of the two directions of the proof of Theorem \ref{thm:GFA}
can now be seen summarized as 
follows:\\
1.) If $\Phi$ is a quantum channel, then $C_\Phi\ge
0$ and thus inequalities (\ref{descartes3}) and
(\ref{descartes2}) hold. In addition, since $q_i$ are the diagonal matrix
elements of $C_\Phi$ in an appropriate basis (eq.(\ref{ChoiTbis})), we
also have $q_i\ge 0$ and thus $q\ge 0$ which implies inequalities
(\ref{ineq}). Since by (\ref{descartes2}) $t^2\le 1-\sum_i
\lambda_i^2+2\lambda_1\lambda_2\lambda_3$, we have from
(\ref{ineq}) that $t^2\le 1-\sum_i
\lambda_i^2+2\lambda_1\lambda_2\lambda_3\le r+\sqrt{r^2-q}$. Therefore, the
second inequality in (\ref{eq:roots}) is satisfied only when
$t^2=r+\sqrt{r^2-q}$. This equality implies equality in the second
inequality of
 (\ref{ineq}), 
which in turn implies that $r=r_{\textrm{min}}$. Therefore,
$f_{+}=0$ and thus $|\lambda_i|=1$ which corresponds to a unitary, and thus unital,
channel, which is excluded in the statement of the Theorem. The only remaining possibility is that the left inequality in
(\ref{eq:roots}) be satisfied. In any case we 
have that $C_\Phi\ge 0$ which implies  $t^2\leq r-\sqrt{r^2-q} \mbox{ and }
q_i\ge 0$. \\
2.) If $t^2\leq r-\sqrt{r^2-q}$ we have from
  (\ref{eq:roots}) that $\det C_\Phi\ge 0$. Furthermore, since by
  assumption $q_i\ge 
  0$, inequalities (\ref{ineq}) are valid and we have thus $t^2\le
  r-\sqrt{r^2-q}\le 1-\sum_i
\lambda_i^2+2\lambda_1\lambda_2\lambda_3$. The latter chain of inequalities
  implies by  
  (\ref{descartes2}) that $b\ge 0$, which together with $\det
  C_\Phi\ge 0$ gives $C_\Phi\ge 0$ and thus the complete positivity of
  channel $\Phi$. 
\end{proof}

Let us make now some remarks on the conditions appearing in the theorem above. First, note that we exclude unital channels ($t=0$), since in that case the vector $\mathbf{u}$ is not defined; one can use the usual Fujiwara-Algoet conditions \eqref{FAC} in that case. Also, note that for any fixed set of $\lambda_j$, the condition $\det C_\Phi \ge 0$ is
necessary, but not sufficient. The set of translation vectors
$\bt$ satisfying $\det C_\Phi \ge 0$ is composed of a bounded
part corresponding to $t^2\leq r-\sqrt{r^2-q}$, and an unbounded part
with $t^2\geq r+\sqrt{r^2-q}$. The second condition $b \ge 0$ then
selects $t^2\leq r-\sqrt{r^2-q}$ as the one resulting in a completely
positive map. 

Conditions equivalent to \eqref{descartes5} were already found in \cite[Corollary 2]{Ruskai02}, in the form of three inequalities. Inequality \eqref{descartes5} gives the
maximal modulus square of the translation vector ${\bf t}$ of the ellipsoid
compatible with positivity of $C_\Phi$. This has a very natural geometric interpretation: it gives the maximum displacement of a given ellipsoid in a given direction such that the corresponding linear map is completely positive.
In particular, if one of the $q_i$
is zero (i.e., the corresponding channel is represented by a point
on a face of the tetrahedron), then $q=0$ and the condition
\eqref{descartes5} implies that one must have $\bt=0$. The
implication holds in the opposite direction, namely, if the GFA
conditions \eqref{descartes5} only allow $\bt=0$, then $q=0$. Thus, one of
the $q_i$ vanishes, implying that $\bm{\lambda}$ is a point on the surface
of the 
tetrahedron. As soon as $\bm{\lambda}$ is 
within the tetrahedron the right-hand side of the GFA condition 
(\ref{descartes5}) is nonzero and also non-unital channels with such
$\lambda$s exist.

Note also that in the particular case discussed in \cite{Fujiwara99}, where $t_1=t_2=0$ (so that, in our notation, $u_1=u_2=0$ and $u_3=1$), the quantity $\sqrt{r^2-q}$ appearing in the second equation in \eqref{descartes5} simplifies to 
\begin{equation}
\sqrt{r^2-q}= 2|\lambda_3 - \lambda_1\lambda_2|,
\end{equation}
in such a way that \eqref{descartes5} is equivalent to the (generalized) FAC
\begin{equation}
t_3^2 \leq (1 \pm \lambda_3)^2 - (\lambda_1 \pm \lambda_2)^2.
\end{equation}

\subsection{Classifying channels by their pure output}

The goal of this subsection is to classify all qubit channels by the number of pure outputs that they can have. Our main result, Theorem~\ref{thm:pure-output-alternative}, prohibits qubit channels (apart from unitary conjugations) that would have more than 2 pure outputs. We prove the theorem by elementary geometric means, however, note that it follows also from the results presented in ~\cite{Ruskai02}. Namely, it has been shown~\cite{Ruskai02} that extreme or quasi-extreme channels (i.e., those from the interior of tetrahedron edges) that are not unitary conjugations can have at most 2 pure outputs. Because every channel can be written as a convex combination of extremal channels, as soon as one of the channels in the convex sum is not a unitary conjugation, at most 2 pure outputs are possible. If we have a convex combination of unitary conjugations only, then we know that a convex combination of two unitary conjugations is a quasi-extremal channel, again having at most 2 
 pure outputs, leading to the same conclusion.

\begin{definition}
The \emph{pure output} (PO) of a quantum channel $\Phi$ is the set of
pure states 
in the image of $\Phi$: 
\begin{equation}
	\mathrm{PO}(\Phi) = \Phi(\mathcal D_d) \cap \mathcal P_d\,.
\end{equation}
\end{definition} 
Recall that $\mathcal D_d$ is the set of density matrices and $\mathcal P_d\subset
\mathcal D_d$ the set of pure states. 

The pure output of a unital channel inherits the central symmetry of the
output ellipsoid in the following precise sense:

\begin{lemma}\label{lem:PO-symmetry}
The pure output of a unital channel $\Phi$ is centrally symmetric,
i.e.
\begin{equation}
\rho(\br)=\frac{1}{2}(I_2+\br.\bm{\sigma}) \in \mathrm{PO}(\Phi)
\Leftrightarrow \rho(-\br)=\frac{1}{2}(I_2-\br.\bm{\sigma}) \in
\mathrm{PO}(\Phi).  
\end{equation}
\end{lemma}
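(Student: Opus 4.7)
The plan is to exploit directly the fact that a unital channel acts linearly on the Bloch ball. The key observation is that, according to Eq.~\eqref{T} and the remark following it, any unital qubit channel $\Phi$ induces the \emph{linear} map $\br \mapsto M_\Phi \br$ on Bloch vectors (the affine part $\bt_\Phi$ vanishes by unitality). Meanwhile, the set of Bloch vectors $B=\{\bs\in\mathbb{R}^3:\|\bs\|\leq 1\}$ is centrally symmetric, and pure states are precisely those with $\|\br\|=1$, a condition that is invariant under $\br\mapsto -\br$. These three ingredients are all that is needed.

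Concretely, I would argue as follows. Suppose $\rho(\br)\in\mathrm{PO}(\Phi)$. By definition, there exists a density matrix $\rho(\bs)=\tfrac12(I_2+\bs.\bm{\sigma})$ with $\|\bs\|\le 1$ such that $\Phi(\rho(\bs))=\rho(\br)$, which, using $\bt_\Phi=0$, translates to $\br = M_\Phi \bs$, together with $\|\br\|=1$. Now consider the state $\rho(-\bs)$; since $\|-\bs\|=\|\bs\|\le 1$ it is a legitimate density matrix, and by linearity
\begin{equation}
\Phi(\rho(-\bs)) \;=\; \tfrac12\bigl(I_2 + (M_\Phi(-\bs)).\bm{\sigma}\bigr) \;=\; \tfrac12(I_2 - \br.\bm{\sigma}) \;=\; \rho(-\br).
\end{equation}
Because $\|-\br\|=\|\br\|=1$, the output $\rho(-\br)$ is again pure, hence $\rho(-\br)\in\mathrm{PO}(\Phi)$. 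The converse implication is identical by exchanging the roles of $\br$ and $-\br$.

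There is essentially no obstacle here: the whole statement reduces to the trivial fact that the linear image of a centrally symmetric set under a linear map is itself centrally symmetric, intersected with the (centrally symmetric) unit sphere. The only point worth flagging explicitly is that unitality is used exactly once, to kill the translation $\bt_\Phi$ so that $\Phi$ restricted to Bloch vectors becomes linear rather than merely affine; without this, the image ellipsoid would generally be displaced and the symmetry would fail.
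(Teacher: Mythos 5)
Your proof is correct and follows essentially the same route as the paper, which simply observes that both $\Phi(\mathcal D_d)$ and $\mathcal P_d$ are centrally symmetric so their intersection is too; your argument just spells out explicitly why unitality (vanishing $\bt_\Phi$) makes the output set centrally symmetric.
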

\begin{proof}
Since both $\Phi(\mathcal D_d)$ and $\mathcal P_d$ are centrally symmetric, so is their intersection.
\end{proof}

In the following we show that quantum channels can be classified according
to their pure output. An arbitrary single qubit channel maps the input
states -- a Bloch ball -- to output states within a shifted ellipsoid
\cite{King01,Ruskai02}. Therefore, we start by proving the following
elementary Euclidean geometry results that will help us understand possible
intersections between a sphere and an ellipsoid. In what follows, we shall
abuse language and say that a set is \emph{contained inside} a circle
(resp. a sphere) if it is a subset of the corresponding disc (resp. ball).
\begin{lemma}
\label{lemcircle}
Let $C$ be a circle in $\mathbb R^2$ and consider
 an 
ellipse $E$ contained inside the circle. If the intersection 
$C \cap E$ contains
three distinct points $M,N,O$, then $E=C$.  
\end{lemma}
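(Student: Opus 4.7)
\emph{Proof sketch.} My plan is to combine an elementary tangency observation with Bezout's theorem for plane conics. First, I would show that at each of $M, N, O$, the curves $C$ and $E$ share the same tangent line. Both are smooth simple closed curves, and by hypothesis $E$ lies in the closed disc bounded by $C$. If $E$ met $C$ transversally at a point $P \in \{M,N,O\}$, then the two tangent lines would differ and an arbitrarily small arc of $E$ through $P$ would cross into the exterior of the disc, contradicting the containment. Hence $E$ and $C$ must be tangent at each of $M$, $N$, $O$.

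Next, I would convert the three tangencies into a count of intersection multiplicities. Writing the circle and the ellipse as the zero sets of real degree-$2$ polynomials $c(x,y)$ and $e(x,y)$, respectively, a tangential intersection at a smooth point contributes intersection multiplicity at least $2$ (after choosing local coordinates with the common tangent as the $x$-axis, both polynomials take the form $y+O(x^2,xy,y^2)$, so the restriction of one to the zero locus of the other vanishes to order $\geq 2$). Therefore the total intersection number of $C$ and $E$ in $\mathbb{CP}^2$ is at least $6$.

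Finally, I would apply Bezout's theorem: two projective plane curves of degrees $d_1$ and $d_2$ sharing no common component intersect in exactly $d_1 d_2$ points counted with multiplicity. For two conics this upper bound is $4$, so the inequality $6 > 4$ forces $c$ and $e$ to share an irreducible component. Since a genuine circle and a genuine ellipse correspond to irreducible conics, $c$ and $e$ must be proportional, and hence $E = C$.

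The main obstacle is rigorously justifying the tangency step, for which the key local fact is that two smooth plane curves crossing transversally at a point divide any sufficiently small neighborhood into four sectors, so each curve has points on both sides of the other; since $C$ locally separates its neighborhood into the interior and exterior of the disc, transversality would force part of $E$ outside the disc near $P$. One should also briefly note that a degenerate ``ellipse'' (segment or point) cannot meet a circle in three points, so $E$ may be assumed non-degenerate and its defining quadratic irreducible.
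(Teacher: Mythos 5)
Your proof is correct, but it takes a genuinely different route from the paper. The paper's argument is a one-liner: it observes (implicitly using the same internal-tangency fact you prove explicitly) that both $E$ and $C$ pass through $M,N,O$ and are tangent to the same two lines at two of these points, and then invokes a classical uniqueness theorem for conics (a conic is determined by three non-collinear points together with tangent lines at two of them, citing Olivier's 1845 text). You instead make the tangency at all three contact points explicit via the transversality/separation argument, convert each tangency into local intersection multiplicity at least $2$, and conclude by B\'ezout: two conics without a common component meet in at most $2\cdot 2 = 4$ points counted with multiplicity, so the total of at least $6$ forces a common component, and irreducibility of a genuine circle and a genuine ellipse then gives $E=C$. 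What each approach buys: the paper's proof is shorter but leans on a somewhat obscure classical reference and leaves the tangency step unstated; yours is self-contained modulo B\'ezout's theorem (a standard tool), handles the degenerate-ellipse case explicitly, and the tangency lemma you isolate is exactly the geometric content the paper's citation quietly presupposes. Minor polish: your local normal form ``$y+O(x^2,xy,y^2)$'' is a slightly informal way to assert that a tangential intersection of two smooth conics has multiplicity $\geq 2$; either cite this standard fact directly or phrase the local computation via a defining function with nonvanishing gradient, as you already do in the transversality step.
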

\begin{proof}
This immediately follows from the fact that an ellipse (and, more generally,
a conic section) is uniquely determined by the condition that it passes
through three non-collinear points and is tangent to two given lines passing
through two of these points (for a proof, see
\cite{Olivier1845}, p. 114). Since the hypothesis
implies that both $E$ and $C$ have this property, one must have $E=C$ by
uniqueness. 
\end{proof}

\begin{lemma}\label{lem3P}
Let $S$ be the unit sphere in $\mathbb R^3$ and consider
an ellipsoid $E$ contained inside the sphere. If the intersection $S \cap E$
contains 
three distinct points $M,N,O$, then it contains the circle determined by
those three points. 
\end{lemma}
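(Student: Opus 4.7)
The plan is to reduce the three-dimensional statement to the already proved two-dimensional Lemma \ref{lemcircle} by slicing with the plane determined by the three points. First, I would observe that $M,N,O$ are non-collinear: any line in $\mathbb{R}^3$ meets the sphere $S$ in at most two points, so three distinct points of $S$ cannot lie on a common line. Hence they determine a unique affine plane $\Pi$, and the intersection $C=S\cap\Pi$ is a well-defined circle passing through $M,N,O$ — precisely the ``circle determined by those three points'' that we need to exhibit inside $S\cap E$.

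Next, I would analyze the cross-section of the ellipsoid by $\Pi$. The hypothesis ``$E$ contained inside $S$'' means that the (solid) ellipsoid $E$ lies in the closed ball bounded by $S$, so $E\cap\Pi$ lies in the closed disk $D$ bounded by $C$. The bounding curve $E'=\partial E\cap\Pi$ is either empty, a single point, or a (possibly circular) ellipse; since it contains the three distinct non-collinear points $M,N,O$, the degenerate cases are excluded and $E'$ is a genuine ellipse in $\Pi$ contained in $D$. Moreover, the three points $M,N,O$, being in both $S$ and $E\subset\overline{\mathrm{ball}}$, cannot lie in the interior of $E$ (otherwise a small neighborhood of such a point in $E$ would exit the ball), so they actually lie on $\partial E$, hence on $E'$.

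I then apply Lemma \ref{lemcircle} inside the plane $\Pi$: the ellipse $E'$ is contained inside the circle $C$ and meets it in the three distinct points $M,N,O$, which forces $E'=C$. Consequently the whole circle $C$ lies on $\partial E$, and in particular $C\subset E$. Combined with $C\subset S$, this yields $C\subset S\cap E$, which is the conclusion.

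The only delicate step is the second one: verifying that the planar cross-section $E\cap\Pi$ is genuinely a two-dimensional region bounded by a non-degenerate ellipse, so that Lemma \ref{lemcircle} applies cleanly. This rests on the convexity and smoothness of $E$ together with the existence of three distinct boundary points in $\Pi$, which rules out tangency of $\Pi$ to $E$. Once this is established, the rest is essentially a two-line consequence of the planar lemma.
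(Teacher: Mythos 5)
Your proof is correct and follows essentially the same route as the paper: slice by the plane through $M,N,O$, obtain the circle $S\cap\Pi$ and the ellipse cut from the ellipsoid, and invoke Lemma \ref{lemcircle} to conclude they coincide. Your extra care (non-collinearity of the three points, the cross-section being a non-degenerate ellipse, and $M,N,O$ lying on the boundary of the ellipsoid) only makes explicit what the paper leaves implicit.
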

\begin{proof}
Consider the plane determined by the points $M,N,O$. Its intersection with
the sphere $S$ defines a circle $C$ and its intersection with the ellipsoid
$E$ defines an ellipse $F$. Obviously, $M,N,O \in C \cap F$ and $F$
is inside in 
$C$. Using lemma \ref{lemcircle}, we have $C=F$, which is the conclusion.  
\end{proof}

\begin{proposition}\label{ES}
Let $S$ be the unit sphere in $\mathbb R^3$ and consider  an
ellipsoid $E$ contained inside $S$. Then, the intersection $S \cap E$ 
is one of the following:  
\begin{enumerate}
\item the empty set;
\item a point;
\item two points;
\item a circle;

\item the whole sphere $S$.
\end{enumerate}
\end{proposition}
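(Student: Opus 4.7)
The plan is to reduce everything to Lemma~\ref{lem3P} via a case analysis on the cardinality of $X := S \cap E$. Since a line meets the sphere in at most two points, any three distinct points of $S$ are automatically non-collinear; hence $|X| \in \{0,1,2\}$ yields outcomes (1)--(3) of the proposition, while $|X| \geq 3$ forces $X$ to contain an entire circle $C$ by Lemma~\ref{lem3P}. If $X = C$ we are in case (4), so the remaining task is to show that $X \supsetneq C$ forces $X = S$, i.e.~case (5).

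Fix $P \in X \setminus C$ and let $\Pi_C$ denote the plane of $C$. Note that $\Pi_C \cap S = C$ (a plane meets a sphere in at most one circle), so $P \notin \Pi_C$. I would show $X = S$ by proving that every $Q \in S$ lies in $X$. Given such a $Q \neq P$, the key is to exhibit two distinct points $A, B \in C$ for which $P, Q, A, B$ are coplanar: the plane through them meets $S$ in a circle containing all four, and since $A, B, P \in X$ are non-collinear (as $A, B \in \Pi_C$ but $P \notin \Pi_C$), Lemma~\ref{lem3P} forces this circle to lie in $X$, whence $Q \in X$.

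The existence of such $A, B$ follows from an elementary pencil-of-planes argument. Since $Q \notin C$ implies $Q \notin \Pi_C$, the line $PQ$ either meets $\Pi_C$ in a single point $R$ or is parallel to $\Pi_C$. Correspondingly, the planes through $PQ$ cut $\Pi_C$ in the pencil of lines through $R$, or in the family of lines of $\Pi_C$ parallel to the direction of $PQ$. In either situation a secant to $C$ of the required kind is available -- for instance, when $R$ lies outside the disk bounded by $C$, any line through $R$ strictly between the two tangents from $R$ to $C$ is a secant -- and any such secant supplies the desired $A, B$.

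The main obstacle I anticipate is keeping the pencil-of-planes argument uniform across the edge cases: the sub-case $R$ outside the closed disk bounded by $C$, the degenerate sub-case $R \in C$, and the limiting sub-case $PQ \parallel \Pi_C$. All of these ultimately reduce to the elementary plane-geometry fact that through any point of $\Pi_C$, and along any direction of $\Pi_C$, secants to $C$ exist, with only finitely many tangent exceptions in the pencil -- so generic choices work, and the argument goes through.
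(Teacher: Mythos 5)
Your proposal is correct and takes essentially the same route as the paper: reduce everything to Lemma \ref{lem3P}, and when the intersection contains a circle $C$ plus an extra point $P$, show every $Q \in S$ lies in $S \cap E$ by choosing a plane through $P$ and $Q$ that meets $C$ in two points and applying Lemma \ref{lem3P} again. The only difference is that the paper sidesteps your pencil-of-planes case analysis by taking the plane through $P$, $Q$ and the center $F$ of $C$, whose trace on the plane of $C$ is a diameter and therefore automatically cuts $C$ in two points.
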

\begin{proof}
Using the previous lemma, we only need to consider the case when the
intersection contains 4 non-coplanar points $M,N,O,P$ and to show that the ellipsoid coincides with the sphere. By the results
already proved, the intersection contains in fact the whole circle $C$
determined by $M,N,O$. Let $Q$ be any point on the sphere; we will show that
$Q$ belongs to the intersection $S \cap E$. To this end, consider the plane
determined by the points $P, Q$ and $F$, where $F$ is the center of the
circle $C$. This plane cuts the sphere in a circle $C'$, which intersects
$C$ in two points $X$ and $Y$. Since $X, Y$ and $P$ belong to the
intersection $S \cap E$, so does the circle $C'$ (by Lemma \ref{lem3P}), and
thus, in particular, the point $Q$, finishing the proof. 
\end{proof} 

The case of an intersection in the form of a circle can be further
restricted. First, we state the following lemma, which is Theorem LXXIII in
the Supplement of \cite{Allen1822}. 

\begin{lemma}\label{euclid}
Let $C$ be a circle and $E$ an ellipse contained inside the circle and
touching the circle from the inside in exactly two points $P,Q$. Then the large
ellipse axis is parallel to $PQ$, while the line of the small axis bisects the segment $PQ$. 
\end{lemma}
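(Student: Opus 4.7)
The plan is to reduce to standard coordinates and then show that the center of $C$ lies on the line of the minor axis of $E$; both conclusions of the lemma will then follow from the reflection symmetry of the resulting configuration. Concretely, I would place $E$ in standard position with center at the origin and major axis along the $x$-axis, semi-axes $a>b>0$, and write $C$ as the circle of center $(x_0,y_0)$ and radius $r$. It suffices to show that $x_0=0$. Once this is known, the reflection $(x,y)\mapsto(-x,y)$ preserves both $E$ and $C$ and hence permutes the two tangency points, forcing them to be of the form $P=(p,q)$ and $Q=(-p,q)$. Then $PQ$ is horizontal (parallel to the major axis) and its midpoint $(0,q)$ lies on the line $x=0$ of the minor axis, exactly as stated.

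To pin down $x_0=0$, I would parametrize the ellipse as $(a\cos\theta,b\sin\theta)$ and study the signed squared-distance function $F(\theta)=(a\cos\theta-x_0)^2+(b\sin\theta-y_0)^2-r^2$. The hypothesis that $E\subset C$ with tangency at exactly two points is equivalent to $F\le 0$ on $[0,2\pi)$ with equality at precisely two values $\theta_1,\theta_2$; these are global maxima of $F$, so both $F$ and $F'$ vanish at each. Isolating the term $2by_0\sin\theta$ in the equation $F=0$ and squaring using $\sin^2\theta=1-\cos^2\theta$ yields a polynomial equation of degree four in $u=\cos\theta$ whose roots (with multiplicity) correspond to the zeros of $F$. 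The double-root requirement at two distinct values forces this quartic to be the square of a quadratic in $u$; matching coefficients produces the alternative $x_0\,y_0=0$.

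It remains to exclude the case $y_0=0$ with $x_0\ne 0$. Setting $y_0=0$ one has $F'(\theta)=2\sin\theta\bigl[a x_0-(a^2-b^2)\cos\theta\bigr]$, so the non-axis critical angles satisfy $\cos\theta_0=a x_0/(a^2-b^2)$, and the tangency condition $F(\theta_0)=0$ forces $r^2=b^2\bigl(1-x_0^2/(a^2-b^2)\bigr)\le b^2$. But then at the ellipse vertex $(-\mathrm{sgn}(x_0)\,a,0)$ one computes $F=(a+|x_0|)^2-r^2>a^2-b^2>0$, contradicting $E\subset C$. (The degenerate subcase in which both critical points lie on the $x$-axis forces $x_0=0$ and $r=a$ directly from $|a-x_0|=|a+x_0|=r$.) Hence $x_0=0$, finishing the proof.

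The main technical obstacle is the algebraic bookkeeping in the perfect-square coefficient match; the rest --- the reduction to standard coordinates, the reflection-symmetry deduction once $x_0=0$, and the vertex estimate ruling out the remaining case --- is routine once the setup is in place.
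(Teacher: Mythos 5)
Your approach is necessarily a different route from the paper's, because the paper does not prove this lemma at all: it is quoted as Theorem LXXIII of the Supplement of \cite{Allen1822}, a classical synthetic statement about an ellipse bitangent to a circumscribing circle. A self-contained coordinate proof like yours is therefore a genuine addition rather than a rephrasing. The parts you carry out are correct: with $y_0=0$ the tangency condition at a non-axis critical angle does give $r^2=b^2\bigl(1-x_0^2/(a^2-b^2)\bigr)\le b^2$, which contradicts containment at the far vertex; and the perfect-square coefficient match does deliver $x_0y_0=0$ (writing $Q=h^2$ with $g(u)=(a^2-b^2)u^2-2ax_0u+c$, the factor $h-g$ must be a nonzero constant when $y_0\neq0$, and the vanishing of the linear coefficient of $(h-g)(h+g)$ then forces $x_0=0$). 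The reduction to $x_0=0$ followed by the reflection argument is exactly the right skeleton.

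Two sub-cases are passed over and should be stated, though each is a one-line fix inside your own framework. First, the claim that ``the double-root requirement at two distinct values forces this quartic to be the square of a quadratic'' tacitly assumes that the two tangency parameters give distinct values of $u=\cos\theta$ and that neither tangency point is a major-axis vertex; if $\theta_2=-\theta_1$ the quartic has only one double root in $[-1,1]$ and need not be a perfect square, and if $\sin\theta_i=0$ the passage from $F(\theta_i)=F'(\theta_i)=0$ to a double root of $Q$ at $u=\pm1$ is not justified. Both exceptions lead directly to $y_0=0$ (subtracting $F(\theta_1)=F(-\theta_1)=0$ gives $4by_0\sin\theta_1=0$; and at $\sin\theta_i=0$ one has $F'(\theta_i)=-2by_0\cos\theta_i=0$), after which your $y_0=0$ exclusion applies, but this must be said. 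Second, in the final step the reflection could fix each tangency point rather than swap them; that would put both on the minor axis, forcing $r=b<a$ and contradicting containment, so it should be ruled out explicitly (likewise note $a>b$ strictly, since a circular $E$ cannot be internally tangent to $C$ in exactly two points). With these additions the argument is complete.
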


A circle intersection is now of the following type.
\begin{lemma}\label{symell}
The only way for an ellipsoid $E$ with half axes $0<a,b,c\le 1$ to touch the
sphere $S$ from the inside in the form of a circle of non-zero radius, is to have an ellipsoid that is rotationally symmetric about one of 
its three axes and displaced in the direction of this axis.
\end{lemma}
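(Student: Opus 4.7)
The circle $C = E \cap S$ singles out an axis: since every point of $C$ is at unit distance from $O$, the foot of the perpendicular from $O$ onto the plane $\Pi$ of $C$ equals its center $F$, so $OF \perp \Pi$. Writing $F = h\bn$ for a unit vector $\bn$, the radius of $C$ is $r = \sqrt{1-h^2} > 0$. My plan is to establish in two steps that (i) the center $\bt$ of $E$ lies on the line $OF$, and (ii) one of the principal axes of $E$ is aligned with $\bn$ and the two remaining semi-axes coincide; together these say that $E$ is rotationally symmetric about $OF$ with displacement $\bt$ along this axis.

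For step (i) I would argue by contradiction. Assume $\bt \notin OF$, and let $\pi^{\ast}$ denote the unique plane containing both $OF$ and $\bt$. Inside $\pi^{\ast}$, the set $\pi^{\ast}\cap S$ is a great circle $C^{\ast}$ and $\pi^{\ast}\cap E$ is an ellipse $E^{\ast}$ contained inside $C^{\ast}$ with center exactly $\bt$ (by central symmetry of $E$ about $\bt$, since $\pi^{\ast}\ni\bt$). Moreover $\pi^{\ast}\cap C = \{P,Q\}$ is a diameter of $C$ lying on the chord $\pi^{\ast}\cap\Pi$, which is perpendicular to $OF$ inside $\pi^{\ast}$, and $E^{\ast}$ touches $C^{\ast}$ tangentially exactly at $P,Q$ (the planar tangency follows from the 3D tangency of $E$ to $S$ along $C$). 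Lemma~\ref{euclid} then forces the minor axis of $E^{\ast}$ to lie on the line $OF$; its midpoint, which is the center $\bt$ of $E^{\ast}$, therefore lies on $OF$, contradicting $\bt\notin OF$.

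For step (ii), with $\bt\in OF$ established, I would choose coordinates so that $OF$ is the $z$-axis, $\bt = (0,0,t_3)$, $F = (0,0,h)$, and parametrize $C$ by $P(\phi) = (r\cos\phi, r\sin\phi, h)$. Let $A$ denote the shape matrix of $E$, so $E = \{x:(x-\bt)^{\mathrm T} A (x-\bt)=1\}$. Tangency of $E$ to $S$ along $C$ means that at every $P(\phi)$ the ellipsoid normal $A(P(\phi)-\bt)$ is parallel to the sphere normal $P(\phi)$, so
\begin{equation}
A\bigl(P(\phi)-\bt\bigr) = \rho(\phi)\,P(\phi)
\end{equation}
for some scalar $\rho(\phi)$. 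Expanding this vector equation component by component as a Fourier series in $\phi$ and matching coefficients, the $\cos 2\phi,\sin 2\phi$ modes yield $A_{13}=A_{23}=0$, the $\sin\phi$ mode yields $A_{12}=0$, and the $\cos\phi$ modes yield $A_{11}=A_{22}=A_{33}(h-t_3)/h$. Hence $A = \mathrm{diag}(A_{11},A_{11},A_{33})$ in these coordinates: $E$ is rotationally symmetric about the $z$-axis, with its two semi-axes perpendicular to $\bn$ both equal to $1/\sqrt{A_{11}}$.

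The main subtlety is in step (i): Lemma~\ref{euclid} only constrains the direction of the minor axis of the cross-sectional ellipse, and for a \emph{generic} plane $\pi$ through $OF$ the center of $\pi\cap E$ is not the orthogonal projection of $\bt$ onto $\pi$ (the correct formula involves $A^{-1}$), so ``center of $\pi\cap E$ on $OF$'' does not immediately deliver ``$\bt$ on $OF$''. The trick is to choose the particular plane $\pi^{\ast}$ that passes through $\bt$, so that central symmetry of $E$ forces the cross-sectional center to be exactly $\bt$ and the conclusion becomes immediate. Step (ii) is then a routine Fourier computation; the special cases $h=0$ (in which $C$ is a great circle) and $h=t_3$ (in which $\bt=F$) need a brief separate treatment by directly inspecting $E\cap S=C$, and in each of them the conclusion still holds, with the ellipsoid forced to be rotationally symmetric about $OF$.
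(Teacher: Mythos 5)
Your proof is correct, and it is half the paper's argument and half a genuinely different one. Step (i) is essentially the paper's own first step: the paper also applies Lemma~\ref{euclid} to planes through the symmetry axis and, for the particular plane containing that axis and the ellipsoid centre $L$, uses the fact that a central plane section of an ellipsoid is an ellipse centred at $L$ to conclude $L\in Oz$; your contradiction phrasing and your explicit checks (the section meets the touching circle in exactly two points, planar tangency follows from the spatial one) are the same argument made slightly more carefully. Step (ii) is where you diverge. The paper stays geometric: it applies Lemma~\ref{euclid} a second time, to the plane through the $z$-axis and the smallest principal axis of $E$, arguing via the closest antipodal points $X,Y$ that this principal axis lies on $Oz$, and only then substitutes the parametrized circle into the standard-form equation of $E$ to force $a=b$ (that last step uses only incidence of the circle with $E$). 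You instead use the first-order tangency data along the whole circle, $A\bigl(P(\phi)-\bt\bigr)=\rho(\phi)P(\phi)$, and Fourier matching, which in one computation gives $A_{13}=A_{23}=A_{12}=0$ and $A_{11}=A_{22}$, thereby subsuming both the paper's second geometric step and its final algebraic identity; the price is that you must eliminate $\rho(\phi)$ via the third component, which requires $h\neq 0$ --- precisely the great-circle case you defer (the paper's choice of axis, the perpendicular to the circle's plane through its centre, which automatically contains $O$, lets its argument run unchanged in that case, and your own sketched treatment of $h=0$ is fine). One small remark: your second flagged case $h=t_3$ needs no separate treatment at all, since your own matching then yields $A_{11}=A_{22}=0$, impossible for a nondegenerate ellipsoid, so that configuration simply cannot occur.
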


\begin{proof}
Let us call $P$ the plane containing the touching circle and $C$ the circle
  centre. We orient the $z$-axis of $\mathbb R^3$ so that it is
  perpendicular to $P$ and  
  passes through $C$, while the origin $O$ of $\mathbb R^3$ is the center of the sphere $S$. First, we shall show that 
one of the ellipsoid's axes has to be
the $z$-axis. Then we will show that the ellipsoid has to be rotationally
symmetric about the $z$-axis.

Let us consider an arbitrary plane $R$ containing the $z$-axis. Such a plane
cuts the sphere $S$ in a 
  circle, while it cuts the ellipsoid $E$ in an ellipse. It also contains
  two points $M$ and $N$ from the intersecting circle $C$. Therefore, in 
  plane $R$ one has an ellipse that touches a circle from inside in exactly
  two points, $M$ and $N$. Note that the $z$-axis bisects the segment
  $MN$ perpendicular to it. From Lemma~\ref{euclid}, it follows that the small ellipse 
  axis is the $z$-axis. We shall consider now two such particular planes $R$. 
  
First, choose $R_1$ to be the plane containing the $z$ axis and the
center $L$ of the ellipsoid. It is a known fact that the intersection
of the ellipsoid $E$ with the plane passing through its center $L$ is
an ellipse with center $L$. Since the small axis of the ellipse is $Oz$,
we infer $L \in Oz$, so the center of the ellipsoid lies on the $z$
axis.    

Let us now choose a second plane $R_2$, containing the $z$ axis
(and thus the center $L$ of $E$) and the smallest axis of
the ellipsoid $E$. The intersection $E \cap R_2$ contains the points
$X,Y$ which are the antipodal points ($L$ is the middle of $XY$) closest
to each other of $E$. They are also the points closest to each other and
symmetric with respect to $L$ of the 
ellipse  $E \cap R_2$. It follows that $X,Y \in
Oz$ so that one of the axes of the ellipsoid is the $z$ axis.  

Points on the surface of the ellipsoid $E$
  therefore satisfy 
\begin{equation}
\frac{x^2}{a^2}+\frac{y^2}{b^2}+\frac{(z-z_0)^2}{c^2}=1,
\end{equation}
while the touching circle can be parametrized as
$(r\cos{\phi},r\sin{\phi},z_1)$ with some nonzero $r$ and fixed $z_1$. For
any $\phi$ these points should lie on the surface of the ellipsoid,
therefore 
\begin{equation}
r^2+(a^2/b^2-1)r^2\sin^2{\phi} =a^2-\frac{a^2}{c^2}(z_1-z_0)^2
\end{equation}
should hold. The RHS is independent of $\phi$ and so should be the
LHS. Therefore, we conclude that $a=b$. The ellipsoid must be rotationally
symmetric, and its displacement can be only along the symmetry axis $Oz$. 
\end{proof}

One might think that qubit channels can realize the five different
types of pure output suggested by Lemma \ref{ES}.  However, we will now show
that this is not the case. Rather, the following result holds (see Figure
\ref{fig:PO}): 

\begin{figure}
\includegraphics[width=0.4\textwidth]{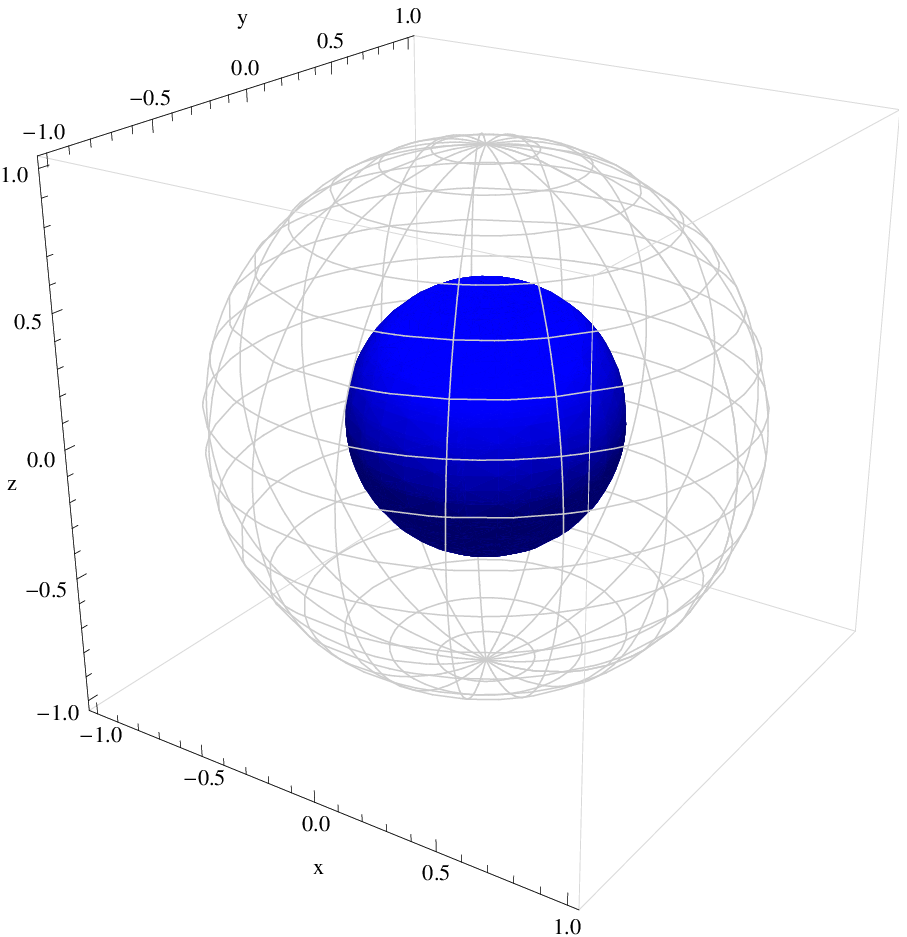}
\includegraphics[width=0.4\textwidth]{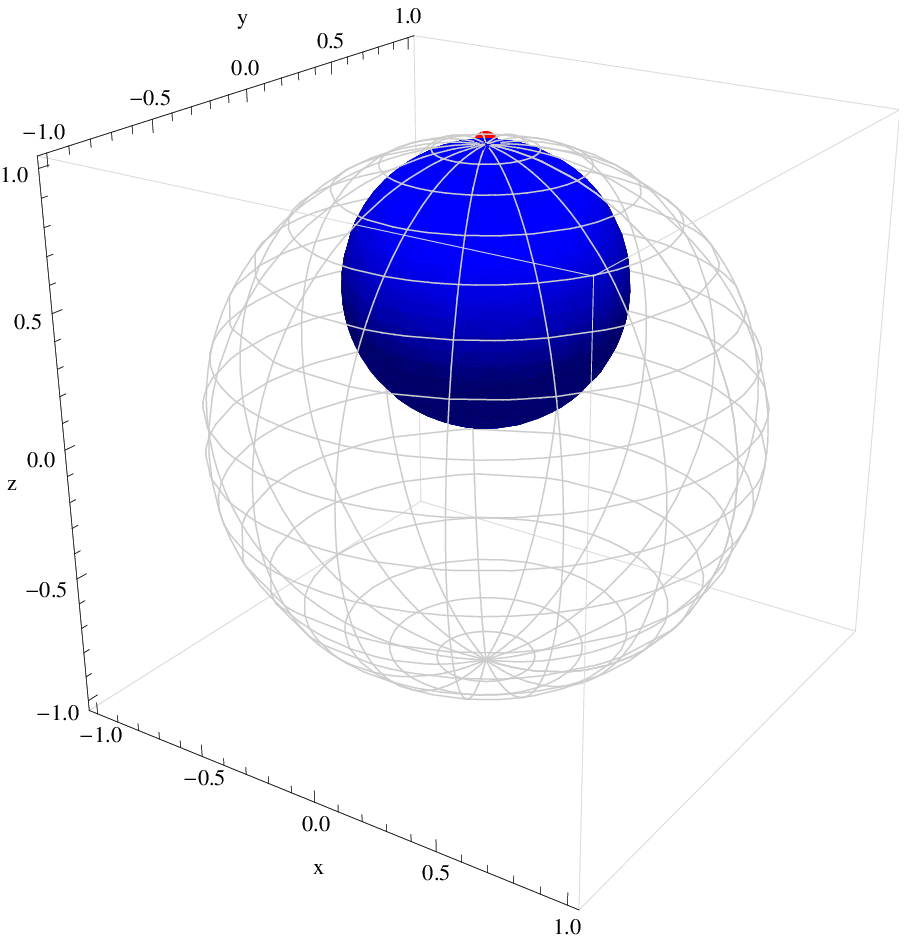}\\
\includegraphics[width=0.4\textwidth]{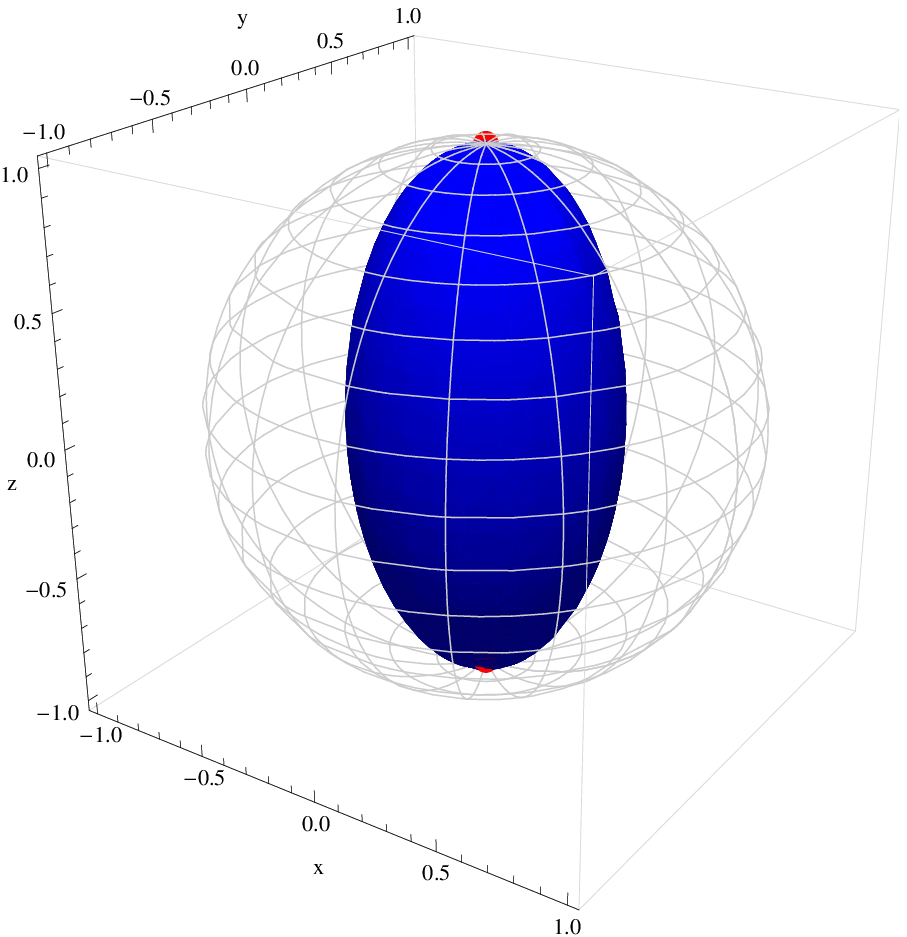}
\includegraphics[width=0.4\textwidth]{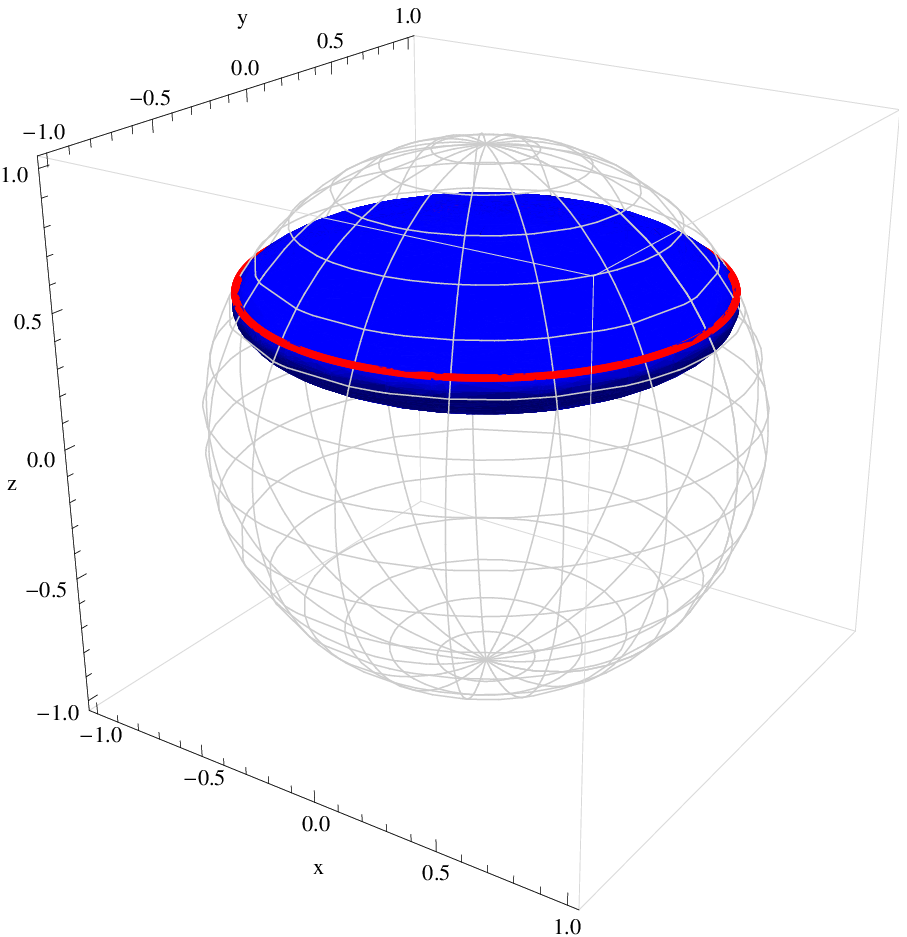}
\caption{Different ellipsoids inside the Bloch sphere with,
  respectively, empty, 1 point, 2 points and circular pure outputs that
  could be possible from purely geometrical considerations.
The
  circular case does not correspond to the output of a completely positive
  quantum channel.} 
\label{fig:PO}
\end{figure}

\begin{theorem}\label{thm:pure-output-alternative}
Let $\Phi \in \mathcal C_2$ be a qubit channel. One of the following holds: 
\begin{enumerate}
	\item $\mathrm{PO}(\Phi) = \emptyset$, the channel has no pure
	output, all output states are mixed; 
	\item $\mathrm{PO}(\Phi) = \{\xi\}$, $\xi \in \mathcal P_2$, the
	channel has a unique pure output $\xi$; 
	\item $\mathrm{PO}(\Phi) = \{\xi,\zeta\}$, $\xi,\zeta \in \mathcal
	P_2$, the 
	channel has exactly two pure outputs $\xi,\zeta$; 
	\item $\mathrm{PO}(\Phi) = \mathcal P_2$, all pure states are
	outputs of $\Phi$. In this case, $\Phi$ is a unitary conjugation
	$\Phi(X) = UXU^\dagger$, for some unitary matrix $U$.  
\end{enumerate}
\end{theorem}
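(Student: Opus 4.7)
The plan is to combine the classification of sphere--ellipsoid intersections in Proposition~\ref{ES} with Lemma~\ref{symell} and the generalized Fujiwara--Algoet conditions of Theorem~\ref{thm:GFA}. A qubit channel $\Phi$ maps the Bloch ball $\mathcal D_2$ onto an ellipsoid $E=\Phi(\mathcal D_2)$ contained in the Bloch sphere $S$, and under the Bloch correspondence $\mathcal P_2 \leftrightarrow S$ the pure output $\mathrm{PO}(\Phi)$ is precisely $E\cap S$. Proposition~\ref{ES} reduces $E\cap S$ to one of five shapes: empty, one point, two points, a circle, or all of $S$. The first three give cases 1--3 of the theorem verbatim, so only the circle case (which must be excluded) and the whole-sphere case (which must be identified with a unitary conjugation) require further argument.

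For the whole-sphere case, $S\subseteq E\subseteq \overline{\mathcal D_2}$ forces $E=S$, hence $\bt_\Phi=0$ and $M_\Phi$ is orthogonal with $|\lambda_i|=1$ for $i=1,2,3$. A case check of the FAC \eqref{FAC} under $|\lambda_i|=1$ rules out the four sign patterns with $\lambda_1\lambda_2\lambda_3=-1$, leaving only those with $M_\Phi\in SO(3)$; the isomorphism \eqref{isomorphism} then yields $\Phi=\Phi_U$ for some $U\in\mathcal U_2$.

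The core of the proof is to rule out a circular pure output of positive radius. Assume $E\cap S$ is such a circle. By Lemma~\ref{symell}, $E$ is rotationally symmetric about one of its principal axes and displaced only along that axis. Pre- and post-composing $\Phi$ with unitary channels (which preserves complete positivity and only rigidly rotates $E$), I bring $T_\Phi$ to the canonical form \eqref{eq:T-diagonal} with the symmetry axis aligned to $z$, so that $\lambda_1=\varepsilon_1 a$, $\lambda_2=\varepsilon_2 a$, $\lambda_3=c$ with $\varepsilon_{1,2}\in\{\pm 1\}$, and $\bt_\Phi=(0,0,t_3)$ with $t_3\ne 0$. Matching the cross-section radii of $E$ and $S$ as functions of the height $z$ and demanding a unique touching height (vanishing discriminant of the resulting quadratic in $z$) gives an explicit formula $t_3^2=(1-a^2)(a^2-c^2)/a^2$; requiring that this height $z_1$ lie in $[-1,1]$ further simplifies to $|c|\le a^2$.

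The final comparison is with the CP upper bound supplied by Theorem~\ref{thm:GFA}, in the simplified form noted after it: $t_3^2\le \min\bigl[(1+\lambda_3)^2-(\lambda_1+\lambda_2)^2,\ (1-\lambda_3)^2-(\lambda_1-\lambda_2)^2\bigr]$. A short algebraic manipulation should show that the difference between this CP bound and the geometric $t_3^2$ factors cleanly as $\pm(|c|-a^2)(|c|+3a^2)/a^2$, with the sign determined by $\varepsilon_1\varepsilon_2$ and $\mathrm{sgn}\,c$. In the physical regime $|c|\le a^2$ this is non-positive and vanishes only at $|c|=a^2$, where $|z_1|=1$ and the alleged circle collapses to a single pole, contradicting positivity of its radius. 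The hard part I anticipate is precisely this algebraic comparison carried out in parallel across the four sign choices for $(\varepsilon_1,\varepsilon_2,\mathrm{sgn}\,c)$; however, the symmetry of the two GFA inequalities under simultaneous sign flips of any two of $\lambda_1,\lambda_2,\lambda_3$ should reduce every case to the same factorization.
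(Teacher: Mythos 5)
Your proposal follows essentially the same route as the paper's proof: Proposition~\ref{ES} plus Lemma~\ref{symell} reduce the problematic case to an axially symmetric ellipsoid displaced along its symmetry axis, the tangency computation gives the same relation $t_3^2=(1-a^2)(a^2-c^2)/a^2$ as Eq.~\eqref{t3}, and the specialized GFA bound is then shown to be violated whenever the touching circle has positive radius ($c<a^2$), with your extra spelling-out of the whole-sphere case (FAC excluding improper rotations) being a harmless elaboration of what the paper states without detail. One small correction: the difference between the CP bound and $t_3^2$ factors as $(c-a^2)(c+3a^2)/a^2$ only in the branch where the $4a^2$ term pairs with $(1+c)^2$; in the other branch it equals $(c-3a^2)(c+a^2)/a^2$, which is strictly negative for all $c\le a^2$, so the claimed uniform $\pm$-factorization is off but the conclusion is unaffected.
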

\begin{proof}
The only allowed forms of pure outputs have the geometric forms given in
 Proposition \ref{ES}. 
 Examples of pure outputs different from a circle are easily found
 \cite{Bengtsson06}: The 
 fully mixing channel $\rho\to {I}_2/2$
 maps the entire Bloch sphere to
 its center so that the pure output is the empty set. A 
decaying channel leads to an ellipsoid that touches the sphere 
 in the South pole and nowhere else.  A phase flip channel
 $\Phi_{PF}$ (see eq.(\ref{phipf})) shrinks the Bloch sphere 
 in $x$ and $y$-directions, but leaves the $z$-direction untouched, such
 that the resulting ellipsoid touches the Bloch sphere in the North and
 South poles and nowhere else.  Finally, unitary conjugation corresponds to
 a rotation of the Bloch sphere, and thus has as pure output all pure
 states $\mathcal P_2$. It remains to be shown that a pure output in the
 form of a circle on the Bloch
 sphere does not correspond to a completely positive qubit quantum channel.
According to Lemma \ref{symell}, the demonstration can be reduced to
quantum channels with  
$t_{1}=t_{2}=0$ and $|\lambda_1|=|\lambda_2|=a$, $|\lambda_3|=c$. Note that
 $a,b,c$ are geometrical quantities (half-axes of the ellipsoid) and are
 always positive, whereas the signed singular values $\lambda_i$ can have
 either sign.  In order for the
 ellipsoid to touch the sphere in a circle (and thus at $x^2+y^2>0$),
 one needs 
 $c<a$.  Due to 
 rotational symmetry we need only consider the plane $y=0$ in order to
 obtain a relation between $t_3,a,c$ required for the ellipsoid to touch the
 sphere from inside. 
The $z$ coordinates of the ellipsoid and sphere in the upper half space
read, respectively,
\begin{eqnarray}
z_{E} & = & c\sqrt{1-\left(\frac{x-t_1}{a}\right)^{2}-\left(\frac{y-t_2}{b}\right)^{2}}+t_3\\
z_S & = & \sqrt{1-x^{2}-y^{2}}\,.\label{4}
\end{eqnarray}
If $E$ touches $S$ in a point $x,y,z$, we must have $z_E=z_S=z$ in that
 point, and the tangential planes to $S$ and $E$ in that point must be
 identical.
Two non-trivial solutions of  
 the touching condition $dz_E/dx=dz_S/dx$ are then found
 as $x=\pm\sqrt{\frac{c^2-a^4}{c^2-a^2}}$. In order that the solutions be
 real, we need $c<a^2$. Reinserting this into the second touching condition
 $z_E=z_{S}$ leads to the shift  
\begin{equation} \label{t3}
t_3=\frac{\sqrt{(1-a^2)(a^2-c^2)}}{a}\,.
\end{equation}
The GFA condition \eqref{descartes5} in the present case reads $t^2\leq
(c\pm1)^2-4a^2$, where $\pm$ comes from the two possible signs of
$\lambda_3$. Since here $t^2=t_3^2$ and  
\begin{equation}
(c \pm 1)^2-4a^2-t_3^2=\frac{c^2\pm 2c a^2-3a^4}{a^2}\leq 0
\end{equation}
for any $c<a^2$, the GFA condition can never be satisfied, and thus the 
solutions are not qubit channels. Therefore there is no quantum channel with
a pure output in the form of a circle on the Bloch sphere. 
\end{proof}

Note that when comparing Proposition \ref{ES} with Theorem
\ref{thm:pure-output-alternative}, one sees that the case of the
circle is not physical, i.e. there is no completely positive map which
has a circle as the pure output set. This is a generalization of the ``no
pancake'' theorem, which states that there is no qubit quantum channel
that maps the Bloch sphere to a disk touching the sphere (see \cite{Blume-Kohout10}).

Channels of type 3 in Theorem \ref{thm:pure-output-alternative} can be
either unital or non-unital. For distinguishing the two the following
proposition is useful. 
\begin{proposition}\label{prop:orthog}
If the two output states $\xi,\zeta$ in type 3 states from Theorem
\ref{thm:pure-output-alternative} are orthogonal (i.e.~antipodal on the
Bloch sphere), then the channel is unital, otherwise it is non-unital.  
\end{proposition}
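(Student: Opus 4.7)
The plan is to establish the two directions of the equivalence separately, using the geometric picture of $\Phi$ as an affine map sending the Bloch ball to an ellipsoid $E$ of center $\bt$ contained inside the Bloch sphere $S$, with $\mathrm{PO}(\Phi)$ being exactly the tangency set $E\cap S$.

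For the direction ``antipodal $\Rightarrow$ unital'', I would argue geometrically. If the two touching points $\xi,\zeta$ are antipodal on $S$, then the chord $[\xi,\zeta]$ has length $2$, i.e.\ equal to the diameter of $S$. Since $E\subseteq S$, no two points of $E$ are farther apart than $2$. But the diameter of an ellipsoid equals the length of its longest axis, and this maximum is attained only at the two endpoints of that axis. Hence $\xi$ and $\zeta$ must be the endpoints of a principal axis of $E$ of length $2$. Its midpoint, which is the center $\bt$ of $E$, therefore coincides with the origin of the Bloch sphere, so $\bt=0$ and $\Phi$ is unital.

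For the converse ``non-antipodal $\Rightarrow$ non-unital'', I would proceed by contradiction. Assume $\xi\ne\zeta$ are not antipodal and yet $\Phi$ is unital. By Lemma~\ref{lem:PO-symmetry}, the pure output of a unital channel is centrally symmetric on the Bloch sphere, so $-\xi$ and $-\zeta$ also belong to $\mathrm{PO}(\Phi)$. The assumptions $\xi\ne\zeta$, $\xi\ne-\zeta$ (non-antipodality), and the fact that no point of the unit sphere equals its own antipode, together imply that $\{\xi,-\xi,\zeta,-\zeta\}$ are four pairwise distinct elements of $\mathrm{PO}(\Phi)$. This contradicts $\Phi$ being of type~3: applying Proposition~\ref{ES} to $E\cap S$ with four points forces the intersection to be either a circle or all of $S$, and by Theorem~\ref{thm:pure-output-alternative} the first case is impossible for a quantum channel, while the second corresponds to a unitary conjugation (type~4, $\mathrm{PO}(\Phi)=\mathcal P_2$), not to type~3.

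The only delicate ingredient is the Euclidean fact invoked in the first paragraph, namely that any pair of points of an ellipsoid realizing the ellipsoid's diameter must be the endpoints of its longest principal axis; this can be justified in one line by writing the ellipsoid in its principal-axis frame and maximizing $\|x-y\|$ over $x,y$ in the unit ball of an appropriately weighted norm. The rest is a routine appeal to Lemma~\ref{lem:PO-symmetry}, Proposition~\ref{ES}, and Theorem~\ref{thm:pure-output-alternative}, and I do not anticipate any serious obstacle.
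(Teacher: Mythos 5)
Your proof is correct and follows essentially the same route as the paper: the antipodal-implies-unital direction uses the identical observation that a diameter-realizing pair of points of an ellipsoid must be symmetric about its center (so the ellipsoid's center is the midpoint of $\xi\zeta$, i.e.\ the origin), and the converse rests on Lemma \ref{lem:PO-symmetry}. The only difference is cosmetic: where you detour through Proposition \ref{ES} and Theorem \ref{thm:pure-output-alternative} to reach a contradiction from four distinct pure outputs, the paper simply notes that a non-antipodal two-point set is not centrally symmetric (indeed, four distinct points already contradict the type-3 hypothesis $\mathrm{PO}(\Phi)=\{\xi,\zeta\}$ directly).
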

\begin{proof}
The two most distant points on an ellipsoid are on its largest axis, so the
center of the ellipsoid is the middle of the segment $\xi\zeta$, i.e. the
center of the Bloch sphere, if $\xi,\zeta$ are antipodal. It follows that
the channel must be unital, $t = 0$.

The other direction follows from Lemma \ref{lem:PO-symmetry}: the pure output is non symmetric, so the channel cannot be unital.
\end{proof}

Lemma \ref{lem:PO-symmetry} can be used to show that all channels with one PO,
i.e. of type 2 in Theorem \ref{thm:pure-output-alternative}, are
non-unital. Channels with zero PO can be either unital or non-unital.  

\section{Universal set for extremal qubit channels}

We investigate in this section the important role \emph{extremal} 
quantum qubit channels have to play with respect to divisibility and 
universal families.

For general dimensions, necessary and sufficient conditions for a 
quantum channel to be extreme have been found by \cite{Landau-Streater93}, using ideas from \cite{Choi75}.

\begin{theorem}[\cite{Landau-Streater93}]
A quantum channel $\Phi$ having Kraus operators $\{A_i\}_{i=1}^k$ is an 
extremal point of the convex set of quantum channels iff the set of 
matrices $\{A_i^\dagger A_j\}_{i,j=1}^k$ is linearly independent.
\end{theorem}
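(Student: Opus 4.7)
The plan is to prove both implications via a Choi-matrix argument that links the Kraus operators $\{A_i\}$ to the support of the Choi matrix $C_\Phi$. The bridge between the ``extremal'' and ``linearly independent'' sides is a Hermitian matrix $H=(h_{ij})\in\mathcal{M}_k(\mathbb{C})$ that simultaneously encodes a nontrivial convex decomposition of $\Phi$, via the maps $\Phi_\pm(\rho)=\sum_{i,j}(\delta_{ij}\pm h_{ij})A_i\rho A_j^\dagger$, and a linear dependence $\sum_{i,j}h_{ij}A_j^\dagger A_i=0$ among the products. Throughout I would reduce to the case where $\{A_i\}$ is a minimal Kraus family (linearly independent $A_i$): if the $A_i$ are dependent, say $\sum_i c_i A_i=0$, then already $\sum_i c_i A_i^\dagger A_j=0$ for every $j$, and one checks easily that $\Phi$ is then non-extremal. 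After this reduction, $\{\mathrm{vec}(A_i)\}$ is a basis for the support of $C_\Phi$.

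For the direction ``extremal $\Rightarrow$ $\{A_i^\dagger A_j\}$ independent'' I would argue contrapositively. Suppose $\Phi=\tfrac12(\Phi_++\Phi_-)$ with $\Phi_+\neq\Phi_-$, and set $\Delta=\tfrac12(\Phi_+-\Phi_-)$. Then $\Delta$ is a nonzero Hermiticity-preserving, trace-annihilating linear map such that $\Phi\pm\Delta$ are both completely positive, so their Choi matrices satisfy $-C_\Phi\leq C_\Delta\leq C_\Phi$, forcing $\mathrm{supp}(C_\Delta)\subseteq\mathrm{supp}(C_\Phi)=\mathrm{span}\{\mathrm{vec}(A_i)\}$. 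Hence $C_\Delta=\sum_{i,j}h_{ij}\,\mathrm{vec}(A_i)\mathrm{vec}(A_j)^\dagger$ for a nonzero Hermitian $H$; un-vectorizing gives $\Delta(\rho)=\sum_{i,j}h_{ij}A_i\rho A_j^\dagger$, and the trace-annihilating condition on $\Delta$ becomes $\sum_{i,j}h_{ij}A_j^\dagger A_i=0$, contradicting linear independence.

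Conversely, given a nonzero Hermitian $H$ with $\sum_{i,j}h_{ij}A_j^\dagger A_i=0$ (the Hermitian and anti-Hermitian parts of any complex dependence can be separated, and at least one is nonzero), I would rescale so that $\|H\|_\infty\leq 1$, whence $I_k\pm H\succeq 0$. The maps $\Phi_\pm$ defined above are then trace-preserving (by $\sum_i A_i^\dagger A_i=I$ combined with the relation on $H$) and completely positive, since their Choi matrices are $\sum_{i,j}(I_k\pm H)_{ij}\,\mathrm{vec}(A_i)\mathrm{vec}(A_j)^\dagger$, which is PSD as the Gram pairing of the PSD weight $I_k\pm H$ with the vectors $\mathrm{vec}(A_i)$. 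Finally $\Phi=\tfrac12(\Phi_++\Phi_-)$ with $\Phi_+\neq\Phi_-$ by minimality of $\{A_i\}$, so $\Phi$ is not extremal.

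The step I expect to be the main obstacle is the support inclusion $\mathrm{supp}(C_\Delta)\subseteq\mathrm{supp}(C_\Phi)$ in the first direction: since $C_\Delta$ is only Hermitian and not positive, this must be extracted from the sandwich $-C_\Phi\leq C_\Delta\leq C_\Phi$ by a careful spectral argument (any eigenvector of $C_\Delta$ with nonzero eigenvalue orthogonal to $\mathrm{supp}(C_\Phi)$ would violate one of the two bounds). Once this is in hand, the remaining ingredients -- the Hermiticity reduction for $H$, the complete-positivity check via PSD weights, and the minimality bookkeeping -- are essentially routine.
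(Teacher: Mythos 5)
The paper does not prove this statement at all: it is quoted from Landau--Streater (and goes back to Choi's Theorem 5), so there is no internal proof to compare with. Your argument is the standard Choi-type proof of that result, and its core is sound: the correspondence between a perturbation $\Delta$ with $\Phi\pm\Delta$ completely positive and a Hermitian coefficient matrix $H$ supported on $\mathrm{span}\{\mathrm{vec}(A_i)\}$, the translation of trace preservation into $\sum_{i,j}h_{ij}A_j^\dagger A_i=0$, and the converse construction of $\Phi_\pm$ from a Hermitian dependence with $\|H\|_\infty\le 1$ are exactly the right ingredients, and each step checks out. For the support inclusion, note that your parenthetical sketch (an eigenvector of $C_\Delta$ with nonzero eigenvalue orthogonal to $\mathrm{supp}(C_\Phi)$) does not by itself give the inclusion, since such eigenvectors need not split along $\mathrm{supp}(C_\Phi)$ and its complement; the clean argument is: for $v\in\ker C_\Phi$ one has $v^\dagger(C_\Phi\pm C_\Delta)v\ge 0$, hence $v^\dagger C_\Delta v=0$, and then positive semidefiniteness of $C_\Phi\pm C_\Delta$ together with the vanishing of its quadratic form at $v$ forces $(C_\Phi\pm C_\Delta)v=0$, so $C_\Delta v=0$. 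You flagged this as the delicate point, and it is easily repaired along these lines.

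The one genuinely wrong step is your preliminary reduction: the claim that if the Kraus family $\{A_i\}$ is linearly dependent then ``one checks easily that $\Phi$ is then non-extremal.'' Extremality is a property of the channel, not of a chosen Kraus representation; for example the unitary channel $\rho\mapsto U\rho U^\dagger$ admits the redundant Kraus family $\{U/\sqrt 2,\,U/\sqrt 2\}$, for which all products $A_i^\dagger A_j$ equal $U^\dagger U/2$ and are therefore dependent, yet the channel is extreme. So for arbitrary (non-minimal) Kraus families the ``only if'' direction of the statement is simply false, and no reduction can rescue it. The correct reading --- consistent with how the paper introduces Kraus operators as reshaped eigenvectors of the Choi matrix, so that $k$ is the Kraus rank and the $A_i$ are linearly independent --- is to assume minimality from the outset; with that hypothesis your proof is complete, and the false side remark should just be deleted.
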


This result was used in Ref.~\cite{Ruskai02} to provide a more 
geometric picture in the qubit case, which we recall below.
\begin{proposition}[\cite{Ruskai02}]\label{prop:extremal-channels}
A map $\Phi :\M_2(\C) \to \M_2(\C)$ as in \eqref{eq:T-diagonal} is an 
extremal quantum channel if, up to some permutation of indices,
\begin{align}
\lambda_3  &= \lambda_1 \lambda_2, \\
t_3^2 &= (1-\lambda_1^2)(1-\lambda_2^2), \quad t_1=t_2=0.
\end{align}
This is equivalent to the existence of angles $u \in [0, 2\pi)$ and $v 
\in [0, \pi)$ such that
\begin{align}
\lambda_1 &= \cos u,\nonumber \\
\lambda_2 &= \cos v,\nonumber \\
\lambda_3 &= \cos u \cos v,\nonumber \\
t_3 &= \sin u \sin v.
\label{eq:extremal}
\end{align}
Without sacrificing generality one can assume $|\lambda_1| \ge |\lambda_2|$ in the above parametrization. Channels with $u=0$ and $v \neq 0$ are the phase flip channels considered in Section \ref{sec:edges-tetrahedron}; they are not extremal channels and must be excluded from the set of parameters above.
\end{proposition}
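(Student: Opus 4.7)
My plan is to combine the Landau-Streater criterion stated just above with the Choi-matrix analysis already used for Theorem \ref{thm:GFA}. First, by Landau-Streater, an extremal channel on $\mathcal M_2(\mathbb C)$ has Kraus rank $k$ with $k^2\le 4$, so $k\le 2$. The case $k=1$ gives unitary channels, which appear in \eqref{eq:extremal} at the parameter values $u,v\in\{0,\pi\}$; the bulk of the argument concerns the case $k=2$. For a channel already in the canonical form \eqref{eq:T-diagonal}, Kraus rank $2$ is equivalent to $C_\Phi$ from \eqref{ChoiT} having rank exactly $2$, which (since $C_\Phi$ is Hermitian) amounts to the vanishing of the coefficients of $x^0$ and $x^1$ in the characteristic polynomial \eqref{px}: $\det C_\Phi=0$ and $b=0$, with $b$ from \eqref{coeffb}.

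The key reduction is to show that these two equations force $t_1=t_2=0$ up to a permutation of indices. For this I would invoke the chain of inequalities \eqref{ineq} derived in the proof of Theorem \ref{thm:GFA}: the condition $\det C_\Phi=0$ means $t^2 = r\pm\sqrt{r^2-q}$, while $b=0$ means $t^2 = 1-\sum_i\lambda_i^2+2\lambda_1\lambda_2\lambda_3$, and these can be jointly satisfied only if the inequalities in \eqref{ineq} collapse to equalities. This in turn forces $r$ to attain its minimum $r_{\text{min}}$ as a function of $\bu$, which happens iff $\bu$ is aligned with the coordinate axis corresponding to the smallest $|\lambda_i|$; relabeling so that this is $|\lambda_3|$, we obtain $t_1=t_2=0$. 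With $t_1=t_2=0$ the Choi matrix \eqref{ChoiT} becomes block-diagonal after the reordering of indices $\{1,4\}\cup\{2,3\}$. The remaining rank-$2$ content forces both $2\times 2$ blocks to have rank $1$; the alternative (one block of rank $0$, the other of rank $2$) turns out to correspond to convex combinations of two unitaries and must be discarded. Setting the two block-determinants to zero gives $(1+\lambda_3)^2=t_3^2+(\lambda_1+\lambda_2)^2$ and $(1-\lambda_3)^2=t_3^2+(\lambda_1-\lambda_2)^2$; subtracting yields $\lambda_3=\lambda_1\lambda_2$, adding yields $t_3^2=(1-\lambda_1^2)(1-\lambda_2^2)$, and the angular form \eqref{eq:extremal} then follows from setting $\lambda_1=\cos u$, $\lambda_2=\cos v$.

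For the converse (sufficiency) I would exhibit explicit Kraus operators $A_1=\mathrm{diag}(\cos\alpha,\cos\beta)$ and $A_2=\bigl(\begin{smallmatrix}0 & \sin\beta\\ \sin\alpha & 0\end{smallmatrix}\bigr)$ with $u=\beta-\alpha$, $v=\alpha+\beta$, check by direct computation of $\Phi(\sigma_i)$ that they yield the $T$-matrix of \eqref{eq:extremal}, and verify the Landau-Streater linear-independence condition for $\{A_i^\dagger A_j\}_{i,j=1,2}$. A short calculation shows that the four products are linearly independent iff $\sin u\neq 0$ and $\sin v\neq 0$; the excluded cases collapse the Kraus pair to $(\sqrt{1-p}\,I,\sqrt{p}\,\sigma_x)$ (up to a unitary conjugate), i.e.\ to the phase-flip edge channels of Section \ref{sec:edges-tetrahedron}, justifying their exclusion from the parametrization.

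The main obstacle I foresee is the alignment step: one must carefully track when equality holds throughout \eqref{ineq}, especially in the degenerate situation where two of the $|\lambda_i|$ coincide and the minimising direction of $\bu$ becomes non-unique. A secondary concern is ruling out the $(0,2)/(2,0)$ block-rank splittings: while these correspond to non-extremal convex combinations of two unitary conjugations, one should verify directly that their Landau-Streater products $\{A_i^\dagger A_j\}$ collapse to a linearly dependent set.
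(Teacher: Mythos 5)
You should first note that the paper itself offers no proof of Proposition~\ref{prop:extremal-channels}: it is quoted verbatim from Ruskai, Szarek and Werner \cite{Ruskai02}, so there is no internal argument to compare yours against; your proposal supplies a proof the paper omits, and it is in fact close in spirit to the original one in \cite{Ruskai02}. The skeleton is sound. Landau--Streater gives Kraus rank $k\le 2$; for a channel in the form \eqref{eq:T-diagonal} rank $2$ of the (positive semidefinite) Choi matrix is indeed equivalent to the vanishing of the $x^0$ and $x^1$ coefficients of \eqref{px}; your Kraus pair $A_1=\mathrm{diag}(\cos\alpha,\cos\beta)$, $A_2$ antidiagonal, with $u=\beta-\alpha$, $v=\alpha+\beta$, reproduces exactly \eqref{eq:extremal} (I checked $\Phi(\sigma_i)$), and the Landau--Streater set $\{A_i^\dagger A_j\}$ is linearly independent precisely when $\sin u\sin v\neq 0$, which justifies excluding the $u=0$, $v\neq 0$ phase-flip boundary. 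The block computation with $t_1=t_2=0$ also checks out: the $(1,1)$ rank splitting gives $\lambda_3=\lambda_1\lambda_2$ and $t_3^2=(1-\lambda_1^2)(1-\lambda_2^2)$, while a $(0,2)$ or $(2,0)$ splitting forces $t_3=0$ and $\lambda_3=\mp1$, $\lambda_2=\mp\lambda_1$, i.e.\ an edge channel that is a convex mixture of two unitary conjugations, hence not extreme, as you anticipated.

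Two points need to be made explicit to close the argument. First, the machinery of Theorem~\ref{thm:GFA} (the vector $\bu=\bt/t$ and the chain \eqref{ineq}) presupposes $t\neq 0$, so the unital case must be split off; this is harmless, since a unital extreme qubit channel is a unitary conjugation (alternatively, with $\bt=0$ the block argument gives $(1-\lambda_1^2)(1-\lambda_2^2)=0$, i.e.\ a quasi-extreme edge channel). Also, when $q=0$ the function $x\mapsto x-\sqrt{x^2-q}$ is no longer strictly decreasing, but then the left root is $0$ and one is back to $t=0$. Second, the degeneracy you flag is a genuine issue with the statement read literally: if the two smallest $|\lambda_i|$ coincide, equality in \eqref{ineq} only forces $\bt$ into the corresponding coordinate plane, and extreme channels of the form \eqref{eq:T-diagonal} with two nonvanishing components of $\bt$ do exist (e.g.\ rotate $\bm{\lambda}=(\cos u,0,0)$, $\bt=(0,0,\sin u)$ about the $x$ axis: $M$ stays diagonal, extremality is preserved). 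The resolution is that in this case the singular value decomposition of $M_\Phi$ is not unique, and the residual rotation commuting with $\Lambda$ in the degenerate subspace can be used to align $\bt$ with an axis; so the proposition must be read ``up to the unitary pre- and post-rotations defining the canonical form,'' not up to index permutations alone. With these two clarifications your proof is complete, and as a side benefit the equality analysis (equality in the left inequality of \eqref{ineq} plus $f_-=0$) already yields $\lambda_3=\lambda_1\lambda_2$ and $t^2=(1-\lambda_1^2)(1-\lambda_2^2)$ directly, making the block-determinant step optional.
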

For generic angles such a channel maps the Bloch 
sphere to a shifted ellipsoid that touches the sphere in two points (the two points might degenerate to one point for some special values of the angles). As shown 
in Ref.~\cite{Ruskai02}, two pure input states at points 
$(\pm \cos{\theta},0,\sin{\theta})$ are mapped to two pure output states at 
points $(\pm \cos{\omega},0,\sin{\omega})$. Angles $\theta$ and $\omega$ can 
be interpreted as latitudes of the pure inputs and outputs. They are
related to $u$ and $v$ by $\sin{\theta}=\tan{u}/\tan{v}$ and  
$\sin{\omega}=\sin{u}/\sin{v}$, or, inversely, channel matrix elements are 
$\cos{u}=\cos{\omega}/\cos{\theta}$, $\cos{v}=\tan{\theta}/\tan{\omega}$ 
and $t_3=(\cos^2{\theta}-\cos^2{\omega})/(\cos^2{\theta}\sin{\omega})$. 
Sometimes it will be more useful to use a parametrization with angles 
$\theta$ and $\omega$ instead of $u,v$.

Regarding the signs of generalized singular values $\lambda_j$ we can see 
that either none or two can be negative, while $t_3$ can be either 
positive or negative. 
Because we are interested in channels obtained by concatenation, and because 
we always allow for any number of unitary conjugations, we can always remove
any negative signs in $\lambda_j$  
or $t_3$ by unitary channels, as the following lemma shows.
\begin{lemma}
Any extremal qubit channel $\Phi$, written in the canonical form of 
Eq.(\ref{eq:extremal}), can be transformed by unitary conjugations into an extremal channel $\Phi'$ of 
the same form, but with all $\lambda_j$ and $t_3$ non-negative. 
\end{lemma}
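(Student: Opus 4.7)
The plan is to exploit the simple action on the canonical form \eqref{eq:extremal} of sandwiching $\Phi$ between two unitary channels. The key tool is the family of $\pi$-rotations about the coordinate axes, $U_k = \exp(i\pi\sigma_k/2)$ for $k\in\{x,y,z\}$, which by the isomorphism \eqref{isomorphism} correspond to the diagonal rotation matrices $R_{U_x}=\mathrm{diag}(1,-1,-1)$, $R_{U_y}=\mathrm{diag}(-1,1,-1)$, $R_{U_z}=\mathrm{diag}(-1,-1,1)$ in $SO(3)$. Combining \eqref{eq:compose-M} and \eqref{eq:compose-t} with $\bt_{U_i}=\mathbf{0}$, the two-sided conjugation $\Phi'=\Phi_{U_1}\circ\Phi\circ\Phi_{U_2}$ produces $M_{\Phi'}=R_{U_1}M R_{U_2}$ and $\bt_{\Phi'}=R_{U_1}\bt$. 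When both $R_{U_i}$ are diagonal with entries $\pm 1$, the matrix $M_{\Phi'}$ stays diagonal, so $\Phi'$ retains the canonical form with $\lambda'_i=\epsilon_i\lambda_i$, $\epsilon_i=(R_{U_1})_{ii}(R_{U_2})_{ii}$, and $t'_3=(R_{U_1})_{33}\,t_3$, while $t'_1=t'_2=0$. Moreover $\epsilon_1\epsilon_2\epsilon_3=\det R_{U_1}\det R_{U_2}=1$ is automatic.

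Next I would observe that the extremality constraint $\lambda_3=\lambda_1\lambda_2$ forces $\mathrm{sgn}(\lambda_1)\mathrm{sgn}(\lambda_2)\mathrm{sgn}(\lambda_3)=1$ whenever all three values are nonzero, so the target pattern $\epsilon_i=\mathrm{sgn}(\lambda_i)$ is compatible with $\epsilon_1\epsilon_2\epsilon_3=1$ and is therefore achievable. Concretely, one checks that post-composing with $R_{U_z}$, $R_{U_y}$, or $R_{U_x}$ flips, respectively, the pair $(\lambda_1,\lambda_2)$, $(\lambda_1,\lambda_3)$, or $(\lambda_2,\lambda_3)$; since extremality allows only an even number of negative $\lambda_j$'s, a single such rotation (or none) brings all $\lambda'_j\ge 0$.

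Having achieved $\lambda'_j\ge 0$, the sign of $t'_3$ may still be negative. In that case I would apply one further conjugation $\Phi''=\Phi_{U_x}\circ\Phi'\circ\Phi_{U_x}$. Because $R_{U_x}$ and $M_{\Phi'}$ are diagonal and $R_{U_x}^2=I$, one gets $M_{\Phi''}=R_{U_x}M_{\Phi'}R_{U_x}=M_{\Phi'}$, whereas $\bt_{\Phi''}=R_{U_x}\bt_{\Phi'}=(0,0,-t'_3)$. Thus this second conjugation flips $t_3$ without touching the $\lambda_j$'s. Composing the two steps yields an extremal channel $\Phi''$ with all four quantities $\lambda''_j$ and $t''_3$ non-negative, still in the form \eqref{eq:extremal}.

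There is no substantial obstacle beyond careful bookkeeping: one must verify that all products of diagonal $\pm 1$ matrices stay in $SO(3)$ (the determinant constraint corresponds precisely to the sign constraint $\epsilon_1\epsilon_2\epsilon_3=1$ forced by extremality), and that flipping $t_3$ in the second step does not undo the first step, which is guaranteed by the symmetric form $\Phi_{U_x}\circ(\cdot)\circ\Phi_{U_x}$. Degenerate cases, in which some $\lambda_j$ or $t_3$ vanish, impose no sign requirement on that component and are handled identically.
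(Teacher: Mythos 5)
Your proposal is correct and relies on essentially the same ingredients as the paper's proof: composing $\Phi$ with $\pi$-rotation unitaries before and after, whose effect on the diagonal $M$ and on $\bt$ follows from the rules \eqref{eq:compose-M} and \eqref{eq:compose-t}, together with the observation that extremality ($\lambda_3=\lambda_1\lambda_2$) forces an even number of negative $\lambda_j$. The only difference is organizational: your sandwich $\Phi_{U_x}\circ(\cdot)\circ\Phi_{U_x}$, which flips $t_3$ while leaving a diagonal $M$ unchanged, decouples the adjustment of the $\lambda_j$ signs from that of $t_3$ and thereby replaces the paper's explicit three-case analysis with a cleaner two-step argument.
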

\begin{proof}
Let $\Phi_U$ denote a channel corresponding to a unitary
 conjugation. Unitary conjugations can change the sign of arbitrary two
 $\lambda_j$. First, using the composition rule in
 Eqs.~(\ref{eq:compose-M}),(\ref{eq:compose-t}), we observe that by composing
 $\Phi'=\Phi_U \circ \Phi$ we can change the sign of $t_3$ and the
 sign of $\lambda_3$ and either $\lambda_1$ or $\lambda_2$. On the
 other hand, with $\Phi'=\Phi \circ \Phi_U$ we can change the sign of
 any two $\lambda_j$ while leaving the sign of $t_3$ intact. Combining
 concatenations with a unitary before and after the channel $\Phi$, we
 can change any allowed combination of signs. Let us discuss all
 possible cases: (i) if $t_3\ge 0$ and any two $\lambda_j$ are
 negative we simply change the sign of these two $\lambda_j$ by
 $\Phi'=\Phi \circ \Phi_U$; (ii) if $t_3<0$ and $\lambda_3<0$ as well
 as one of $\lambda_{1,2}$ is negative we can change all signs at once
 with $\Phi'=\Phi_U \circ \Phi$; (iii) if $t_3<0$ and
 $\lambda_{1,2}<0$, or $t_3<0$ and all $\lambda_j$ are non-negative we can,
 using $\Phi'=\Phi 
 \circ \Phi_U$, bring the channel to the form discussed under (ii).    
\end{proof}

We therefore see that with unitary conjugations we can always bring an
extremal qubit channel to the form given by Eq.~(\ref{eq:extremal})
with all $\lambda_j$ and $t_3$ being non-negative, that is to the set
with angles $u \in (0,\pi/2]$ and $v \in (0,\pi/2]$ (we must also have
$u\le v$ due to $|\lambda_1| \ge |\lambda_2|$), plus an additional
point $u=v=0$. From now on we shall limit our discussion to that
subset of extremal qubit channels. They can be further classified
according to the number of different pure output points one gets for
different values of $u$ and $v$. In addition, it will be useful to
classify channels also depending on whether they represent an
injective map. 

\begin{definition}
A channel $\Phi$ is called degenerate iff the determinant of $T_\Phi$ (\ref{T}) is zero; otherwise it is called non-degenerate. Equivalently, a channel is non-degenerate iff all $\lambda_j$ are non-zero, i.e., iff the volume of the set of output states is nonzero.
\end{definition}

Depending on the degeneracy and the number of output pure states, extremal
qubit channels (\ref{eq:extremal}) can be classified as follows. 

\begin{lemma}[\cite{Ruskai02}]\label{lem:class}
The set of extremal qubit channels ${\cal X}$ can be classified according to
the number of pure outputs as $\mathcal X={\cal U}_2 \cup {\cal X}_{\rm
  1PO}^{\rm deg} \cup {\cal X}_{\rm 1PO}^{\rm nd}\cup {\cal X}_{\rm
  2PO}^{\rm deg} \cup {\cal X}_{\rm 2PO}^{\rm nd}$ where the pure outputs of
the subclasses are as follows: 
\begin{enumerate}
\item Unitary conjugations ${\cal U}_2$ for which $\mathrm{PO}(\Phi) =
  \mathcal P_2$. They correspond to $u=v=0$, or $\omega=\theta\ne\pi/2$. 
\item $\mathrm{PO}(\Phi) = \{\xi\}$: the set given by $u=v$ that can be
  conveniently parametrized as $\bt_\Phi=(0,0,1-\lambda^2)$ and
  $\bm{\lambda}_\Phi=(\lambda,\lambda,\lambda^2)$. Note that the mapping
  from $u,v$ to $\theta,\omega$ is not injective in this case and all such
  channels correspond to a single point $\theta=\omega=\pi/2$ in the
  $\theta-\omega$ plane (Fig.~\ref{fig:extremal}).  
\begin{enumerate}
\item Degenerate channels ${\cal X}_{\rm 1PO}^{\rm deg}$: $\lambda=0$, i.e., $u=v=\pi/2$. The set of output states is a single point on the Bloch sphere.
\label{class:1POdeg}
\item Non-degenerate channels ${\cal X}_{\rm 1PO}^{\rm nd}$: the set given by $\lambda \in (0,1)$, i.e., $u=v$ and $u\in (0,\pi/2)$.
\label{class:1POnd}
\end{enumerate}
\item $\mathrm{PO}(\Phi) = \{\xi,\zeta \}$, with $\xi \neq \zeta$:
\begin{enumerate}
\item Degenerate channels ${\cal X}_{\rm 2PO}^{\rm deg}$: the set given by $v=\pi/2$ and $u \in (0,\pi/2)$, i.e., $\theta=0$ and $\omega \in (0,\pi/2)$. The set of output states is a degenerate ellipsoid -- a line segment. 
\label{class:2POdeg}
\item Non-degenerate channels ${\cal X}_{\rm 2PO}^{\rm nd}$: the set
  given by $0 < \theta < \omega < \pi/2$, or, in angles $u,v$, the set
  $0<u<v<\pi/2$ ($u=0$ with $v >0$ is excluded). This is the interior of
  the shaded triangle in Fig.~\ref{fig:extremal}.
\label{class:2POnd}
\end{enumerate}
\end{enumerate}
\end{lemma}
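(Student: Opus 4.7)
The plan is to determine the pure output of any extremal channel by intersecting its output set with the Bloch sphere, and then organise the subcases according to the parameters $u,v$. Since $M = \mathrm{diag}(\cos u, \cos v, \cos u\cos v)$ is diagonal and $\bt = (0,0,\sin u \sin v)$, the problem reduces to a single tractable intersection equation in the Bloch coordinates $(X,Y,Z)$.

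First I would treat the generic non-degenerate regime $u,v \in [0,\pi/2)$, where the output set is a genuine three-dimensional ellipsoid. Combining its surface equation with $X^2+Y^2+Z^2=1$ by eliminating $X^2$ and repeatedly using $\sin^2+\cos^2=1$, the intersection condition collapses to the clean identity
\begin{equation}
Y^2(\cos^2 u - \cos^2 v) + (Z \sin v - \sin u)^2 = 0.
\end{equation}
Under the convention $u \le v$ both terms are non-negative and must vanish separately. The sub-case $u=v>0$ forces only $Z = \sin u/\sin v = 1$, hence $X=Y=0$ and a single pure output at the north pole (class $\mathcal{X}^{\rm nd}_{\rm 1PO}$). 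The sub-case $0 < u < v < \pi/2$ forces $Y=0$ and $Z = \sin u/\sin v \in (0,1)$, producing exactly two pure outputs at $(\pm\sqrt{1-\sin^2 u/\sin^2 v},\,0,\,\sin u/\sin v)$ (class $\mathcal{X}^{\rm nd}_{\rm 2PO}$); this also confirms the stated relation $\sin\omega = \sin u/\sin v$.

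Next I would handle the degenerate cases, where the ellipsoid equation above is ill-defined. When $v=\pi/2$ and $0<u<\pi/2$, $M$ has rank one and the image collapses to the segment $\{(X,0,\sin u): |X|\le \cos u\}$, whose intersection with the sphere is exactly $\{(\pm \cos u, 0, \sin u)\}$, giving class $\mathcal{X}^{\rm deg}_{\rm 2PO}$. When $u=v=\pi/2$, $M=0$ and the image is the single point $(0,0,1)$, giving class $\mathcal{X}^{\rm deg}_{\rm 1PO}$. The boundary $u=v=0$ yields the identity map and $\mathrm{PO}=\mathcal{P}_2$, i.e.\ the unitary class $\mathcal{U}_2$. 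Together with the previous paragraph this exhausts the parameter triangle $\{0 \le u \le v \le \pi/2\}$, so the five classes partition $\mathcal{X}$.

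Degeneracy is then settled by $\det M = \cos^2 u \cos^2 v$, which vanishes precisely when $v=\pi/2$, confirming the ``deg/nd'' assignments. The stated parametrisations follow by direct substitution: in class 2, setting $\lambda = \cos u$ gives $\bm{\lambda}_\Phi = (\lambda,\lambda,\lambda^2)$ and $t_3 = \sin^2 u = 1-\lambda^2$, while under the change of variables $\sin\theta = \tan u/\tan v$, $\sin\omega = \sin u/\sin v$ recalled after Proposition~\ref{prop:extremal-channels}, the domain $0<u<v<\pi/2$ of class 3(b) is sent bijectively to $0<\theta<\omega<\pi/2$, whereas the diagonal $u=v$ collapses to the single point $\theta=\omega=\pi/2$, consistent with the one-point description of class 2 in the $\theta$--$\omega$ plane. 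The core step is the derivation of the single-equation intersection identity; the main obstacle will be the careful bookkeeping of the rank-deficient sub-cases and of the non-injectivity of $(u,v)\mapsto(\theta,\omega)$ on the diagonal $u=v$.
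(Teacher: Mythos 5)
Your argument is correct, and it is worth noting that the paper itself does not prove this lemma at all: it is quoted from \cite{Ruskai02}, with the parametrization of extremal channels and the relations $\sin\theta=\tan u/\tan v$, $\sin\omega=\sin u/\sin v$ simply recalled from that reference. What you supply is therefore a self-contained, elementary verification: writing the output ellipsoid of the canonical extremal channel explicitly and eliminating $X^2$ against the sphere equation, you reduce the intersection to the sum-of-squares identity $Y^2(\cos^2u-\cos^2v)+(Z\sin v-\sin u)^2=0$, from which the $1$PO/$2$PO dichotomy, the location of the pure outputs, and the relation $\sin\omega=\sin u/\sin v$ all drop out at once; the degenerate cases $v=\pi/2$ are handled by inspecting the collapsed image directly, and $\det M=\cos^2u\cos^2v$ settles the degenerate/non-degenerate split. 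I checked the algebra and it is right. Two small points would make it airtight: (i) state explicitly that the pure outputs are exactly the points of the \emph{boundary} ellipsoid (the image of the input sphere) lying on the Bloch sphere — this uses that the image is contained in the Bloch ball, so no interior point of the ellipsoid can have unit norm; (ii) the derivation of the intersection identity divides by $\cos u\cos v$, so it is only valid in the non-degenerate regime, which you do treat separately, and the claimed bijectivity of $(u,v)\mapsto(\theta,\omega)$ on $0<u<v<\pi/2$ deserves the one-line check $\cos u=\cos\omega/\cos\theta\in(0,1)$, $\cos v=\tan\theta/\tan\omega\in(0,1)$ with $u<v$ equivalent to $\theta<\omega$. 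With these remarks your proof is a complete and arguably more transparent substitute for the citation.
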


We shall investigate these classes of extremal channels and show 
that any universal set of qubit quantum channels needs to contain 
some of these maps. We analyze each individual case in the next two 
subsections. Before that, let us state few general statements that 
will be useful in proving decompositions.

\begin{lemma}\label{lem:mixed}
Let $\Phi : \M_d(\C) \to \M_d(\C)$ be a quantum channel with the 
property that there exists a full rank input state $\rho$ such that 
$\Phi(\rho) = |\psi \rangle \langle \psi|$, a rank-one projector. Then, 
the channel $\Phi$ is constant, i.e. for all input states $\sigma$, 
$\Phi(\sigma) = |\psi \rangle \langle \psi|$.
\end{lemma}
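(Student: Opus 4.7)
The plan is to exploit the fact that pure states are extreme points of the convex set of density matrices, together with the full-rank hypothesis on $\rho$, which will let us write \emph{every} state $\sigma$ as an ingredient in a convex decomposition of $\rho$.

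First, I would recall the key convexity fact: if $|\psi\rangle\langle\psi|$ is a pure state and $|\psi\rangle\langle\psi| = p\,\omega_1 + (1-p)\,\omega_2$ with $\omega_1,\omega_2 \in \mathcal{D}_d$ and $p \in (0,1)$, then necessarily $\omega_1 = \omega_2 = |\psi\rangle\langle\psi|$. This is the defining extremality property of rank-one projectors in $\mathcal{D}_d$.

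Next I would use the full-rank hypothesis on $\rho$ to produce convex decompositions. Let $\sigma \in \mathcal{D}_d$ be an arbitrary input state. Since $\rho$ is strictly positive (all eigenvalues are $>0$), the operator $\rho - \varepsilon \sigma$ is positive for all sufficiently small $\varepsilon > 0$ (one can take, e.g., $\varepsilon = \lambda_{\min}(\rho)/\|\sigma\|_\infty$). Then
\begin{equation}
\rho = \varepsilon\,\sigma + (1-\varepsilon)\,\tau, \qquad \tau := \frac{\rho - \varepsilon\sigma}{1-\varepsilon} \in \mathcal{D}_d.
\end{equation}
Applying the linear map $\Phi$ and using that $\Phi$ sends states to states (trace-preserving and positive, hence mapping $\mathcal{D}_d$ into $\mathcal{D}_d$), I obtain
\begin{equation}
|\psi\rangle\langle\psi| = \Phi(\rho) = \varepsilon\,\Phi(\sigma) + (1-\varepsilon)\,\Phi(\tau),
\end{equation}
which is a convex combination of two density matrices equal to a pure state. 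By the extremality fact above, $\Phi(\sigma) = |\psi\rangle\langle\psi|$.

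Since $\sigma \in \mathcal{D}_d$ was arbitrary, $\Phi$ is constant on $\mathcal{D}_d$, and by linearity this determines $\Phi$ on all of $\M_d(\C)$ (namely, $\Phi(X) = (\tr X)\,|\psi\rangle\langle\psi|$). The only real subtlety, and the step I would be most careful about, is checking that $\varepsilon$ can indeed be chosen strictly positive for \emph{every} $\sigma$ — this is precisely where full rank of $\rho$ is essential: if $\rho$ had a kernel, one could choose $\sigma$ supported on that kernel and no such $\varepsilon > 0$ would exist. With full rank the bound $\varepsilon \le \lambda_{\min}(\rho)/\|\sigma\|_\infty$ always gives a valid value, so no obstruction arises.
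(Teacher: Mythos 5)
Your proof is correct and takes essentially the same route as the paper's: both hinge on the full-rank hypothesis yielding an $\varepsilon>0$ with $\varepsilon\sigma \le \rho$. The only difference is cosmetic, in the finishing step — the paper applies $\Phi$ directly to this operator inequality and uses that a positive operator dominated by the rank-one projector $|\psi\rangle\langle\psi|$ must be proportional to it (then fixes the constant by trace preservation), whereas you repackage the same inequality as a convex decomposition $\rho=\varepsilon\sigma+(1-\varepsilon)\tau$ and invoke extremality of pure states in $\mathcal{D}_d$, two equivalent observations.
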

\begin{proof}
Let $\sigma$ be any input state. Since $\rho$ is full rank, there 
exists a positive constant $\varepsilon$ such that $\varepsilon \sigma 
\leq \rho$. Being a quantum channel, the map $\Phi$ preserves 
positivity, hence $\varepsilon \Phi(\sigma) \leq \Phi(\rho) = |\psi 
\rangle \langle \psi|$. This means that $\Phi(\sigma)$ has support
only in the $|\psi\rangle\langle\psi|$ subspace, i.e.~$\Phi(\sigma)=c
|\psi\rangle\langle\psi|$ with some $c>0$. Using trace 
preservation, we conclude that $c=1$  and thus 
$\Phi(\sigma) = |\psi \rangle \langle \psi|$.
\end{proof}

Note that the full-rank hypothesis in the above lemma is necessary; taking a
direct sum of two constant channels shows that a mixed input is not enough
to guarantee that the channel is constant. However, for qubits, the notions
of mixed state and full-rank state are equivalent. Among qubit channels such
constant channels are exactly channels of the type \ref{class:1POdeg} in
Lemma~\ref{lem:class}.

\begin{lemma}\label{lem:nondeg}
In a decomposition of a non-degenerate channel there can be only non-degenerate 
channels.
\end{lemma}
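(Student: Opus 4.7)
The plan is to reduce the statement to a one-line application of the multiplicativity of the determinant, using the composition rule \eqref{eq:compose-M}. Suppose $\Phi=\Phi_2\circ\Phi_1$. By \eqref{eq:compose-M} one has $M_\Phi=M_{\Phi_2}M_{\Phi_1}$, so that
\begin{equation}
\det M_\Phi = \det M_{\Phi_2}\,\det M_{\Phi_1}.
\end{equation}
Since the block form \eqref{T} gives $\det T_\Phi=\det M_\Phi$, the same multiplicativity holds for the full $4\times 4$ superoperator matrices: $\det T_\Phi=\det T_{\Phi_2}\det T_{\Phi_1}$.

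Now I would invoke the definition just stated: $\Phi$ is non-degenerate iff $\det T_\Phi\neq 0$, equivalently all signed singular values $\lambda_j(\Phi)$ are non-zero (since, in the diagonal form \eqref{eq:T-diagonal}, $\det M_\Phi=\lambda_1\lambda_2\lambda_3$, and choosing the diagonal representative only multiplies this by signs coming from the orthogonal rotations). Assuming for contradiction that one of $\Phi_1,\Phi_2$ is degenerate, the corresponding determinant vanishes and therefore $\det T_\Phi=0$, contradicting the non-degeneracy of $\Phi$. Hence both $\Phi_1$ and $\Phi_2$ must be non-degenerate.

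There is essentially no obstacle in this argument beyond recording the correct identification of $\det T_\Phi$ with a product of singular values; the only point worth flagging is that the translation vectors $\bt_{\Phi_i}$ play no role here, since the composition rule for $M$ decouples from that for $\bt$ in \eqref{eq:compose-M}--\eqref{eq:compose-t}. Consequently the proof is a short, purely multiplicative statement and can be written in two lines.
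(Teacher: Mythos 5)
Your proof is correct and is essentially the paper's own argument: take determinants in the composition rule and note that a vanishing factor would force $\det T_\Phi=0$ (the paper writes this directly for an $n$-fold concatenation $T_\Phi=T_{\Phi_n}\cdots T_{\Phi_1}$, to which your two-factor argument extends verbatim). The extra observation that $\det T_\Phi=\det M_\Phi$ via the block form is a harmless, correct elaboration.
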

\begin{proof}
Writing $\Phi=\Phi_n \circ \cdots \circ \Phi_1$ and taking the determinant on both sides of $T_{\Phi} = T_{\Phi_n} \cdots T_{\Phi_1}$, we see that the determinant of the right side can be non-zero (i.e., $\Phi$ is non-degenerate) only if all channels $\Phi_j$ are non-degenerate.
\end{proof}

\begin{proposition}\label{prop:extremal}
In a decomposition of a non-degenerate extremal channel $\Phi$ there can be only non-degenerate extremal channels.
\end{proposition}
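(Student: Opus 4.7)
The plan is to combine Lemma \ref{lem:nondeg} with a convex-decomposition argument. Lemma \ref{lem:nondeg} already handles the non-degeneracy part: given $\Phi = \Phi_n \circ \cdots \circ \Phi_1$ with $\Phi$ non-degenerate, taking determinants of the superoperator matrices $T_{\Phi_j}$ forces every $T_{\Phi_j}$ to be invertible, so each $\Phi_j$ is non-degenerate. The remaining task is to show that each $\Phi_j$ is extremal as well.

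The key idea is that a non-degenerate qubit channel is bijective as a linear map on $\M_2(\C)$, so both pre- and post-composition with $\Phi_j$ are injective operations on the affine space of channels. Concretely, suppose for contradiction that some $\Phi_k$ is not extremal. Then one can write
\begin{equation}
\Phi_k = p\, \Psi + (1-p)\, \Psi',
\end{equation}
with $p \in (0,1)$ and $\Psi \ne \Psi'$ two distinct channels. Substituting this into the composition and using linearity of $\circ$ in each slot yields
\begin{equation}
\Phi = p\, A + (1-p)\, B,
\end{equation}
where $A = \Phi_n \circ \cdots \circ \Phi_{k+1} \circ \Psi \circ \Phi_{k-1} \circ \cdots \circ \Phi_1$ and $B$ is defined analogously with $\Psi'$ in place of $\Psi$. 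Both $A$ and $B$ are channels because concatenations and convex combinations preserve complete positivity and trace preservation.

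The crucial step is to show $A \ne B$, which would contradict extremality of $\Phi$. If $A = B$, then the non-degenerate (hence invertible) channels $\Phi_{k+1}, \ldots, \Phi_n$ acting on the left can be ``cancelled'' using injectivity of post-composition by an invertible linear map, and similarly $\Phi_1, \ldots, \Phi_{k-1}$ can be cancelled on the right using surjectivity of pre-composition by an invertible linear map, leaving $\Psi = \Psi'$, a contradiction. (At the boundary $k=1$ or $k=n$ only one side of the cancellation is needed.) Hence $A \ne B$ and $\Phi$ admits a non-trivial convex decomposition, contradicting extremality.

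I do not expect a serious obstacle here: the argument is structurally the same as the standard ``extremal elements cannot appear as nontrivial convex combinations'' argument, with invertibility of the $T_{\Phi_j}$ replacing the usual cancellation. The only point that deserves explicit mention is that non-degeneracy (all signed singular values nonzero, together with the $1$ in the top-left corner of $T_\Phi$ from \eqref{T}) is exactly what makes $T_{\Phi_j}$ invertible as a $4\times 4$ real matrix, and hence ensures both injectivity and surjectivity of the associated composition operations on channels.
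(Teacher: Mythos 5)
Your proposal is correct and follows essentially the same route as the paper: invoke Lemma \ref{lem:nondeg} for non-degeneracy, then pull a hypothetical convex decomposition of one factor through the composition and use invertibility (injectivity/surjectivity) of the non-degenerate factors to conclude the two resulting channels are distinct, contradicting extremality of $\Phi$. The only difference is cosmetic: you treat an $n$-fold concatenation with explicit two-sided cancellation, while the paper argues the binary case $\Phi=\Phi_1\circ\Phi_2$ (from which the general case follows by induction).
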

\begin{proof}
Consider a decomposition $\Phi=\Phi_1 \circ \Phi_2$. Due to Lemma~\ref{lem:nondeg} we know that both $\Phi_1$ and $\Phi_2$ must be non-degenerate. Suppose $\Phi_2$ is not extremal, that is, we can write it as a nontrivial convex sum $\Phi_2=c_1 \Psi_1+c_2 \Psi_2$, with $\Psi_1 \neq \Psi_2$. Using this sum $\Phi$ can be written as $\Phi=c_1 \Phi_1 \circ \Psi_1+c_2 \Phi_1 \circ \Psi_2$. Because $\Phi$ is supposed to be extremal $\Phi_1 \circ \Psi_1$ must be equal to
$\Phi_1 \circ \Psi_2$, otherwise $\Phi$ would be a nontrivial convex 
combination. But because $\Phi_1$ is non-degenerate, i.e., an
injection, and $\Psi_1 \neq \Psi_2$, there is at least one point whose
image under $\Psi_1$ is different from its image  under $\Psi_2$, and
therefore 
$\Phi_1 \circ \Psi_2$ can not be equal  
to $\Phi_1 \circ \Psi_1$. $\Phi_2$ must therefore be extremal. For the case when
$\Phi_1$ would be a convex combination, the argument is analogous. Therefore, 
neither $\Phi_1$ nor $\Phi_2$ can have a nontrivial convex combination.
\end{proof}

Finally, let us make a  remark regarding the relation between extremal
and indivisible channels. Somewhat unintuitively, all extremal channels
are divisible. This follows from the characterization of extremal 
\cite{Ruskai02} and indivisible \cite[Theorem 23]{Wolf08} channels. Indeed, indivisible channels are unital and this implies
$t_3= 0$ in Proposition \ref{prop:extremal-channels}. This, in turn, implies
$u=0$, which is an excluded parameter. Most notably, the indivisible
channel $\rho \mapsto (\rho^t + (\tr\rho) I_2)/3$ from \cite{Wolf08} is not extremal,
as it is unital with $\bm{\lambda}=(1/3,-1/3,1/3)$, which corresponds to the
center of a face of the tetrahedron.

\subsection{Extremal qubit channels with one pure output}

\subsubsection{Degenerate channels }

Consider a \emph{generalized depolarizing} (or \emph{constant}) channel
$\mathcal Q_\rho\in\mathcal C_d$ defined 
as 
\begin{equation}
	\mathcal Q_{\rho_0}(\rho) = (\tr \rho)
	\rho_0, 
\end{equation}
where $\rho_0\in \mathcal D_d$ is a fixed density operator. The usual
depolarizing channel is a particular case obtained by considering for
$\rho_0$ the
maximally mixed state $I_d / d$. An important feature of
generalized depolarizing channels is that their image (as quantum channels)
is trivial: $\mathcal Q_{\rho_0}(\mathcal D_d) = \{\rho_0\}$, that is, all
states are mapped onto a single point.

We now look at generalized depolarizing qubit channels ($d=2$). If ${\bf r}$ is
the Bloch vector of the state $\rho_0 \in \mathcal D_2$, then $M_{{\mathcal
    Q}_{\rho_0}}= {\rm diag}(0,0,0)$ and $\bt_{{\mathcal Q}_{\rho_0}} = {\bf
  r}$. Of special interest to us are channels with pure $\rho_0$, that is,
extremal channels of the form \ref{class:1POdeg} in Lemma \ref{lem:class}.   

\begin{proposition}\label{prop:decomposition-1PO-deg}
For a pure state $|\psi \rangle \langle \psi| \in \mathcal P_2$, consider a decomposition
\begin{equation}
	\mathcal Q_{|\psi \rangle \langle \psi|} = \Phi_2 \circ \Phi_1.
\end{equation}
Then, at least one of $\Phi_{1,2}$ is a constant channel
$\mathcal Q_{|\phi \rangle \langle \phi|}$, for some pure state $|\phi \rangle \langle \phi| \in \mathcal P_2$.  
\end{proposition}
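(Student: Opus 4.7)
The plan is to split the argument based on whether $\Phi_1(I_2/2)$ is a pure state or a full-rank (mixed) state; for qubits these are the only two possibilities since a $2\times 2$ density matrix has rank $1$ or $2$.

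First I would set $\rho_0 = \Phi_1(I_2/2)$ and observe that $\Phi_2(\rho_0) = \mathcal Q_{|\psi\rangle\langle\psi|}(I_2/2) = |\psi\rangle\langle\psi|$. If $\rho_0$ is full-rank, then $\Phi_2$ is a channel that maps a full-rank input to a rank-one projector, so Lemma \ref{lem:mixed} applies directly and yields $\Phi_2(\sigma) = |\psi\rangle\langle\psi|$ for every input $\sigma$. That is, $\Phi_2 = \mathcal Q_{|\psi\rangle\langle\psi|}$, which is of the required form with $|\phi\rangle = |\psi\rangle$.

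The remaining case is when $\rho_0 = |\phi\rangle\langle\phi|$ is pure. Here I would use the qubit-specific fact that for any state $\rho \in \mathcal D_2$, the operator $I_2 - \rho$ is also a valid density matrix (its trace is $1$ and its eigenvalues, being $1$ minus those of $\rho$, lie in $[0,1]$). Consequently
\begin{equation}
\frac{I_2}{2} = \frac{1}{2}\rho + \frac{1}{2}(I_2 - \rho),
\end{equation}
and applying $\Phi_1$ gives
\begin{equation}
|\phi\rangle\langle\phi| = \frac{1}{2}\Phi_1(\rho) + \frac{1}{2}\Phi_1(I_2 - \rho).
\end{equation}
Since $|\phi\rangle\langle\phi|$ is extremal in $\mathcal D_2$ and $\Phi_1(\rho)$, $\Phi_1(I_2-\rho)$ are density matrices, both must equal $|\phi\rangle\langle\phi|$. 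As $\rho$ was arbitrary, $\Phi_1 = \mathcal Q_{|\phi\rangle\langle\phi|}$, completing this case.

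There is no serious obstacle: the only subtlety is observing the pure/full-rank dichotomy specific to $d=2$ and recognising that $I_2 - \rho$ is itself a qubit state, which is what unlocks the convexity argument in the pure case. Lemma \ref{lem:mixed} handles the mixed case off-the-shelf.
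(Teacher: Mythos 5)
Your proof is correct and takes essentially the same route as the paper: you split on whether the intermediate state $\Phi_1(I_2/2)$ is pure or full-rank and invoke Lemma \ref{lem:mixed} in the full-rank case, exactly as the paper does with an arbitrary mixed input in place of $I_2/2$. The only deviation is in the pure branch, where you re-derive a special case of Lemma \ref{lem:mixed} by the convexity argument with $I_2-\rho$ and extremality of pure states; this is fine, but you could equally just apply the lemma to $\Phi_1$, since $I_2/2$ is full-rank and $\Phi_1(I_2/2)$ is pure.
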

\begin{proof}
Let $\rho$ be an arbitrary mixed (and thus full-rank) input state for the channel 	$\mathcal Q_{|\psi \rangle \langle \psi|}$, 
\begin{equation}
	\Phi_2(\Phi_1(\rho)) = |\psi \rangle \langle \psi|,
\end{equation}
and consider the intermediary state $\sigma = \Phi_1(\rho)$. If $\sigma$ is a pure state, then the channel $\Phi_1$ satisfies the hypothesis of Lemma \ref{lem:mixed}, and it is therefore constant. Otherwise, $\sigma$ is a mixed state, but then $\Phi_2(\sigma) =  |\psi \rangle \langle \psi|$, and, by the same Lemma \ref{lem:mixed}, $\Phi_2$ is constant. 
\end{proof}

\begin{corollary}\label{coroll:1POdeg}
Any set
of universal qubit channels contains at least one generalized
depolarizing channel $\mathcal 
Q_\rho$ for some pure state $\rho \in \mathcal P_2$.  As all other
generalized depolarizing channels can be obtained from $\mathcal Q_\rho$ by concatenation with some
unitary conjugation, it is also sufficient to have a single generalized
depolarizing channel in 
the universal set for the creation of {\em all}
generalized depolarizing qubit channels with pure output.
\end{corollary}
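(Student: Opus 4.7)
The plan is to prove both halves of the corollary, necessity and sufficiency, using Proposition \ref{prop:decomposition-1PO-deg} and the transitivity of the unitary action on pure states.

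For necessity, I would argue by (strong) induction on the length of the concatenation. Suppose $\mathcal{F}$ is a universal set and fix some pure state $|\psi\rangle\langle\psi| \in \mathcal{P}_2$, so that by universality there is a decomposition
\begin{equation}
\mathcal{Q}_{|\psi\rangle\langle\psi|} = \Phi_n \circ \Phi_{n-1} \circ \cdots \circ \Phi_1
\end{equation}
with each $\Phi_i \in \mathcal{F}$. Grouping $\Phi_n \circ \cdots \circ \Phi_2$ as one channel and $\Phi_1$ as the other, Proposition \ref{prop:decomposition-1PO-deg} forces one of the two factors to be a generalized depolarizing channel $\mathcal{Q}_{|\phi\rangle\langle\phi|}$ with pure output. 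If that factor is $\Phi_1$ itself, we are done, since $\Phi_1 \in \mathcal{F}$. Otherwise, the composed factor $\Phi_n \circ \cdots \circ \Phi_2$ is of the form $\mathcal{Q}_{|\phi\rangle\langle\phi|}$, and we iterate the argument on its $(n-1)$-fold decomposition. After at most $n$ steps, we isolate a single $\Phi_i \in \mathcal{F}$ that must itself equal $\mathcal{Q}_{|\phi\rangle\langle\phi|}$ for some pure $\phi$.

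For sufficiency, assume $\mathcal{Q}_{|\psi\rangle\langle\psi|}$ is available together with the full family of unitary conjugations $\mathcal{U}_2$. Given any target pure state $|\phi\rangle$, pick a unitary $U \in \mathrm{SU}(2)$ with $U|\psi\rangle = |\phi\rangle$ (such a $U$ always exists, since $\mathrm{SU}(2)$ acts transitively on pure states). Then for every input state $\sigma$,
\begin{equation}
(\Phi_U \circ \mathcal{Q}_{|\psi\rangle\langle\psi|})(\sigma) = U\,|\psi\rangle\langle\psi|\,U^\dagger \cdot \tr\sigma = |\phi\rangle\langle\phi|,
\end{equation}
so $\Phi_U \circ \mathcal{Q}_{|\psi\rangle\langle\psi|} = \mathcal{Q}_{|\phi\rangle\langle\phi|}$, as required.

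I do not expect any technical obstacle here: the two nontrivial ingredients (namely Proposition \ref{prop:decomposition-1PO-deg} and the transitivity of $\mathrm{SU}(2)$ on $\mathcal{P}_2$) do all the work. The only point that requires a moment of care is the inductive argument in the necessity part, where one has to be careful that the \emph{composed} factor on each side of the split, not just the elementary ones, is the object to which Proposition \ref{prop:decomposition-1PO-deg} is applied; the proposition only needs a two-factor decomposition, so iterating it on the nested composition is legitimate.
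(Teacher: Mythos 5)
Your proof is correct and follows essentially the same route the paper intends: the corollary is stated without a separate proof precisely because it reduces to Proposition \ref{prop:decomposition-1PO-deg} plus the transitivity of unitary conjugation on pure states, which is exactly what you use. Your explicit induction that upgrades the two-factor statement of the proposition to an $n$-fold concatenation is a sound and welcome spelling-out of the step the paper leaves implicit.
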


\subsubsection{Non-degenerate channels}

Let $\Phi_{\rm 1PO}(\lambda)$ be an extremal 1PO
channel  (type \ref{class:1POnd} in Lemma~\ref{lem:class}) with a
parametrization  
\begin{equation}\label{1POnd}
\bt_\Phi=(0,0,1-\lambda^2) \mbox{ and }
\bm{\lambda}_\Phi=(\lambda,\lambda,\lambda^2)\,. 
\end{equation}
We define
a set of length $\varepsilon$ by ${\cal X}_{\rm 1PO}^{\rm
  nd}(\varepsilon)$,  
\begin{equation}
{\cal X}_{\rm 1PO}^{\rm nd}(\varepsilon)=\{ \Phi_{\rm 1PO}(\lambda), \lambda \in (1-\varepsilon,1) \},
\end{equation}
where $\varepsilon$ is any positive number less than $1$.

\begin{lemma}\label{lem:1PO}
The set of all 1PO non-degenerate extremal channels ${\cal X}_{\rm 1PO}^{\rm nd}$ can be obtained by concatenation from the 1PO non-degenerate extremal universal set ${\cal X}_{\rm 1PO}^{\rm nd}(\varepsilon)$.
\end{lemma}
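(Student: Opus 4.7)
My plan is to exploit the very simple semigroup structure of the family $\Phi_{\rm 1PO}(\lambda)$ under concatenation. The key observation is that these channels close under composition with the composition law $\lambda_1 \lambda_2$, so the problem reduces to approximating any $\lambda \in (0,1)$ by a product of finitely many factors from $(1-\varepsilon, 1)$, which is a trivial exercise in taking $n$-th roots.

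Concretely, I would first compute $\Phi_{\rm 1PO}(\lambda_1) \circ \Phi_{\rm 1PO}(\lambda_2)$ using the composition rules \eqref{eq:compose-M} and \eqref{eq:compose-t}. Since both channels have diagonal $M$ matrices with entries $(\lambda,\lambda,\lambda^2)$, the product matrix is
\begin{equation}
M_{\Phi_1} M_{\Phi_2} = \mathrm{diag}(\lambda_1 \lambda_2, \lambda_1 \lambda_2, (\lambda_1 \lambda_2)^2),
\end{equation}
and the translation vector works out to
\begin{equation}
M_{\Phi_1}\bt_{\Phi_2}+\bt_{\Phi_1} = \bigl(0,0,\lambda_1^2(1-\lambda_2^2)+(1-\lambda_1^2)\bigr) = \bigl(0,0,1-(\lambda_1\lambda_2)^2\bigr).
\end{equation}
Matching with the parametrization \eqref{1POnd}, this shows the clean multiplicative law
\begin{equation}
\Phi_{\rm 1PO}(\lambda_1) \circ \Phi_{\rm 1PO}(\lambda_2) = \Phi_{\rm 1PO}(\lambda_1 \lambda_2).
\end{equation}

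Next, given any target $\lambda \in (0,1)$, I would choose an integer $n$ large enough so that $\lambda^{1/n} > 1-\varepsilon$ (possible since $\lambda^{1/n} \to 1$ as $n \to \infty$). Iterating the composition law $n$ times gives
\begin{equation}
\Phi_{\rm 1PO}(\lambda) = \underbrace{\Phi_{\rm 1PO}(\lambda^{1/n}) \circ \cdots \circ \Phi_{\rm 1PO}(\lambda^{1/n})}_{n \text{ times}},
\end{equation}
with each factor lying in ${\cal X}_{\rm 1PO}^{\rm nd}(\varepsilon)$ by construction. Elements of ${\cal X}_{\rm 1PO}^{\rm nd}(\varepsilon)$ itself are trivially in the generated set (take $n=1$), which completes the proof.

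I do not expect any real obstacle: the only content is the composition computation, and the rest is a one-line root extraction. The result is a direct analogue of Lemma \ref{lemtopedge} for phase-flip channels, with the multiplicative parametrization by $\lambda$ playing the role of $(1-2t)$.
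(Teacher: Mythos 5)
Your proposal is correct and follows essentially the same route as the paper: the paper's proof likewise rests on the multiplicative concatenation rule $\Phi_{\rm 1PO}(\lambda\mu)=\Phi_{\rm 1PO}(\lambda)\circ\Phi_{\rm 1PO}(\mu)$ (which it leaves as a straightforward check and you verify explicitly via \eqref{eq:compose-M}--\eqref{eq:compose-t}) and then extracts $n$-th roots exactly as in the phase-flip case of Lemma \ref{lemtopedge}. No gaps.
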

\begin{proof}
It is straightforward to check that the following concatenation rule holds for 1PO non-degenerate extremal channels, $\Phi_{\rm 1PO}(\lambda \mu)=\Phi_{\rm 1PO}(\lambda)\circ \Phi_{\rm 1PO}(\mu)$, $\lambda,\mu \in (0,1)$. Therefore, by a completely analogous argument as in the case of phase-flip channels, Lemma~\ref{lemtopedge}, we can see that concatenating at most $n$ channels from ${\cal X}_{\rm 1PO}^{\rm nd}(\varepsilon)$, where $\varepsilon=1-T^{1/n}$, we can get $\Phi_{\rm 1PO}(T)$, with any $T \in (0,1)$.
\end{proof}

The following proposition shows that any universal set must containt at
  least some element of the set ${\cal 
  X}_{\rm 1PO}^{\rm nd}(\varepsilon)$.
\begin{proposition}\label{prop:1PO}
Let $\Phi_{\rm 1PO}^{\rm nd}$ be a non-degenerate extremal channel with 1 PO
and $\Phi_{\rm 1PO}^{\rm nd}=\Phi_2 \circ \Phi_1$ an arbitrary
decomposition. Then, up to unitary conjugations, both $\Phi_{1,2}$ must be
1PO non-degenerate extremal channels. 
\end{proposition}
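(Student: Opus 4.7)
My plan is to narrow the decomposition down to the 1PO-by-1PO configuration by first eliminating the unitary case using earlier results, and then ruling out the 2PO options via a singular-value calculation. First I would apply Proposition~\ref{prop:extremal} to conclude that both $\Phi_1$ and $\Phi_2$ are non-degenerate extremal channels. By the classification in Lemma~\ref{lem:class}, each is therefore a unitary conjugation, a non-degenerate 1PO extremal channel (type~\ref{class:1POnd}), or a non-degenerate 2PO extremal channel (type~\ref{class:2POnd}). If either $\Phi_j$ is a unitary conjugation, the other coincides with $\Phi_{\rm 1PO}^{\rm nd}$ up to unitary conjugations and the statement follows. From now on assume neither $\Phi_j$ is unitary.

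The second step is to trace the unique pure output $\psi_0$ of $\Phi$ backwards through the decomposition. Because $\Phi_1$ is non-degenerate, it maps full-rank states to full-rank states; hence a full-rank preimage $\rho$ of $\psi_0$ would give a full-rank $\sigma := \Phi_1(\rho)$ with $\Phi_2(\sigma) = \psi_0$, forcing $\Phi_2$ to be constant by Lemma~\ref{lem:mixed} and contradicting its non-degeneracy. Thus the preimage $\rho_0$ of $\psi_0$ is pure, and the same argument applied to $\Phi_2$ at $\sigma_0 := \Phi_1(\rho_0)$ shows that $\sigma_0$ is pure as well. Consequently, $\rho_0$ lies among the (at most two) pure inputs of $\Phi_1$ producing a pure output, and $\sigma_0$ lies among those of $\Phi_2$; in particular $\sigma_0$ is simultaneously a pure output of $\Phi_1$ and a pure input of $\Phi_2$.

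For the main step I would bring each $\Phi_j$ to canonical form via Proposition~\ref{prop:extremal-channels}, absorbing the unitaries on the input side of $\Phi_1$ and on the output side of $\Phi_2$ into the proposition's ``up to unitary conjugations'' freedom, so that $\Phi_{\rm 1PO}^{\rm nd} = \Phi_2^{\rm can} \circ \Phi_R \circ \Phi_1^{\rm can}$ for some rotation $R \in \mathrm{SO}(3)$ with $M_\Phi = M_2 R M_1$. The observation above forces $R$ to map a canonical pure output of $\Phi_1^{\rm can}$ onto a canonical pure input of $\Phi_2^{\rm can}$. In each of the three remaining sub-cases (1PO/2PO, 2PO/1PO, 2PO/2PO) I would compute the singular values of $M_\Phi$ and match them to the pattern $(\lambda,\lambda,\lambda^2)$ required by Proposition~\ref{prop:extremal-channels} for a 1PO extremal channel. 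In the two mixed sub-cases, the rotational symmetry of the 1PO factor about its canonical $z$-axis absorbs all but one angle of $R$, producing a block decomposition of $M_\Phi$ into a transverse singular value and a $2 \times 2$ $xz$-block; combining the identity $\det M_\Phi = \lambda^4$ with the required $(\lambda,\lambda,\lambda^2)$ pattern then forces $\cos u_j = \cos v_j$ on the 2PO factor, contradicting the 2PO assumption. The fourth sub-case, both 1PO, is exactly the desired conclusion.

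The hardest part will be the 2PO/2PO sub-case, where neither canonical form carries rotational symmetry and $R$ retains one genuinely free angle after the pure-input/pure-output alignment. I plan to use three independent invariants of $M_\Phi$---the determinant $\det M_\Phi$, the Frobenius norm $\|M_\Phi\|_F^2$, and the second elementary symmetric function of the singular values of $M_\Phi^T M_\Phi$---and equate each to the value dictated by the pattern $(\lambda,\lambda,\lambda^2)$. Together with the extremality identity $\lambda_3^\Phi = \lambda_1^\Phi \lambda_2^\Phi$ these should give a polynomial system in the four angular parameters $(u_1,v_1,u_2,v_2)$ and the free rotation angle that collapses to $\cos u_1 = \cos v_1$ or $\cos u_2 = \cos v_2$, again contradicting the 2PO assumption and completing the proof.
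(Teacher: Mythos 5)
Your first two steps coincide with the paper's own argument: Proposition~\ref{prop:extremal} plus Lemma~\ref{lem:class} reduce both factors to unitaries or non-degenerate 1PO/2PO extremal channels, and tracing the pure output backwards with Lemma~\ref{lem:mixed} is exactly how the paper shows that the intermediate state is pure, so that a pure output of $\Phi_1$ must be sent to a pure-output-producing input of $\Phi_2$. Your mixed sub-cases also check out: once the axial symmetry of the 1PO factor absorbs the residual angle, $M_\Phi$ splits into a transverse singular value $\mu\cos v$ and a $2\times2$ block of determinant $\mu^3\cos^2\!u\,\cos v$, and matching the spectrum to $(\lambda,\lambda,\lambda^2)$ forces either $\cos u=\cos v$ or $\mu\cos u=1$, both impossible; so that part of the plan is sound, although the paper disposes of these cases differently (see below).

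The genuine gap is the 2PO/2PO sub-case, which is exactly the case your plan does not prove (``these should give \dots''), and there is a concrete reason to doubt the chosen mechanism. After aligning a pure output of $\Phi_1$ with a pure input of $\Phi_2$ you still have five real parameters $(u_1,v_1,u_2,v_2,\varphi)$, while the requirement that $M_2RM_1$ have singular values of the form $(\lambda,\lambda,\lambda^2)$ amounts to only two scalar equations, namely $s_1=s_2$ and $s_3=s_1s_2$; the determinant, Frobenius norm and second symmetric function are functions of the $s_i$ and add nothing, and the extremality identity $\lambda_3=\lambda_1\lambda_2$ is automatic for this pattern. One therefore expects a positive-dimensional family of solutions with $u_i\neq v_i$, so no contradiction can be extracted from the $M$-part alone: the true obstruction lies in the affine data you discard (the translation vectors $M_2R\,\bt_{\Phi_1}+\bt_{\Phi_2}$ and the tangency geometry at the pure output). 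The paper's proof supplies precisely this missing local ingredient: the composite's output ellipsoid has semi-axes $(\lambda,\lambda,\lambda^2)$ and hence radius of curvature $1$ at its pure output in every normal section; it is contained in the output ellipsoid of $\Phi_2$ and tangent to the Bloch sphere at a pure output of $\Phi_2$, where a genuine 2PO non-degenerate channel has radius of curvature $(\cos v/\cos u)^2<1$ in the normal section through its two pure outputs, contradicting the containment. That single curvature comparison eliminates every decomposition with a 2PO factor $\Phi_2$ (and, together with your determinant computation or the paper's monotonicity remark, the case $\Phi_1$ 2PO, $\Phi_2$ 1PO as well); without it, or without adding the translation-vector constraints to your polynomial system, your argument does not close.
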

\begin{proof}
According to Proposition~\ref{prop:extremal} $\Phi_{1,2}$ can be
either unitaries, 1PO or 2PO non-degenerate extremal channels. A 1PO 
non-degenerate extremal channel in the parametrization (\ref{1POnd})
maps a pure input state with 
$\theta=\pi/2$, i.e.~the North Pole $\eta$ of the
Bloch sphere to itself, $\Phi_{\rm 1PO}^{\rm
  nd}(\eta)=\eta$. If $\Phi_1(\eta)$ 
is a mixed state, then $\Phi_2$ maps a mixed state to a pure output,
so, by Lemma \ref{lem:mixed}, it is a constant, hence degenerate
channel. This would contradict the non-degeneracy of
$\Phi_{1PO}^{\rm nd}$.
Therefore, we must have that $\Phi_1$ maps $\eta$ to some
pure state $\zeta$, $\Phi_1(\eta)=\zeta$, and then
$\Phi_2(\zeta)=\eta$. This shows that $\Phi_1$ and $\Phi_2$ have at
least one pure output. In order to exclude possible 2PO channels from 
the decomposition we shall use a local argument about the curvature of
the boundary of output sets at these pure output points.  

The output set of $\Phi_{\rm 1PO}^{\rm nd}$ as parametrized by
  (\ref{1POnd}) is an ellipsoid touching the
  Bloch sphere at the north pole. Any plane containing the north pole
  and the origin intersects this output ellipsoid in an ellipse with
  the major axis $a=\lambda$ and the minor axis $b=\lambda^2$. The
  radius of curvature of an ellipse at its vertices that are closest
  to the ellipse centre is $a^2/b$ and is therefore $R=1$ in our
  case. We also observe that for any extremal channel, i.e., an
  ellipsoid touching a sphere from inside, the radius of curvature at the
  touching point in
  any plane containing a pure output state is upper-bounded by the
  radius of curvature of the Bloch sphere (which is $1$). For a plane
  containing two pure output points and the origin, where the output
  set is an ellipse with a major axis $a=\cos{u}$ and a minor axis
  $b=\cos{u}\cos{v}$ (\ref{eq:extremal}), one can explicitly calculate
  that the radius of curvature at the touching point is
  $R=(\cos{v}/\cos{u})^2$. In particular, for unitary conjugations it
  is of 
  course $1$, whereas for a 2PO non-degenerate extremal channel it is
  always less than $1$. As the output set of the concatenation must be
  in the output set of $\Phi_2$, the curvatures of the ellipsoid at a
  PO can never decrease under concatenation, or, equivalently, the
  radius of curvature can never increase. 
Because the curvature of the final $\Phi_{\rm 1PO}^{\rm nd}$ must be
  $1$, we conclude that $\Phi_{1,2}$ can never be non-degenerate 2PO 
  extremal channels.        
\end{proof}

Combining Corollary~\ref{coroll:1POdeg}, Lemma~\ref{lem:1PO} and Proposition~\ref{prop:1PO}, we obtain an $\varepsilon$-small universal set for 1PO extremal channels.

\begin{corollary}\label{coroll:1PO}
For any $\varepsilon >0$, the following set is universal for 1PO extremal  channels:
\begin{equation}
{\cal X}_{\rm 1PO}(\varepsilon)=\{ \Phi_{\rm 1PO}(\lambda), \lambda \in (1-\varepsilon,1) \cup \{0\} \},
\label{eq:1POepsilon}
\end{equation}
where $\Phi_{\rm 1PO}(\lambda)$ is a channel with $\bt_\Phi=(0,0,1-\lambda^2)$ and $\bm{\lambda}_\Phi=(\lambda,\lambda,\lambda^2)$.
\end{corollary}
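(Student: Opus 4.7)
The plan is to combine the three preceding results (Corollary~\ref{coroll:1POdeg}, Lemma~\ref{lem:1PO}, and Proposition~\ref{prop:1PO}) by splitting the 1PO extremal channels into the degenerate and non-degenerate subclasses identified in Lemma~\ref{lem:class}. First, I would recall that $\Phi_{\rm 1PO}(0)$, defined by $\bt=(0,0,1)$ and $M=0$, is precisely the generalized depolarizing channel $\mathcal Q_{|\eta\rangle\langle\eta|}$ that sends every input to the pure north pole $\eta$, so that the $\lambda=0$ point in ${\cal X}_{\rm 1PO}(\varepsilon)$ really is a representative of class ${\cal X}_{\rm 1PO}^{\rm deg}$.

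For the degenerate part, I would invoke Corollary~\ref{coroll:1POdeg}: since $\mathcal U_2$ is already assumed to be available in the ambient universal set, any other degenerate 1PO extremal channel $\mathcal Q_{|\psi\rangle\langle\psi|}$ can be obtained by pre- or post-composing $\Phi_{\rm 1PO}(0)$ with a unitary conjugation that rotates $\eta$ to $|\psi\rangle$. For the non-degenerate part, Lemma~\ref{lem:1PO} directly provides that every channel $\Phi_{\rm 1PO}(\lambda)$ with $\lambda\in(0,1)$ can be written as a suitable concatenation of channels from ${\cal X}_{\rm 1PO}^{\rm nd}(\varepsilon)=\{\Phi_{\rm 1PO}(\lambda):\lambda\in(1-\varepsilon,1)\}$; again, an arbitrary non-degenerate 1PO extremal channel (whose PO is not necessarily at the north pole and whose axis is not necessarily $z$) is brought to the canonical form \eqref{1POnd} by unitary conjugation, using the reduction from Section~\ref{sec.ssv}.

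Putting the two cases together shows that ${\cal X}_{\rm 1PO}(\varepsilon)$, together with the unitaries already in the universal set, generates every 1PO extremal channel, i.e.\ the class ${\cal X}_{\rm 1PO}^{\rm deg}\cup{\cal X}_{\rm 1PO}^{\rm nd}$ of Lemma~\ref{lem:class}. Minimality (in the sense that one cannot shrink below ${\cal X}_{\rm 1PO}(\varepsilon)\cup\mathcal U_2$) follows from the two obstruction results: Proposition~\ref{prop:decomposition-1PO-deg} forces any universal set to contain at least one constant channel onto a pure state (hence the point $\lambda=0$), and Proposition~\ref{prop:1PO} combined with a determinant/length argument analogous to Lemma~\ref{lemtopedge} forces the presence of $\Phi_{\rm 1PO}(\lambda)$ with $\lambda$ arbitrarily close to $1$.

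I do not anticipate any genuinely hard step, since the technical content is already encapsulated in the cited results; the only mildly delicate point is bookkeeping the unitary conjugations needed to align the PO axis of an arbitrary extremal 1PO channel with the canonical $z$-axis used in \eqref{1POnd}, and to rotate the pure output of the constant channel $\Phi_{\rm 1PO}(0)$ to any desired pure state. Once that is spelled out, the corollary is an immediate concatenation of Corollary~\ref{coroll:1POdeg} and Lemma~\ref{lem:1PO}.
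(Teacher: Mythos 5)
Your proposal is correct and follows essentially the same route as the paper, which obtains the corollary precisely by combining Corollary~\ref{coroll:1POdeg} (the degenerate point $\lambda=0$, up to unitary conjugation) with Lemma~\ref{lem:1PO} (generation of all $\Phi_{\rm 1PO}(\lambda)$, $\lambda\in(0,1)$, from the $\varepsilon$-interval via the rule $\Phi_{\rm 1PO}(\lambda\mu)=\Phi_{\rm 1PO}(\lambda)\circ\Phi_{\rm 1PO}(\mu)$), with Proposition~\ref{prop:1PO} supplying the necessity side. Your extra bookkeeping of unitary conjugations and the minimality remarks are consistent with the paper's setup (which restricts to the canonical form of Eq.~\eqref{eq:extremal} and includes $\mathcal U_2$ in the ambient universal set) and introduce no gap.
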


\subsection{Extremal qubit channels with two pure outputs}
Let  $\Phi_{1,2}$ be two extremal 2PO channels  of form
\eqref{eq:extremal} with parameters $(\omega_i, \theta_i)$ such that $\omega_1 = \theta_2$. Then, $\Phi = \Phi_2 \circ \Phi_1$ is of the form \eqref{eq:extremal}  with parameters $(\omega_2, \theta_1)$, that is
$\Phi(\omega_2,\theta_1)=\Phi_2(\omega_2,\omega_1) \circ
\Phi_1(\omega_1,\theta_1)$. Such channels, with parameters  $(\omega,
\theta)$ will be denoted by $\Phi_{\rm 2PO}(\omega,\theta)$.

\subsubsection{Degenerate channels}
2PO Degenerate extremal channels map two orthogonal pure input states
($\theta=0$) to two pure output states with $0<\omega<\pi/2$ (type
\ref{class:2POdeg} in Lemma~\ref{lem:class}). The set 
of output states is a line segment touching the Bloch sphere in two
points with an angle $\omega$.  

\begin{lemma}\label{lem:2PO}
For any $\varepsilon >0$, the set of all 2PO degenerate extremal channels ${\cal X}_{\rm 2PO}^{\rm deg}$ can be obtained by concatenation of channels from the set
\begin{equation}
{\cal X}_{\rm 2PO}^{\rm deg}(\varepsilon)=\{\Phi_{\rm 2PO}(\omega,0), \omega \in (0,\varepsilon) \},
\label{eq:2POdegepsilon}
\end{equation}
and channels from the set ${\cal X}_{\rm 2PO}^{\rm nd}$. 
\end{lemma}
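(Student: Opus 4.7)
My plan is to exploit the composition rule for extremal 2PO channels stated just before the lemma, namely
\begin{equation}
\Phi_{\rm 2PO}(\omega_2,\theta_1) = \Phi_{\rm 2PO}(\omega_2,\omega_1)\circ \Phi_{\rm 2PO}(\omega_1,\theta_1),
\end{equation}
which holds whenever the middle angle $\omega_1$ matches. The idea is that a degenerate 2PO channel with target angle $\omega$ can be realized by first applying a degenerate 2PO channel with a very small output angle $\omega_1$, and then applying a single non-degenerate 2PO channel that shifts the latitude of the two pure outputs from $\omega_1$ to $\omega$.

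More concretely, fix an arbitrary $\Phi_{\rm 2PO}(\omega,0) \in {\cal X}_{\rm 2PO}^{\rm deg}$ with $\omega \in (0,\pi/2)$. If $\omega \in (0,\varepsilon)$, then $\Phi_{\rm 2PO}(\omega,0)$ already lies in ${\cal X}_{\rm 2PO}^{\rm deg}(\varepsilon)$ and there is nothing to prove. Otherwise, choose any $\omega_1 \in (0,\min(\varepsilon,\omega))$ and set $\Phi_1 = \Phi_{\rm 2PO}(\omega_1,0)$ and $\Phi_2 = \Phi_{\rm 2PO}(\omega,\omega_1)$. By construction, $\Phi_1 \in {\cal X}_{\rm 2PO}^{\rm deg}(\varepsilon)$, while $\Phi_2$ has parameters $0 < \omega_1 < \omega < \pi/2$, so it is a non-degenerate 2PO extremal channel (item \ref{class:2POnd} in Lemma~\ref{lem:class}), i.e.\ $\Phi_2 \in {\cal X}_{\rm 2PO}^{\rm nd}$. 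The composition rule then yields $\Phi_2 \circ \Phi_1 = \Phi_{\rm 2PO}(\omega,0)$, which is the desired decomposition.

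The only step that requires real care is checking that the composition rule recalled at the beginning of the subsection applies in our situation, i.e.\ that the output latitude of $\Phi_1$ is indeed the input latitude that $\Phi_2$ is designed to act on; but this is exactly what the matching $\omega_1 = \omega_1$ guarantees. Since the rule was already established (from the composition laws \eqref{eq:compose-M}--\eqref{eq:compose-t} applied to the extremal parametrization \eqref{eq:extremal}), no further calculation is needed. Thus two concatenations, one from ${\cal X}_{\rm 2PO}^{\rm deg}(\varepsilon)$ and one from ${\cal X}_{\rm 2PO}^{\rm nd}$, suffice to reach every channel in ${\cal X}_{\rm 2PO}^{\rm deg}$, which proves the lemma. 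The main (minor) obstacle is simply bookkeeping the angle ranges to ensure that $\Phi_2$ lands in the non-degenerate 2PO class rather than on its boundary.
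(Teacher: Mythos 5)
Your proof is correct and follows essentially the same route as the paper: the paper also decomposes $\Phi_{\rm 2PO}(\omega,0)=\Phi_{\rm 2PO}(\omega,x)\circ\Phi_{\rm 2PO}(x,0)$ with $x\in(0,\omega)$ chosen small, using the composition rule for matching latitudes and the fact that $\Phi_{\rm 2PO}(\omega,x)$ with $0<x<\omega<\pi/2$ lies in ${\cal X}_{\rm 2PO}^{\rm nd}$. Your version merely spells out the choice $x\in\bigl(0,\min(\varepsilon,\omega)\bigr)$ and the trivial case $\omega<\varepsilon$ explicitly.
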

\begin{proof}
The statement immediately follows from a general composition rule for two genuine 2PO extremal channels (those with non-equal 2 POs) saying that $\Phi_{\rm 2PO}(\omega,0)=\Phi_{\rm 2PO}(\omega,x) \circ \Phi_{\rm 2PO}(x,0)$, with any $x\in (0,\omega)$, and the fact that ${\cal X}_{\rm 2PO}^{\rm nd}$ contains all $\Phi_{\rm 2PO}(\omega,x)$ with $0<x<\omega<\pi/2$.
\end{proof}

\begin{lemma}\label{lem:2podeg}
If a qubit channel $\Phi$ maps two orthogonal pure states to two distinct non-orthogonal pure states, it must be a 2PO degenerate extremal channel, i.e., of the type \ref{class:2POdeg} in Lemma~\ref{lem:class}.
\end{lemma}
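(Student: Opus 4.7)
My plan is to exploit the rank-one preservation of pure outputs at the level of Kraus operators and then read off the geometry of the image directly. First I would fix orthonormal bases in which the two orthogonal pure inputs are $|0\rangle,|1\rangle$ and the first pure output is $|\phi\rangle=|0\rangle$; then the second pure output can be written as $|\phi'\rangle=\alpha|0\rangle+\beta|1\rangle$ with $\alpha=\langle\phi|\phi'\rangle\neq 0$ (by non-orthogonality of the outputs) and $\beta\neq 0$ (since $\phi\ne\phi'$).

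The crucial step is the standard ``pure output forces a rank-one factor'' observation: because $\Phi(|0\rangle\langle 0|)=|0\rangle\langle 0|$, each positive summand $A_i|0\rangle\langle 0|A_i^\dagger$ in any Kraus decomposition $\Phi=\sum_iA_i\cdot A_i^\dagger$ must be a multiple of $|0\rangle\langle 0|$, forcing $A_i|0\rangle=c_i|0\rangle$. The same logic applied at the other pole gives $A_i|1\rangle=d_i|\phi'\rangle$, so every Kraus operator is upper triangular in the chosen bases, with entries determined by $c_i,d_i,\alpha,\beta$. Trace preservation $\sum_iA_i^\dagger A_i=I_2$ then yields the normalizations $\sum_i|c_i|^2=\sum_i|d_i|^2=1$ together with the off-diagonal equation $\alpha\sum_i\bar c_id_i=0$. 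Because $\alpha\ne 0$, the coherence $\sum_i\bar c_id_i$ must vanish.

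The rest is geometric bookkeeping: I would compute $\Phi(|0\rangle\langle 1|)=\bigl(\sum_ic_i\bar d_i\bigr)|0\rangle\langle\phi'|=0$ (and similarly $\Phi(|1\rangle\langle 0|)=0$), and, orienting the input Bloch axis so that the two antipodal inputs sit at the poles, conclude $\Phi(\sigma_x)=\Phi(\sigma_y)=0$. Hence the matrix $M_\Phi$ in \eqref{T} has rank one and the image of the Bloch ball is a line segment whose endpoints are exactly the pure outputs $\xi,\zeta$. Bringing $(M_\Phi,\bt_\Phi)$ to the canonical SVD form \eqref{eq:T-diagonal} and permuting coordinates gives $\bm{\lambda}_\Phi=(\cos u,0,0)$ and $t_3=\sin u$ for the unique $u\in(0,\pi/2)$ with $\cos u=|\beta|$ (so that $\lambda_1^2+t_3^2=|\beta|^2+|\alpha|^2=1$), which is exactly the parametrization of Proposition~\ref{prop:extremal-channels} with $v=\pi/2$, placing $\Phi$ in class~\ref{class:2POdeg} of Lemma~\ref{lem:class}.

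The main obstacle I anticipate is the clean Kraus-level derivation, and in particular the appearance of $\alpha$ as a prefactor in the off-diagonal trace-preservation equation: it is precisely the non-orthogonality hypothesis $\alpha\ne 0$ that promotes trace preservation into the extra constraint $\sum_i\bar c_id_i=0$, which in turn is what collapses the image ellipsoid to a line segment. Once that is in hand, matching the resulting $(M_\Phi,\bt_\Phi)$ with the extremal parametrization is straightforward linear algebra.
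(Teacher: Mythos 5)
Your proof is correct, but it takes a genuinely different route from the paper's. The paper argues entirely in the Bloch picture: since $\Phi$ is affine, the diameter $\overline{\xi_a\xi_b}$ maps to the chord $\overline{\xi_a'\xi_b'}$ whose midpoint is the ellipsoid centre; Lemma~\ref{euclid} then forces the central section of the output ellipsoid in the plane of the chord to degenerate to that chord itself (giving $\lambda_3=0$, $t_1=t_2=0$, $\lambda_1^2+t_3^2=1$ in suitable coordinates), and finally the generalized Fujiwara--Algoet coefficient $b\ge 0$ of Theorem~\ref{thm:GFA}, Eq.~\eqref{coeffb}, yields $b=-\lambda_2^2$ and hence $\lambda_2=0$. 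You instead work at the Kraus level: the two pure outputs force $A_i|0\rangle=c_i|0\rangle$ and $A_i|1\rangle=d_i|\phi'\rangle$, and trace preservation together with the non-orthogonality $\alpha=\langle 0|\phi'\rangle\neq 0$ gives $\sum_i\bar c_i d_i=0$, so $\Phi(\sigma_x)=\Phi(\sigma_y)=0$ and the whole image collapses to the segment joining the two pure outputs in one step --- your argument needs neither Lemma~\ref{euclid} nor the GFA conditions, and it isolates precisely where non-orthogonality enters (it upgrades trace preservation to the vanishing of the coherences), whereas the paper's version stays within the geometric toolkit it has already built and reuses Theorem~\ref{thm:GFA}. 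The only point worth making explicit in your last step is that the translation $\bt_\Phi$ (the midpoint of the chord between the two output Bloch vectors) is automatically orthogonal to the segment direction, so after the SVD rotations one really lands on $\bm{\lambda}=(\cos u,0,0)$, $t_3=\sin u$ with $\cos u=|\beta|\in(0,1)$, i.e.\ class~\ref{class:2POdeg} of Lemma~\ref{lem:class}; also note that your use of independent input and output bases means the identification holds up to pre/post unitary conjugations, which is the same convention the paper's own coordinate-orienting proof (and its use of the lemma in Proposition~\ref{prop:2POdeg}) relies on.
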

\begin{proof}
Let us denote two orthogonal pure input states by $\xi_a$ and $\xi_b$, and 
their pure output states by $\xi_a'$ and $\xi_b'$, $\Phi(\xi_a)=\xi_a',\,
\Phi(\xi_b)=\xi_b'$. $\Phi$ is an affine map and so it maps a line
segment $\overline{\xi_a\xi_b}$ to a line segment
$\overline{\xi_a'\xi_b'}$. The midpoint  
of $\overline{\xi_a\xi_b}$, which is a centre of the Bloch sphere, 
is mapped to a midpoint of $\overline{\xi_a'\xi_b'}$, which is the 
centre of the output ellipsoid. Let us denote by ${\cal P}$ a plane 
containing the ellipsoid centre and the points $\xi_a'$ and $\xi_b'$.
 In the plane ${\cal P}$ the output ellipsoid is an ellipse touching a 
circle from inside in points $\xi_a'$ and $\xi_b'$. Due to 
Lemma~\ref{euclid} we know that the large ellipse axis is parallel 
to $\overline{\xi_a'\xi_b'}$ and, because it must also pass through the 
midpoint of $\overline{\xi_a'\xi_b'}$, we conclude that the large axis must 
be equal to $\overline{\xi_a'\xi_b'}$. If the ellipse is supposed to only 
touch the circle and not intersect it, its small axis must be zero.  
The plane ${\cal P}$ therefore intersects the ellipsoid in a line segment of 
nonzero length. Orienting the coordinate system so that the $x$ axis is parallel 
to $\overline{\xi_a'\xi_b'}$, while the $z$ axis is in the plane ${\cal P}$ 
and perpendicular to $\overline{\xi_a'\xi_b'}$, 
we have $\lambda_3=0$ as well as $t_2=t_1=0$. Because also 
$t_3^2+\lambda_1^2=1$, we see that the coefficient $b$, 
Eq.~(\ref{coeffb}) in the GFA Theorem~\ref{thm:GFA}, is equal to 
$b=1-\lambda_1^2-\lambda_2^2-t_3^2=-\lambda_2^2$, which is non-negative 
only if $\lambda_2=0$. The channel $\Phi$ is therefore of the form
\ref{class:2POdeg}  
in Lemma~\ref{lem:class}.
\end{proof}

Using that Lemma we can now show that 2PO degenerate extremal channels are 
also necessary.

\begin{proposition}\label{prop:2POdeg}
Let $\Phi=\Phi_2 \circ \Phi_1$ be an arbitrary decomposition of a
2PO degenerate extremal channel $\Phi$. Then exactly one of the
channels $\Phi_{1,2}$ must be a 2PO degenerate extremal channel. 
\end{proposition}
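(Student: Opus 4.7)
The plan is to trace the behaviour of each factor on the two orthogonal pure inputs that witness the 2PO degenerate extremal structure of $\Phi$. By Lemma~\ref{lem:class}, up to unitary conjugations I can assume $\Phi$ corresponds to $v=\pi/2$ and some $u\in(0,\pi/2)$, so there exist orthogonal pure inputs $\xi_a,\xi_b$ whose images $\xi_a'=\Phi(\xi_a)$ and $\xi_b'=\Phi(\xi_b)$ are the two distinct (non-antipodal) pure endpoints of the line-segment output. Set $\tau_a=\Phi_1(\xi_a)$ and $\tau_b=\Phi_1(\xi_b)$.

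First I would argue that $\tau_a$ and $\tau_b$ must be pure and distinct. If one of them, say $\tau_a$, were mixed (and hence full-rank for a qubit), then $\Phi_2$ would send a full-rank state to the pure state $\xi_a'$, so by Lemma~\ref{lem:mixed} the channel $\Phi_2$ would be constant, making $\Phi$ constant and contradicting the fact that $\Phi$ has two distinct pure outputs. Similarly, if $\tau_a=\tau_b$ then $\xi_a'=\Phi_2(\tau_a)=\Phi_2(\tau_b)=\xi_b'$, again a contradiction.

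Next I would split into two cases according to whether $\tau_a$ and $\tau_b$ are orthogonal (antipodal on the Bloch sphere) or not. In the orthogonal case, $\Phi_2$ sends two orthogonal pure states to two distinct non-orthogonal pure states, so by Lemma~\ref{lem:2podeg} the channel $\Phi_2$ is 2PO degenerate extremal. In the non-orthogonal case, $\Phi_1$ sends two orthogonal pure states to two distinct non-orthogonal pure states, so by the same lemma $\Phi_1$ is 2PO degenerate extremal. This proves that at least one of the two factors is 2PO degenerate extremal.

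To upgrade this to ``exactly one'', I would show that both cannot be simultaneously 2PO degenerate extremal. The key geometric fact, read off from the canonical form $r\mapsto(\cos u\cdot r_1,0,\sin u)$ of Lemma~\ref{lem:class}, is that the only pure preimages under a 2PO degenerate extremal channel are the two antipodal Bloch points $(\pm 1,0,0)$. Thus, if $\Phi_2$ were 2PO degenerate extremal, the two distinct pure preimages $\tau_a,\tau_b$ of $\xi_a',\xi_b'$ would have to be orthogonal, whereas a 2PO degenerate extremal $\Phi_1$ produces only non-orthogonal pure outputs $\tau_a,\tau_b$. Symmetrically, in the orthogonal case the outputs of $\Phi_1$ contain two antipodal pure states, which a 2PO degenerate extremal channel cannot produce. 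In both case-splits, the other factor is therefore excluded from the 2PO degenerate extremal class. The main obstacle is the verification in the last paragraph that pure preimages of a 2PO degenerate extremal channel are always antipodal; everything else is a straightforward application of Lemmas~\ref{lem:mixed} and~\ref{lem:2podeg}.
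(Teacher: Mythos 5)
Your proof is correct and follows essentially the same route as the paper's: rule out mixed intermediate states via Lemma \ref{lem:mixed}, split on whether $\Phi_1(\xi_a),\Phi_1(\xi_b)$ are orthogonal, and invoke Lemma \ref{lem:2podeg} to identify the 2PO degenerate extremal factor while excluding the other. Your explicit check that the pure preimages of a 2PO degenerate extremal channel's pure outputs are antipodal is the (tersely stated) fact the paper also relies on for the ``exactly one'' part, so nothing is missing.
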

\begin{proof}
A 2PO degenerate  extremal channel $\Phi$ maps two pure orthogonal states
$\xi_a$ and $\xi_b$ to  
two pure non-orthogonal states $\xi_a'$ and 
$\xi_b'$, $\Phi(\xi_a)=\xi_a',\, \Phi(\xi_b)=\xi_b'$.  
Due to Lemma~\ref{lem:mixed}, and because the states $\xi_a'$ and
$\xi_b'$  
are distinct, we know that neither of $\Phi_{1,2}$ can be a 1PO degenerate 
extremal channel, i.e., a channel that would map a mixed state to a 
pure state. Therefore, the image of pure states $\xi_a$ and $\xi_b$
under 
$\Phi_1$ must be two pure states, say $\zeta_a$ and $\zeta_b$,  
$\Phi_1(\xi_a)=\zeta_a$, $\Phi_1(\xi_b)=\zeta_b$. If $\zeta_a$ and
$\zeta_b$ 
are non-orthogonal, then $\Phi_1$ maps two orthogonal states $\xi_a$ and
$\xi_b$ to two non-orthogonal pure states and, according to
Lemma~\ref{lem:2podeg}, must be a 2PO degenerate extremal channel,
whereas $\Phi_2$ cannot be a 2PO degenerate extremal channel as it
maps two non-orthogonal states to two non-orthogonal states. If
on the other hand $\zeta_a$ and $\zeta_b$ are orthogonal, then
$\Phi_2$ must in turn map these two orthogonal  
pure states to two non-orthogonal pure states and must
therefore be a 2PO degenerate extremal channel, whereas $\Phi_1$ maps two
orthogonal pure states to two orthogonal pure states, and is therefore a 
non-extremal unital channel. 
\end{proof}

\subsubsection{Non-degenerate channels } 
According to Lemma~\ref{lem:class}, a channel $\Phi$ belonging to
class  \ref{class:2POnd} takes the form \eqref{eq:extremal} with
parameters $\theta$ and $\omega$ such that $0<\theta<\omega<\pi/2$
(shaded triangle in Fig.~\ref{fig:extremal}). 
>From Proposition
\ref{prop:extremal}, up to unitary conjugation, such a channel can only be
decomposed into 
channels of the same form. We obtain in
this way a universal set for 2PO non-degenerate channels. 

\begin{proposition}\label{prop:2PO-non-deg}
For any $\varepsilon>0$, the set $\{\Phi_{\rm 2PO}(\theta,\omega);
0<\omega-\theta<\varepsilon\}$ is a universal set for
2PO-non-degenerate channels. Moreover, given any non-trivial
decomposition of a 2PO-non-degenerate channel $\Phi = \Phi_2 \circ
\Phi_1$, both channels $\Phi_{1,2}$ must be extremal
2PO-non-degenerate. 
\end{proposition}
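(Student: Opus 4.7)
The plan is to treat the two claims in order. For universality, the key tool is the composition rule stated immediately before the proposition:
\[
\Phi_{\rm 2PO}(\omega_2,\omega_1) \circ \Phi_{\rm 2PO}(\omega_1,\theta_1) = \Phi_{\rm 2PO}(\omega_2,\theta_1).
\]
Given a target $\Phi_{\rm 2PO}(\omega,\theta)$ with $0<\theta<\omega<\pi/2$, I would pick an integer $n>(\omega-\theta)/\varepsilon$, set $a_i = \theta + i(\omega-\theta)/n$ for $i=0,\ldots,n$, and iterate to obtain
\[
\Phi_{\rm 2PO}(\omega,\theta) = \Phi_{\rm 2PO}(a_n,a_{n-1}) \circ \cdots \circ \Phi_{\rm 2PO}(a_1,a_0).
\]
Each factor has angular gap $(\omega-\theta)/n<\varepsilon$, and every $a_i$ lies strictly in $(0,\pi/2)$ since $\theta>0$ and $\omega<\pi/2$; hence each factor is a bona fide 2PO-non-degenerate extremal channel in the claimed universal set.

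For the structural claim, following the paper's usage around Proposition~\ref{propIndv}, I interpret ``non-trivial'' to mean that neither factor is a unitary conjugation (otherwise the conclusion would fail, e.g.\ with $\Phi_1$ unitary and $\Phi_2 = \Phi \circ \Phi_1^{-1}$ still 2PO-non-degenerate while $\Phi_1$ is not). Proposition~\ref{prop:extremal} forces both $\Phi_1$ and $\Phi_2$ to be non-degenerate extremal, and Lemma~\ref{lem:class} then restricts the remaining options to ${\cal X}_{\rm 1PO}^{\rm nd}$ or ${\cal X}_{\rm 2PO}^{\rm nd}$ for each factor. Excluding $\Phi_2 \in {\cal X}_{\rm 1PO}^{\rm nd}$ is immediate: the image $\Phi_2(\mathcal{D}_2)$ would then be an ellipsoid touching $\mathcal{P}_2$ at a single point, so $\Phi(\mathcal{D}_2) \subseteq \Phi_2(\mathcal{D}_2)$ would have at most one pure output, contradicting $|\mathrm{PO}(\Phi)|=2$. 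To exclude $\Phi_1 \in {\cal X}_{\rm 1PO}^{\rm nd}$, I let $\phi_1,\phi_2 \in \mathrm{PO}(\Phi)$ be the two distinct pure outputs. Since $\Phi_2$ is non-degenerate, its action on Bloch vectors is an injective affine map, hence an affine bijection $\mathcal{D}_2 \to \Phi_2(\mathcal{D}_2)$ that carries the boundary $\mathcal{P}_2$ onto the boundary of $\Phi_2(\mathcal{D}_2)$. Because $\phi_i \in \mathcal{P}_2 \cap \Phi_2(\mathcal{D}_2)$ must lie on that latter boundary (any interior point of $\Phi_2(\mathcal{D}_2)$ would possess a neighbourhood inside the Bloch ball, incompatible with $\phi_i \in \mathcal{P}_2$), the affine pre-images $\psi_i := \Phi_2^{-1}(\phi_i)$ lie in $\mathcal{P}_2$. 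For $\phi_i$ to belong to $\Phi(\mathcal{D}_2) = \Phi_2(\Phi_1(\mathcal{D}_2))$ one needs $\psi_i \in \Phi_1(\mathcal{D}_2) \cap \mathcal{P}_2 = \mathrm{PO}(\Phi_1)$, a set of cardinality one by hypothesis, forcing $\psi_1 = \psi_2$ and hence $\phi_1 = \phi_2$, a contradiction.

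The step I expect to be the main obstacle is this last exclusion of $\Phi_1 \in {\cal X}_{\rm 1PO}^{\rm nd}$: it requires carefully lifting the pure outputs of the composed channel back through the injective affine action of $\Phi_2$, together with the standard but non-trivial fact that an affine bijection between convex bodies sends boundary to boundary. Once this geometric point is handled cleanly, the universality claim reduces to iterated application of the composition rule and the remaining structural bookkeeping (reduction to non-degenerate extremal factors via Proposition~\ref{prop:extremal}, exclusion of $\Phi_2 \in {\cal X}_{\rm 1PO}^{\rm nd}$ via image containment) is routine.
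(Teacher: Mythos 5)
Your proof is correct and follows essentially the same route as the paper: the universality part is the paper's iterated use of the composition rule $\Phi_{\rm 2PO}(\omega_2,\omega_1)\circ\Phi_{\rm 2PO}(\omega_1,\theta_1)=\Phi_{\rm 2PO}(\omega_2,\theta_1)$ with an $\varepsilon$-fine subdivision of $[\theta,\omega]$, and the structural part rests, as in the paper, on Proposition \ref{prop:extremal} plus the fact that both factors must have exactly two pure outputs. Your pull-back argument (lifting the two pure outputs of $\Phi$ through the invertible affine action of the non-degenerate $\Phi_2$ to exclude 1PO factors) correctly supplies the justification that the paper states in one sentence without detail, and your reading of ``non-trivial'' as excluding unitary factors matches the intended meaning.
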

\begin{proof}
Any concatenation $\Phi=\Phi_n \circ\cdots\circ\Phi_2 \circ \Phi_1$ of
maps with parameters $(\theta_i,\omega_i)$ such that
$\omega_{i}=\theta_{i+1}$ is of the form \eqref{eq:extremal} with
parameters $(\theta_1,\omega_n)$. Concatenating $n$ channels with
parameters lying in the strip  $\{(\theta,\omega);
0<\omega-\theta<\varepsilon\}$ (blue area in Fig.~\ref{fig:extremal})
allows to reach any final angle
$\omega_n\in[\theta,\theta+n\varepsilon)$ from an initial angle
  $\theta$. Therefore any channel with parameters $(\theta,\omega)$
  can be decomposed into a sequence of channels with $n=\lfloor
  (\omega-\theta)/\varepsilon \rfloor+1$. 

The second statement follows from Proposition \ref{prop:extremal} and
the fact that both $\Phi_2$ and $\Phi_1$ should have exactly 2 pure
outputs. 
\end{proof}

Combining Lemma~\ref{lem:2PO} and Proposition~\ref{prop:2PO-non-deg}, we obtain an $\varepsilon$-small universal set for 2PO extremal channels.

\begin{corollary}\label{coroll:2PO}
For any $\varepsilon >0$, the following set is universal for extremal 2 PO channels:
\begin{equation}
{\cal X}_{\rm 2PO}(\varepsilon)=\{ \Phi_{\rm 2PO}(\omega,0), \omega \in (0,\varepsilon)\} \cup \{ \Phi_{\rm 2PO}(\omega,\theta), \omega-\theta \in (0,\varepsilon)\},
\label{eq:2POepsilon}
\end{equation}
where $\Phi_{\rm 2PO}(\omega, \theta)$ is an extremal channel mapping pure input states $(\pm \cos{\theta},0,\sin{\theta})$ to pure output states $(\pm \cos{\omega},0,\sin{\omega})$.
\end{corollary}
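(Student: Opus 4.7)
The proof plan is essentially an assembly of the two immediately preceding results, since the set ${\cal X}_{\rm 2PO}(\varepsilon)$ is literally the union of the universal sets appearing in Proposition~\ref{prop:2PO-non-deg} and in Lemma~\ref{lem:2PO}. The strategy is therefore in two stages: first generate the non-degenerate 2PO channels, then use them together with the first component of ${\cal X}_{\rm 2PO}(\varepsilon)$ to reach the degenerate ones.

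More precisely, I would first identify the second component $\{\Phi_{\rm 2PO}(\omega,\theta): \omega-\theta \in (0,\varepsilon)\}$ as the strip-shaped family of Proposition~\ref{prop:2PO-non-deg}. By that proposition, concatenation of channels from this strip along a chain of matched parameters $\omega_i=\theta_{i+1}$ produces any $\Phi_{\rm 2PO}(\omega,\theta)$ with $0<\theta<\omega<\pi/2$, i.e.~the entire class ${\cal X}_{\rm 2PO}^{\rm nd}$. In symbols, $\langle \{\Phi_{\rm 2PO}(\omega,\theta): \omega-\theta \in (0,\varepsilon)\}\rangle \supseteq {\cal X}_{\rm 2PO}^{\rm nd}$.

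Next, I would invoke Lemma~\ref{lem:2PO}, which tells us that every 2PO degenerate extremal channel $\Phi_{\rm 2PO}(\omega,0)$ with $\omega\in(0,\pi/2)$ can be written as $\Phi_{\rm 2PO}(\omega,x)\circ \Phi_{\rm 2PO}(x,0)$ for some $x\in(0,\min(\omega,\varepsilon))$. The right factor $\Phi_{\rm 2PO}(x,0)$ lies in the first component $\{\Phi_{\rm 2PO}(\omega,0): \omega\in(0,\varepsilon)\}$ of ${\cal X}_{\rm 2PO}(\varepsilon)$, and the left factor $\Phi_{\rm 2PO}(\omega,x)$ is a 2PO non-degenerate channel which has just been shown to be reachable from the second component. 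Hence every channel in ${\cal X}_{\rm 2PO}^{\rm deg}$ is reachable from ${\cal X}_{\rm 2PO}(\varepsilon)$ as well.

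Putting the two stages together yields ${\cal X}_{\rm 2PO}^{\rm nd}\cup {\cal X}_{\rm 2PO}^{\rm deg} = {\cal X}_{\rm 2PO} \subseteq \langle {\cal X}_{\rm 2PO}(\varepsilon)\rangle$, which is the claim. There is no genuine obstacle here: the only thing to check carefully is that the parameter $x$ used when splitting $\Phi_{\rm 2PO}(\omega,0)$ via Lemma~\ref{lem:2PO} can always be chosen small enough to lie in $(0,\varepsilon)$, so that the right factor indeed belongs to the first component of ${\cal X}_{\rm 2PO}(\varepsilon)$; this is immediate because Lemma~\ref{lem:2PO} allows any $x\in(0,\omega)$.
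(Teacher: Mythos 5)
Your proposal is correct and follows essentially the same route as the paper, which obtains Corollary~\ref{coroll:2PO} precisely by combining Proposition~\ref{prop:2PO-non-deg} (the $\varepsilon$-strip generates all of ${\cal X}_{\rm 2PO}^{\rm nd}$) with Lemma~\ref{lem:2PO} (the splitting $\Phi_{\rm 2PO}(\omega,0)=\Phi_{\rm 2PO}(\omega,x)\circ\Phi_{\rm 2PO}(x,0)$ with $x$ small). Your explicit check that $x$ can be taken in $(0,\min(\omega,\varepsilon))$ is a useful clarification but does not change the argument.
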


\subsection{A universal set for extremal qubit channels}

Finally, we state our main theorem, which is a compilation of Corollaries \ref{coroll:1PO} for 1PO extremal channels and \ref{coroll:2PO} for 2PO extremal channels. Note that, in virtue of Propositions \ref{prop:decomposition-1PO-deg}, \ref{prop:1PO}, \ref{prop:2POdeg} and \ref{prop:2PO-non-deg}, our results go beyond extremal channels, showing that any universal set of channels \emph{must} contain extremal channels belonging to the each class studied in this section. 

\begin{theorem}
For any $\varepsilon > 0$, the set ${\cal X}(\varepsilon)={\cal U}_2 \cup {\cal X}_{\rm 1PO}(\varepsilon) \cup {\cal X}_{\rm 2PO}(\varepsilon)$, where ${\cal X}_{\rm 1PO}(\varepsilon)$ is defined in \eqref{eq:1POepsilon} and ${\cal X}_{\rm 2PO}(\varepsilon)$ is defined in \eqref{eq:2POepsilon}, is a universal set for \emph{extremal qubit channels}. 

Moreover, any universal set of (general) qubit channels must contain
the following extremal channels: 
\begin{enumerate}
\item a 1PO degenerate (i.e. constant) channel $\mathcal Q_{| \psi \rangle \langle \psi |}$;
\item infinitely many 1PO non-degenerate extremal channels $\Phi_{\rm 1PO}(1-\varepsilon)$;
\item infinitely many 2PO degenerate extremal channels $\Phi_{\rm 2PO}(\varepsilon,0)$;
\item infinitely many 2PO non-degenerate extremal channels $\Phi_{\rm 2PO}(\omega, \theta)$, with $0<\omega-\theta<\varepsilon$.
\end{enumerate}
\end{theorem}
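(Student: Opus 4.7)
The plan is to read the theorem as the compilation it is: two separate claims (sufficiency of $\mathcal X(\varepsilon)$, and necessity of each class) that have been prepared class-by-class in the preceding subsections. I would first handle universality. By Lemma \ref{lem:class}, every extremal qubit channel lies in $\mathcal U_2\cup \mathcal X_{\rm 1PO}^{\rm deg}\cup \mathcal X_{\rm 1PO}^{\rm nd}\cup \mathcal X_{\rm 2PO}^{\rm deg}\cup \mathcal X_{\rm 2PO}^{\rm nd}$. The first family is $\mathcal U_2$, which is already in $\mathcal X(\varepsilon)$. The 1PO classes (degenerate and non-degenerate) are generated by $\mathcal X_{\rm 1PO}(\varepsilon)$ thanks to Corollary \ref{coroll:1PO} (the $\lambda = 0$ element plus unitaries generates every constant channel, while the multiplicative concatenation rule $\Phi_{\rm 1PO}(\lambda\mu)=\Phi_{\rm 1PO}(\lambda)\circ\Phi_{\rm 1PO}(\mu)$ extends the near-identity strip to all $\lambda\in(0,1)$). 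The 2PO classes are generated by $\mathcal X_{\rm 2PO}(\varepsilon)$ by Corollary \ref{coroll:2PO}, which combines the non-degenerate composition $(\theta,\omega_1)\cdot(\omega_1,\omega_2)=(\theta,\omega_2)$ with the degenerate-to-non-degenerate identity $\Phi_{\rm 2PO}(\omega,0)=\Phi_{\rm 2PO}(\omega,x)\circ\Phi_{\rm 2PO}(x,0)$. Taking the union settles sufficiency.

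For the necessity part I would argue class by class, using the four structural propositions already proved. Suppose $\mathcal F$ is universal for general qubit channels; then in particular each extremal channel has a concatenation expression from $\mathcal F$. (1) The constant channel $\mathcal Q_{|\psi\rangle\langle\psi|}$ cannot be obtained non-trivially without a constant factor, by Proposition \ref{prop:decomposition-1PO-deg}, so at least one such constant channel lies in $\mathcal F$. (2) For a 1PO non-degenerate channel $\Phi_{\rm 1PO}(\lambda)$, Proposition \ref{prop:1PO} forces every factor in every decomposition to be 1PO non-degenerate (up to unitaries). Since concatenation multiplies the parameters $\lambda$, a factor with parameter bounded below $1$ can never yield an overall parameter arbitrarily close to $1$; to realize $\Phi_{\rm 1PO}(\lambda)$ for every $\lambda<1$, the set $\mathcal F$ must contain $\Phi_{\rm 1PO}(\lambda_n)$ with $\lambda_n\to 1$, i.e.\ infinitely many.

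(3) For a 2PO degenerate channel $\Phi_{\rm 2PO}(\omega,0)$ with $\omega$ small, Proposition \ref{prop:2POdeg} says that in any binary decomposition exactly one factor is 2PO degenerate, and the non-degenerate factor preserves orthogonality; iterating, the smallest 2PO-degenerate factor appearing in any concatenation realizing $\Phi_{\rm 2PO}(\omega,0)$ must have parameter at most $\omega$. Hence $\mathcal F$ must contain $\Phi_{\rm 2PO}(\omega_n,0)$ with $\omega_n\to 0$. (4) For a 2PO non-degenerate channel, Proposition \ref{prop:2PO-non-deg} forces every factor to be 2PO non-degenerate, and the composition rule $(\theta,\omega_1)\cdot(\omega_1,\omega_2)=(\theta,\omega_2)$ means the ``widths'' $\omega-\theta$ add under concatenation, so they can only grow. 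To realize channels with arbitrarily small $\omega-\theta$, the universal set must itself contain such channels, again infinitely many.

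The easy part is paragraph one: once one trusts Lemma \ref{lem:class} and the two corollaries, sufficiency is a union. The main care is in the necessity direction, where the subtlety is to pass from ``$\Phi_2\circ\Phi_1$'' to arbitrary $n$-fold decompositions; this is handled by iterating each proposition and then invoking the appropriate monotonicity of the parameter under composition (multiplication for 1PO, addition of the angular gap for 2PO non-degenerate, lower bound on the angle for 2PO degenerate). No new calculation is required beyond these bookkeeping arguments.
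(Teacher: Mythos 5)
Your proposal is correct and follows essentially the same route as the paper: the theorem is proved there simply as a compilation of Corollaries \ref{coroll:1PO} and \ref{coroll:2PO} for universality, and of Propositions \ref{prop:decomposition-1PO-deg}, \ref{prop:1PO}, \ref{prop:2POdeg} and \ref{prop:2PO-non-deg} for necessity. Your extra bookkeeping (iterating the binary decomposition results and using monotonicity of the parameters --- multiplicativity of $\lambda$, additivity of $\omega-\theta$, lower bound on the degenerate angle) makes explicit the ``infinitely many'' claims that the paper leaves implicit, and is sound.
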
  

\begin{figure}
\includegraphics[width=0.5\textwidth]{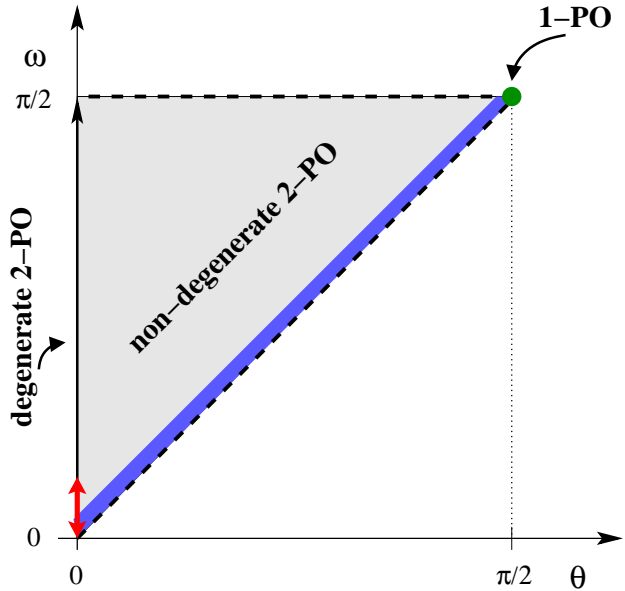}
\caption{Illustration of extremal qubit channels in the
  $(\theta,\omega)$ plane of latitude angles of the input/output pure states
  on the Bloch sphere. The
  inside of the shaded triangle  are non-degenerate extremal 2PO
  channels. The universal set for 2PO extremal channels is a union of
  an $\varepsilon$-strip above the diagonal (blue/dark color) and an
  $\varepsilon$-interval on the $\omega$-axis (red/bright color). 
  The 1PO extremal channels are in this parametrization represented by a single
  point at $\theta=\omega=\pi/2$. For details see classification in
  Lemma~\ref{lem:class}.}  
\label{fig:extremal}
\end{figure}

\section{Concluding remarks}
We have investigated the set of quantum channels acting on a single qubit,
i.e.~linear, trace preserving, and completely positive maps of the density
matrix.  We found a compact generalization of the
Fujiwara-Algoet conditions, i.e.~conditions for the complete
positivity of the map, to arbitrary (not necessarily unital) qubit
channels.    
  We used these conditions together with purely
geometrical considerations to examine the pure output
of the quantum channel.  We established that no qubit quantum channel
exists whose pure output is a circle of non-zero radius on the Bloch sphere,
generalizing the ``no-pancake theorem''.  We derived a universal set
of quantum 
channels for extremal qubit channels, i.e.~a set of quantum channels
from which all extremal qubit channels can be constructed by
concatenation. All other qubit channels can be constructed from these
extremal channels by simple classical
random sampling.  For
unital qubit channels we found a universal set of quantum channels
regardless of whether the qubit channel to be decomposed is extremal
or not.  
We showed that our universal sets are essentially minimal,
and must be contained in any universal set for
arbitrary (not necessarily extremal) qubit channels. 

\begin{acknowledgments}
C.~P.~ acknowledges financial support from the ANR project HAM-MARK,
N${}^\circ$ ANR-09-BLAN-0098-01.  
I.N.~'s research has been supported by the ANR projects {OSQPI} {2011 BS01
  008 01} and {RMTQIT}  {ANR-12-IS01-0001-01}, and by the PEPS-ICQ CNRS project \mbox{Cogit}.
\end{acknowledgments}

%

\end{document}